\documentclass[journal,twoside,web]{ieeecolor}
\usepackage{generic}
\usepackage{amsmath,amssymb,amsfonts}

\usepackage{cite}
\usepackage{algorithmic}
\usepackage{graphicx}
\usepackage{textcomp}

\pdfobjcompresslevel=0
\pdfminorversion=4

% See the \addtolength command later in the file to balance the column lengths
% on the last page of the document

% The following packages can be found on http:\\www.ctan.org
\usepackage{graphics} % for pdf, bitmapped graphics files
\usepackage{epsfig} % for postscript graphics files
\usepackage{color,soul}
\usepackage{balance}

\usepackage{acronym}

\newtheorem{theorem}{Theorem}[section]
\newtheorem{lemma}[theorem]{Lemma}
\newtheorem{proposition}[theorem]{Proposition}
\newtheorem{corollary}[theorem]{Corollary}

\newtheorem{definition}{Definition}%[section]

\newtheorem{remark}[theorem]{Remark}

\acrodef{MMSE}{minimum mean square error}

\def\BibTeX{{\rm B\kern-.05em{\sc i\kern-.025em b}\kern-.08em
    T\kern-.1667em\lower.7ex\hbox{E}\kern-.125emX}}

% REMOVE THIS FOR IEEETRAN DRAFT VERSION
\markboth{\journalname, VOL. XX, NO. XX, XXXX 2017}
{Author \MakeLowercase{\textit{et al.}}: Preparation of Papers for IEEE TRANSACTIONS and JOURNALS (February 2017)}

\begin{document}

\title{
Innovation-Based Remote State Estimation Secrecy with no Acknowledgments
}

\author{Justin M. Kennedy~\IEEEmembership{Member, IEEE},
Jason J. Ford,
\\
Daniel E. Quevedo~\IEEEmembership{Fellow, IEEE}
and
Falko Dressler~\IEEEmembership{Fellow, IEEE}
\thanks{This work has been supported in part by the project NICCI2 funded by the German Research Foundation (DFG) under grant numbers DR 639/23-2 and QU 437/1-2.}%
\thanks{JK, JF, and DQ acknowledge continued support from the Queensland University of Technology (QUT) through the Centre for Robotics.}%
\thanks{J. M. Kennedy, J. J. Ford, and D. E. Quevedo are with the School of Electrical Engineering and Robotics, Queensland University of Technology, 2 George St, Brisbane QLD, 4000 Australia. F. Dressler is with the School of Electrical Engineering and Computer Science, TU Berlin, Germany.
        {\tt\small \{j12.kennedy, j2.ford, daniel.quevedo\}@qut.edu.au, dressler@ccs-labs.org}}%
}

\maketitle

%%%%%%%%%%%%%%%%%%%%%%%%%%%%%%%%%%%%%%%%%%%%%%%%%%%%%%%%%%%%%%%%%%%%%%%%%%%%%%%%
\begin{abstract}
Secrecy encoding for remote state estimation in the presence of adversarial eavesdroppers is a well studied problem.
Typical existing secrecy encoding schemes rely on the transmitter's knowledge of the remote estimator's current performance.
This performance measure is often shared via packet receipt acknowledgments.
However, in practical situations the acknowledgment channel may be susceptible to interference from an active adversary, resulting in the secrecy encoding scheme failing.
Aiming to achieve a reliable state estimate for a legitimate estimator while ensuring secrecy, we propose a secrecy encoding scheme without the need for packet receipt acknowledgments.
Our encoding scheme uses a pre-arranged scheduling sequence established at the transmitter and legitimate receiver.
We transmit a packet containing either the state measurement or encoded information for the legitimate user.
The encoding makes the packet appear to be the state but is designed to damage an eavesdropper's estimate.
The pre-arranged scheduling sequence and encoding is chosen psuedo-random.
We analyze the performance of our encoding scheme against a class of eavesdropper, and show conditions to force the eavesdropper to have an unbounded estimation performance.
Further, we provide a numerical illustration and apply our encoding scheme to an application in power systems.
\end{abstract}

\begin{IEEEkeywords}
Remote State Estimation, Eavesdropping, Privacy, State-Secrecy Codes
\end{IEEEkeywords}

%%%%%%%%%%%%%%%%%%%%%%%%%%%%%%%%%%%%%%%%%%%%%%%%%%%%%%%%%%%%%%%%%%%%%%%%%%%%%%%%
\section{INTRODUCTION}

\IEEEPARstart{T}{he} emerging network of cyber-physical systems has been acknowledged as a vulnerability \cite{Sandberg2015IEEEControlSystemsCyberphysicalSecurityNetworked} with recent incidents drawing attention to the need to improve the security of these systems \cite{Tidy2022PredatorySparrowWho}.
Ensuring security of cyber-physical systems can be enhanced through control-theoretic approaches including through the utilization of the dynamics in the design \cite{Ferrari2021SafetySecurityPrivacy}.
Three key security problems exist: ensuring confidentiality of state information and control actions, integrity of transmissions, and availability of data over a network \cite{Ishii2022SecurityResilienceControl}.
In this article we focus on the problem of confidential state estimation of a remote process over an unreliable wireless network in the presence of an eavesdropper.

While the interest in control systems design is the closed-loop system performance, it is first necessary to ensure the quality of the state estimate used in the controller.
Sharing state information at every time instant provides valuable information to a legitimate user.
However, as transmissions can be intercepted by an adversarial eavesdropper, which could use transmitted state information to design future attacks on the system \cite{Sandberg2022AnnualReviewofControlRoboticsandAutonomousSystemsSecureNetworkedControl}, it becomes necessary to obfuscate the shared state estimate from an eavesdropper.
This motivates private remote state estimation techniques to ensure state secrecy.
Recent works have shown that through careful scheduling of transmissions \cite{Tsiamis2017IFACPapersOnLineStateEstimationSecrecy,Leong2019IEEETransactionsonAutomaticControlTransmissionSchedulingRemote,Liu2022AutomaticaRolloutapproachsensor} or by encrypting the packets \cite{Lucke2022IEEETransactionsonAutomaticControlCodingsecrecyremote,Tsiamis2018StateSecrecyCodes,Tsiamis2020IEEETransactionsonAutomaticControlStateSecrecyCodes,Huang2021AutomaticaEncryptionschedulingremote}, significant reduction in adversary performance can be achieved at the cost of modest degradation in legitimate user performance.
We shall explore this trade-off in our design.

Many state secrecy techniques require knowledge of the legitimate estimator's current performance, often shared through acknowledgment of successful packet receipt.
Scheduling the next transmission from the legitimate user's last received packet can be used to create the illusion of a random transmission policy to an eavesdropper \cite{Leong2019IEEETransactionsonAutomaticControlTransmissionSchedulingRemote}.
By relying heavily on acknowledgments, \cite{Tsiamis2020IEEETransactionsonAutomaticControlStateSecrecyCodes} showed that an encoded transmission of relative measurement, the innovation between the current state and the last acknowledged packet, diverges an eavesdropper's estimate.
In the case of a \emph{critical event} where the eavesdropper misses a packet that the legitimate receiver obtains, the eavesdropper is unable to recover the state estimate, and its estimation error will constantly grow\footnote{The eavesdropper's estimation error will grow to infinity in the case of unstable dynamics \cite{Tsiamis2020IEEETransactionsonAutomaticControlStateSecrecyCodes} or to the open loop estimation error in the case of stable dynamics \cite{Tsiamis2018StateSecrecyCodes}.}.
Effectively, the use of innovations with acknowledgments, can provide so-called infinite secrecy of the state information.

However, in many practical situations an acknowledgment channel may be unavailable due to hardware limitations, such as for small power limited sensors \cite{Li2008IEEESensorsJournalHybridMicropowerSource,Hoang2018IEEETransactionsonCommunicationsCellFreeMassive}, or an adversary jamming the network \cite{Ding2021IEEETransactionsonAutomaticControlRemoteStateEstimation}.
Under the encoding scheme of \cite{Tsiamis2020IEEETransactionsonAutomaticControlStateSecrecyCodes}, for an eavesdropper to maintain knowledge of the state, the adversary needs to prevent the critical event from occurring.
An active eavesdropper that combines both eavesdropping and acknowledgment blocking tasks simultaneously, such as considered in \cite{Ding2021IEEETransactionsonAutomaticControlRemoteStateEstimation}, could block all acknowledgments or be stealthy and only block acknowledgments when the critical event occurs thus hiding in the stochastic nature of the network.
This style of selective acknowledgment jamming attack has been demonstrated in practice on a carrier sense multiple access/collision avoidance (CSMA/CA) protocol WiFi network \cite{Klingler2019PosterAbstractJamming}.
While it may be possible to detect a stealthy eavesdropper using statistical methods \cite{Kennedy2022BayesianQuickestChange}, the innovation state secrecy encoding scheme of \cite{Tsiamis2020IEEETransactionsonAutomaticControlStateSecrecyCodes} is no longer secret to this powerful eavesdropper.

As further background to our current work, we note that to improve an eavesdropper's performance, an adversary could exploit the vulnerability in packet acknowledgments, including through fake acknowledgment transmission \cite{Liu2022AutomaticaRolloutapproachsensor} and event-based acknowledgment attacks \cite{Cheng2020IEEETransactionsonAutomaticControlEventBasedStealthy}.
Noting that often significant energy is required to block a communication channel, \cite{Zhang2018IEEETransactionsonControlofNetworkSystemsDoSAttackEnergy} proposed a strategy to balance performance degradation of the legitimate estimator with limited energy usage of the adversary.
An active attack to damage the legitimate user's estimate is to transmit packets that appear, in a statistical sense, to be the state measurement \cite{Naha2021QuickestDetectionDeception,Naha2022QuickestDetectionDeceptionParismonious} alongside malicious control actions \cite{Griffioen2021IEEETransactionsonAutomaticControlMovingTargetDefense}.
This has motivated watermarking schemes \cite{Naha2022SequentialDetectionReplay} and online statistical analysis \cite{Naha2022IEEETransactionsonAutomaticControlSequentialdetectionReplay} to ensure integrity of packets.
In particular, to combat the denial of service attacks in large scale power networks, \cite{Yang2022AutomaticaAdaptivedistributedKalman} proposed a distributed Kalman filter for state estimation, while \cite{Gusrialdi2022ResilientHierarchicalNetworked} used the structured nature of the system to design a cooperative control approach.

\begin{figure}
\centering
\input{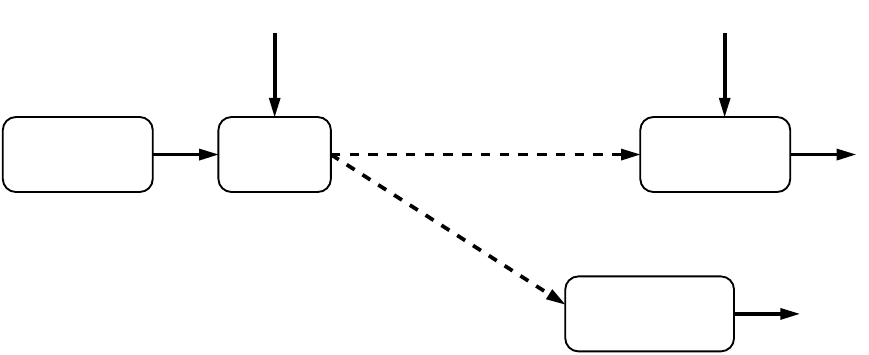_t}
\caption{Architecture of channel environment. A remote process sends state information over an unreliable network that can be received by the legitimate estimator and eavesdropper. The packet $z_k$ is encoded with scheduling sequence $\nu_k$ which is known exactly to the legitimate estimator but not the eavesdropper. The encoding does not rely on packet receipt acknowledgments.}
\label{fig:architecture}
\end{figure}

In the present work, we are motivated to consider the problem of private remote state estimation without requiring receipt acknowledgments.
The network architecture is visualized in Figure~\ref{fig:architecture}.
To damage an eavesdropper we propose an encoding scheme, where we randomly transmit either the state measurement or a random value that has the same statistical properties as the state.
Inside the packet, we encode state information in the form of a single step innovation to improve performance of the legitimate user.
The random encoding sequence is pre-arranged between the transmitter and legitimate user.
Our proposal ensures the legitimate estimator has bounded performance and the system state remains secret to an eavesdropper.
By increasing the proportion of encoded packets to unencoded packets, the secrecy against an eavesdropper is increased at the cost of legitimate estimator performance.
We present our proposed encoding scheme in the sense of this trade-off.
Our work is inspired by the no acknowledgment secrecy scheme of \cite{Tsiamis2017IFACPapersOnLineStateEstimationSecrecy}, the use of some encrypted packets in \cite{Huang2021AutomaticaEncryptionschedulingremote}, the innovation only encoding of \cite{Tsiamis2020IEEETransactionsonAutomaticControlStateSecrecyCodes}, and the design of packets that force estimator divergence \cite{Naha2021QuickestDetectionDeception,Naha2022QuickestDetectionDeceptionParismonious}.

\emph{Summary of contributions:} We consider the problem of remote state estimation in the presence of eavesdroppers, and derive an encoding scheme to ensure secrecy of the state.
\begin{enumerate}
    \item In contrast to many recent secrecy encoding schemes, such as \cite{Tsiamis2020IEEETransactionsonAutomaticControlStateSecrecyCodes} and \cite{Huang2021AutomaticaEncryptionschedulingremote}, we consider the network environment of no packet receipt acknowledgments.
    \item We improve on \cite{Tsiamis2017IFACPapersOnLineStateEstimationSecrecy}, by transmitting encoded state information that also damages the eavesdropper.
    \item We derive expressions for the expected estimation error covariances as a function of the dynamics, channel quality, and encoding scheme.
    \item We propose an offline designed scheduling sequence to achieve a suitable measure of state secrecy.
\end{enumerate}

The remainder of the paper is structured as follows:
we present the remote estimation scenario and pose our problem in Section~\ref{sec:problem}.
In Section~\ref{sec:encoding} we propose our encoding scheme and in Section~\ref{sec:est} give the performance for the legitimate estimator. 
In Section~\ref{sec:eavesdropper} we analyze the impact of our encoding scheme on a class of eavesdropper, and in Section~\ref{sec:design} provide guidance on scheduling design and numerical results.
We illustrate application of our technique to a remote state estimation problem on a microgrid power system in Section~\ref{sec:app}.
Finally, we provide concluding remarks in Section~\ref{sec:conc}.

%%%%%%%%%%%%%%%%%%%%%%%%%%%%%%%%%%%%%%%%%%%%%%%%%%%%%%%%%%%%%%%%%%%%%%%%%%%%%%%%
\section{REMOTE STATE ESTIMATION WITH AN EAVESDROPPER}
\label{sec:problem}
In this section we outline the dynamics and network model that we consider, and define the estimation of the legitimate estimator and eavesdropper.

\subsection{System Dynamics}
Consider a discrete-time LTI process with state $x_k \in \mathbb{R}^n$
\begin{equation}
    x_{k+1} = A x_k + w_k
    \label{eq:dynamics}
\end{equation}
where
$w_k$ is a zero mean Gaussian distributed process with covariance $Q$,
and $A$ is marginally stable or unstable with at least one eigenvalue on or outside of the unit circle, respectively.
Remote state estimation of unstable processes in the presence of eavesdroppers is an active problem, see for example \cite{Tsiamis2020IEEETransactionsonAutomaticControlStateSecrecyCodes,Naha2022IEEETransactionsonAutomaticControlSequentialdetectionReplay,Park2019IEEETransactionsonAutomaticControlStealthyAdversariesUncertain,Abdi2019IEEEInternetofThingsJournalPreservingPhysicalSafety}.
Additionally, many physical processes such as vehicle position or power systems \cite{Bordons2020ModelPredictiveControl} can be described with integrator models and are then marginally stable processes.
For some cyber-physical processes the control unit and actuators may be separate from the sensors \cite{Wu2022IEEETransactionsonControlofNetworkSystemsSecureControlCyber,Abdi2019IEEEInternetofThingsJournalPreservingPhysicalSafety}, which under failure could result in open-loop operation, motivating estimation of marginally stable and unstable process.

We assume that the pair $(A,\sqrt{Q})$ is controllable,
the initial state of the process $x_0$ is a Gaussian random variable with zero mean and covariance $\Sigma_0$, and that the two covariances $Q$ and $\Sigma_0$ are positive definite.
Additionally, we consider that $w_k$ and $x_0$ are uncorrelated, and $w_k$ and $w_\ell$ for $k \neq \ell$ are uncorrelated.
Finally, we assume that properties of the process $A$, $Q$, and $\Sigma_0$ are public but the realization of the state $x_0$ and noise $w_k$ are not known.

\begin{remark}
    The state $x_k$ in \eqref{eq:dynamics} could represent the state estimate from a Kalman filter using noisy measurements operating at the sensor.
    The process noise $w_k$ would then represent the Kalman innovation.
    See for example \cite{Tsiamis2020IEEETransactionsonAutomaticControlStateSecrecyCodes}.
\end{remark}

\subsection{Channel Model}
Following the standard packet based transmission utilized in network control problems, we consider that the sensor transmits a packet of state information, $z_k \in \mathbb{R}^p$, over an unreliable channel to the legitimate estimator.
The packets transmitted over the network can also be received by an eavesdropper.
To ensure privacy and secrecy of the state information, the packet is encoded.
We propose our encoding scheme in Section~\ref{subsec:legitencoding}.

We consider a particularly challenging network situation where the packet receipts by the legitimate estimator are not acknowledged to the transmitting sensor.
The lack of acknowledgment channel could be due to cost and energy usage \cite{Li2008IEEESensorsJournalHybridMicropowerSource,Hoang2018IEEETransactionsonCommunicationsCellFreeMassive}, or due to a powerful eavesdropper interfering \cite{Ding2021IEEETransactionsonAutomaticControlRemoteStateEstimation} which has been demonstrated in practice \cite{Klingler2019PosterAbstractJamming}.
An encoding scheme that relies exclusively on acknowledgments, such as \cite{Tsiamis2020IEEETransactionsonAutomaticControlStateSecrecyCodes}, may be rendered ineffective by acknowledgment blocking.
While it is possible to detect such a powerful eavesdropper \cite{Kennedy2022BayesianQuickestChange}, an alternative technique that does not rely on acknowledgments to ensure privacy should be employed.

As there are no acknowledgments, the sensor does not have knowledge of the estimation performance of the legitimate estimator.
Active privacy techniques which rely on knowledge of the remote estimator, such as scheduling \cite{Leong2019IEEETransactionsonAutomaticControlTransmissionSchedulingRemote} or encoding \cite{Tsiamis2020IEEETransactionsonAutomaticControlStateSecrecyCodes}, are unsuitable here.
Our proposed encoding scheme utilizes a pre-arranged scheduling sequence $\nu_k$ and encoding noise $\chi_k$, known to the transmitting sensor and the legitimate estimator but is unknown to the eavesdropper.
The encoding information is shared to the legitimate transmitter and eavesdropper at system initialization, isolated from an adversary. 
Information security schemes commonly use pre-arranged security information in transmission encoding \cite{Yang2020Automaticaencodingmechanismsecrecy}, cryptographic encryption \cite{Murguia2020IEEETransactionsonAutomaticControlSecurePrivateImplementation,Gao2022IEEETransactionsonAutomaticControlFaultTolerantConsensus}, and watermarking \cite{Zhou2022IEEETransactionsonAutomaticControlWatermarkingBasedProtection,Naha2021QuickestDetectionDeception}.
The challenge in our work, is the design of encoding mechanism of the state information and the design of the scheduling sequence.
Our network architecture is visualized in Figure~\ref{fig:architecture}.

% \begin{figure}
% \centering
% \input{architecture_v2.pdf_t}
% \caption{Architecture of channel environment. A remote process sends a state update over an unreliable network that can be received by the legitimate estimator and eavesdropper. The packet is encoded with scheduling sequence $\nu_k$ which is known exactly to the legitimate estimator but not the eavesdropper.}
% \label{fig:architecture}
% \end{figure}

Let us define $\gamma_k \in \{0,1\}$ as an indicator variable denoting successful packet reception at the legitimate estimator where

\begin{equation}
    \gamma_k = \begin{cases} 1, \mbox{ if the packet is received,} \\ 0, \mbox{ if a packet dropout occurs,} \end{cases}
    \label{eq:networkmodeldef}
\end{equation}
and similarly $\gamma_k^e \in \{0,1\}$ for outcomes at the eavesdropper.
We assume that the channel outcomes for the estimator and eavesdropper are independent and identically distributed (i.i.d.), and that the channel outcomes are independent to the initial state of the process and the process noise.
We define the channel qualities as a Bernoulli random variables where for the legitimate estimator $\mathbb{P}(\gamma_k = 1) = \mu$ and for the eavesdropper $\mathbb{P}(\gamma_k^e = 1) = \mu_e$, where $0 \leq \mu, \mu_e\leq 1$.
This model follows from standard wireless communication channels with block-fading over the channel links
\cite{Wong2011FundamentalsWirelessCommunication}.

\subsection{Minimum Mean Square Error Estimation}
The estimates of the legitimate estimator and the eavesdropper depend on information available at each time in the received packets and knowledge of the scheduling sequence $\nu_k$.
Let us define the measurements as
\begin{align*}
        y_k &= \gamma_k z_k, \mbox{ at the legitimate estimator, and}\\
        y_k^e &= \gamma_k^e z_k, \mbox{ at the eavesdropper.}
\end{align*}
We define information available to the legitimate estimator at time $k$ as 
$\mathcal{I}_k = \{\gamma_0, y_0, \nu_0, \gamma_1, y_1, \nu_1, \chi_1, \dots, \gamma_k, y_k, \nu_k, \chi_k \}$
and $\mathcal{I}_k^e = \{\gamma_0^e, y_0^e, \dots, \gamma_k^e, y_k^e\}$ for the eavesdropper. %, where $\mathcal{I}_0 = \mathcal{I}_0^e = \varnothing$.
The legitimate estimator has perfect knowledge of the scheduling sequence $\nu_k$ and encoding noise $\chi_k$, while the eavesdropper has no knowledge.
The legitimate estimator's \ac{MMSE} estimate and associated covariance are defined as
\begin{equation}
    \hat{x}_{k|k} = E[x_k|\mathcal{I}_k], ~ P_{k|k} = E[(x_k - \hat{x}_{k|k})(x_k - \hat{x}_{k|k})^\mathsf{T}|\mathcal{I}_k] 
    \label{eq:legitMMSE}
\end{equation}
and for the eavesdropper
\begin{equation}
    \hat{x}_{k|k}^e = E[x_k|\mathcal{I}_k^e], ~ P_{k|k}^e = E[(x_k - \hat{x}_{k|k}^e)(x_k - \hat{x}_{k|k}^e)^\mathsf{T}|\mathcal{I}_k^e] .
    \label{eq:eavesdropperMMSE}
\end{equation}

%%%%%%%%%%%%%%%%%%%%%%%%%%%%%%%%%%%%%%%%%%%%%%%%%%%%%%%%%%%%%%%%%%%%%%%%%%%%%%%%
\section{RANDOMIZED INNOVATION BASED ENCODING}
\label{sec:encoding}

In this section we pose the secrecy requirements, propose our encoding scheme, and decoder for the legitimate estimator.
In the following sections, we show the expected legitimate estimator performance, and provide guidance on encoding design choice for state secrecy against a class of eavesdropper.

\subsection{State Secrecy}
\label{subsec:statesecrecy}
Our objective is to design an encoding scheme that produces a reliable state estimate at the legitimate estimator using no information of the current performance, while simultaneously ensuring poor estimation performance of an eavesdropper in the network.
We introduce two notions of secrecy using the expectation of the \ac{MMSE} of the legitimate estimator and the eavesdropper.

As our encoding scheme is designed with no information of the legitimate estimator's current estimate, we do not aim for optimal estimation at every packet receipt.
Instead, we aim to ensure that the legitimate estimator's expected performance is upper bounded.
To ensure secrecy of the state estimate we seek to design the encoding scheme such that an eavesdropper's expected performance is larger than the legitimate estimator's performance.

\begin{definition}[Relative Secrecy]
\label{definition:relativesecrecy}
An encoding scheme achieves relative secrecy if and only if both the following conditions hold.
\begin{enumerate}
    \item[(i)] There exists an $\Omega > 0$ such that the trace of the legitimate estimator's \ac{MMSE} performance is upper bounded $\textrm{trace~} E[P_{k|k}] < \Omega$ for all time $k>0$.
    \item[(ii)] The trace of the \ac{MMSE} of the eavesdropper is strictly larger than that of the legitimate estimator $\textrm{trace~} E[P_{k|k}] < \textrm{trace~} E[P_{k|k}^e]$ for all time $k>0$.
\end{enumerate}

\end{definition}

We recall the definition of perfect secrecy from \cite{Tsiamis2017IFACPapersOnLineStateEstimationSecrecy} to ensure that the eavesdropper's expected estimation error diverges to infinity, while the legitimate estimator's performance remains upper bounded.

\begin{definition}[Perfect Secrecy]
\label{definition:perfectsecrecy}
An encoding scheme achieves perfect secrecy if and only if both of the following conditions hold.
\begin{enumerate}
    \item[(i)] There exists an $\Omega > 0$ such that the trace of the legitimate estimator's \ac{MMSE} performance is upper bounded $\textrm{trace~} E[P_{k|k}] < \Omega$ for all time $k>0$.
    \item[(ii)] The eavesdropper's expected \ac{MMSE} is unbounded, or equivalently the trace diverges to infinity $\textrm{trace}~ E[P_{k|k}^e] \rightarrow \infty $ as $k \rightarrow \infty$.
\end{enumerate}
\end{definition}

Definition~\ref{definition:perfectsecrecy} is stronger than Definition~\ref{definition:relativesecrecy}, as the diverging condition $\textrm{trace}~ E[P_{k|k}^e] \rightarrow \infty$ is more strict than the bounded condition $\textrm{trace~} E[P_{k|k}] < \textrm{trace~} E[P_{k|k}^e]$.

For a remote state estimator of an unstable system always transmitting the state estimate over an unreliable wireless network, i.e. $z_k = x_k$ for all $k>0$, \cite{XuEstimationuncontrolledcontrolled} showed that to ensure a bounded estimation error covariance, the network channel needs to satisfy
\begin{equation}
    1-\mu < \frac{1}{\rho(A)^2}, ~ \textrm{and} ~ 1-\mu_e < \frac{1}{\rho(A)^2}
    \label{eq:networkmodelupperbound}
\end{equation}
where $\rho(\cdot)$ is the spectral radius\footnote{The spectral radius of a matrix is defined as the maximum absolute eigenvalue $\rho(M) = \max_i |\lambda_i (M) | $ where $\lambda_i$ is the $i$th eigenvalue of $M$.}.
In this work, we assume that the channel qualities of both the legitimate estimator and eavesdropper satisfy \eqref{eq:networkmodelupperbound}, and as such are sufficient to produce bounded estimation error covariance of an unstable process in the case that the state is always transmitted.
We also do not restrict ourselves to the case $\mu_e < \mu$ as considered in \cite{Tsiamis2017IFACPapersOnLineStateEstimationSecrecy}.
To achieve state secrecy, including to cause an eavesdropper to have an unbounded estimation error covariance, we are unable just to transmit the state estimate at every time instance, motivating our encoding scheme.

\subsection{Encoding Mechanism}
\label{subsec:legitencoding}
Our proposed encoding scheme for the packet $z_k$ is
\begin{equation}
    z_k = \begin{cases}
        x_k, ~ \quad \qquad \qquad \qquad \nu_k = 0 \\
        x_k - A x_{k-1} + \chi_k, \quad \nu_k = 1
    \end{cases}
    \label{eq:encondingscheme}
\end{equation}
for $k \geq 1$ and $z_0 = x_0$,
where the sensor transmits either the current state or a single step innovation with relation to the previous state encoded by additive noise $\chi_k$.
In each packet, we only transmit the information, not the encoding values, making decoding challenging for a potential adversary.
The encoding $\nu_k$ and $\chi_k$ are pre-arranged at the sensor and legitimate estimator.
We design the scheduling sequence $\nu_k$ and encoding noises $\chi_k$ such that each packet $z_k$ appears, at least in a statistical sense, to be the current state value $x_k$.

To balance legitimate estimator estimation performance with state secrecy against eavesdroppers, several partial encoding schemes have been proposed \cite{Huang2021AutomaticaEncryptionschedulingremote,Yang2020Automaticaencodingmechanismsecrecy,Lucke2022IEEETransactionsonAutomaticControlCodingsecrecyremote}.
These transmission schemes balance providing a reliable estimate to the legitimate estimator encoding while obfuscating from an adversary.
As the legitimate estimator estimation performance can decrease, such as from a reduction in shared information \cite{Tsiamis2017IFACPapersOnLineStateEstimationSecrecy}, quantization encoding \cite{Huang2021AutomaticaEncryptionschedulingremote} or adversary attacks \cite{Lucke2022IEEETransactionsonAutomaticControlCodingsecrecyremote}, it is often necessary to send the actual state value in some of the packets.
The challenge becomes to cleverly balance the trade-off in the encoding scheme, between estimation performance and state secrecy against an eavesdropper.

In our encoding scheme, we propose that the scheduling sequence $\nu_k$ and additive encoding noise $\chi_k$ for $k \geq 1$ are chosen to be random.
The probability distributions and pseudo-random seeds are shared between the transmitter and legitimate estimator in initialization, while the eavesdropper has no knowledge.
As such, the realization of the sequence of $\nu_k$ and $\chi_k$ become pre-arranged, and are known exactly to the transmitter and legitimate estimator but unknown to an eavesdropper.
The idea of a pre-arranged distribution seed or common encoding key has been %utilized in
commonly used in several information security facets, such as in watermarking \cite{Zhou2022IEEETransactionsonAutomaticControlWatermarkingBasedProtection,Naha2021QuickestDetectionDeception} and cryptographic encryption with public and private keys \cite{Murguia2020IEEETransactionsonAutomaticControlSecurePrivateImplementation,Gao2022IEEETransactionsonAutomaticControlFaultTolerantConsensus}
as well as applications to transmission encoding schemes \cite{Yang2020Automaticaencodingmechanismsecrecy}.

We choose the distribution of the additive encoding noise $\chi_k$ to be a zero-mean Gaussian random variable with covariance
\begin{equation}
    E[\chi_k \chi_k^\mathsf{T}] = A^k \Sigma_0 (A^k)^\mathsf{T} + \sum_{\ell = 0}^{k-2} A^{k-1-\ell} Q (A^{k-1-\ell})^\mathsf{T} ,
    \label{eq:additivenoisecovariancedesign}
\end{equation}
and $\chi_k$ is uncorrelated from the process $x_0$ and $w_k$ for all $k$, and $\chi_k$ and $\chi_\ell$ for $k\neq\ell$ are uncorrelated.
Under this choice of distribution
\footnote{See Appendix~\ref{app:encodingnoise} for derivation.}
% \footnote{See \hl{ARXIV REF} for derivation.},
the first and second moments of each packet $z_k$ are equivalent to the state $x_k$
\begin{align*}
    E[z_k] &= E[x_k] \textrm{~and~} 
    E[z_k z_k^\mathsf{T}] = E[x_k x_k^\mathsf{T}] ,
\end{align*}
for $k \geq 1$.
An eavesdropper performing a statistical test using the first or second moment would be unable to identify whether each packet $z_k$ is the state $x_k$ or something else.
An eavesdropper directly using the packet $z_k$ as the state estimate would have a poor estimate of the true process state.

We choose the distribution of the scheduling sequence $\nu_k$ to be a Bernoulli random variable with probability of sending the state as
\begin{equation*}
    \mathbb{P}(\nu_k = 0) = \mu_d,
\end{equation*}
and encoded innovation as $\mathbb{P}(\nu_k = 1) = 1-\mu_d$.
The probability $\mu_d$ is a design variable of our encoding scheme, which trades off nominal estimation performance of the legitimate estimator where the state is sent at every time instance, against secrecy of state information.
In the case $\mu_d = 1$ only the state measurement $\nu_k=0$ is transmitted, while in the case $\mu_d = 0$ only innovations are sent.
We bound $\mu_d$ between these extremes, $0 < \mu_d < 1$, such that some of the transmissions are innovations and some are the state.
We provide guidance in Section~\ref{subsec:controlpolicy} on choice of $\mu_d$ in relation to our notions of secrecy and the expected estimation error covariance of the legitimate estimator and eavesdropper.

%%%%%%%%%%%%%%%%%%%%%%%%%%%%%%%%%%%%%%%%%%%%%%%%%%%%%%%%%%%%%%%%%%%%%%%%%%%%%%%%
\section{STATE ESTIMATION PERFORMANCE}
\label{sec:est}
In this section, we present the expected performance of the legitimate estimator's state estimate.

\subsection{State Estimator}
The \ac{MMSE} estimate of the state is defined in \eqref{eq:legitMMSE} as the expectation of the state given the information received.
We define the \ac{MMSE} prediction of the state as the expectation of the state given the information available at the previous time instance
\begin{equation*}
    \hat{x}_{k|k-1} = E[x_k|\mathcal{I}_{k-1}]
\end{equation*}
where the estimate follows the dynamics,
with associated covariance as
\begin{align*}
    P_{k|k-1} &= E[(x_k - \hat{x}_{k|k-1})(x_k - \hat{x}_{k|k-1})^\mathsf{T}|\mathcal{I}_{k-1}] .
\end{align*}

From \cite[Chapter 2]{Anderson1979OptimalFiltering} the state estimate update equation is
\begin{equation}
    \hat{x}_{k|k} = \hat{x}_{k|k-1} + \gamma_k \Sigma_{k,xz} \left(\Sigma_{k,zz}\right)^{-1} \left(z_k - \hat{z}_k \right) 
    \label{eq:estupdate}
\end{equation}
with associated estimate covariance update
\begin{equation}
    P_{k|k} = \Sigma_{k,xx} - \gamma_k \Sigma_{k,xz} \left( \Sigma_{k,zz} \right)^{-1} \Sigma_{k,zx}
    \label{eq:covupdate}
\end{equation}
where the expected packet is $\hat{z}_k = E[z_k | \mathcal{I}_{k-1}]$ and the auxiliary variables are
\begin{equation*}
    \textrm{Cov} \left( \left. \begin{bmatrix}x_k \\ z_k \end{bmatrix} \right| \mathcal{I}_{k-1} \right) = \begin{bmatrix} \Sigma_{k,xx} & \Sigma_{k,xz} \\ \Sigma_{k,zx} & \Sigma_{k,zz} \end{bmatrix}
\end{equation*}
where $\Sigma_{k,xx} = P_{k|k-1}$, \\ $\Sigma_{k,zz} = E[(z_k - \hat{z}_k)(z_k - \hat{z}_k)^\mathsf{T}|\mathcal{I}_{k-1}]$, and
\begin{equation*}
    \Sigma_{k,xz} = \Sigma_{k,zx}^\mathsf{T} = E[(x_k - \hat{x}_{k|k-1})(z_k - \hat{z}_k)^\mathsf{T}|\mathcal{I}_{k-1}] .
\end{equation*}
As the pre-arranged scheduling sequence $\nu_k$ and additive noise $\chi_k$ are known to the legitimate estimator and the transmitter, the legitimate estimator's expected packet is
\begin{equation}
    \hat{z}_k = \begin{cases}
        E[x_k | \mathcal{I}_{k-1}], ~ \quad \qquad \qquad \qquad \nu_k=0 \\
        E[x_{k} - A x_{k-1} | \mathcal{I}_{k-1}] + \chi_k , \quad \nu_k=1
    \end{cases} .
    \label{eq:expectedpacketvalue}
\end{equation}

The \ac{MMSE} state estimate of the legitimate estimator is 
\begin{equation*}
    \hat{x}_{k|k} = \begin{cases}
    A \hat{x}_{k-1|k-1}, \quad \qquad \qquad \textrm{when} \quad \gamma_k = 0\\
    x_k, \qquad \qquad \qquad \qquad ~ \textrm{when} \quad (\gamma_k,\nu_k) = (1,0)\\
    x_k - A(x_{k-1} - \hat{x}_{k-1}), ~ \textrm{when} ~ (\gamma_k,\nu_k) = (1,1)
    \end{cases} .
\end{equation*}

The following theorem gives the covariance in the three possible outcomes: a dropout occurs, the state is successfully received, and an innovation is successfully received.
\begin{theorem}
\label{thm:mmseestimate}
The covariance of the legitimate estimator's state estimate is
\begin{equation*}
    P_{k|k} = \begin{cases}
    A P_{k-1|k-1} A^\mathsf{T} + Q, \quad \textrm{when} \quad \gamma_k = 0 \\
    0, \qquad \qquad \qquad \qquad ~ \textrm{when} \quad (\gamma_k,\nu_k) = (1,0) \\
    A P_{k-1|k-1} A^\mathsf{T}, \qquad \quad \textrm{when} \quad (\gamma_k,\nu_k) = (1,1)
    \end{cases} .
\end{equation*}
\end{theorem}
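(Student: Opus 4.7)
The plan is to apply the covariance update formula \eqref{eq:covupdate} case by case on the outcome of $(\gamma_k,\nu_k)$, computing the cross and innovation covariances $\Sigma_{k,xz}$, $\Sigma_{k,zz}$ directly from the encoding rule \eqref{eq:encondingscheme} and the expected packet \eqref{eq:expectedpacketvalue}. Since the pre-arranged scheduling realization and encoding noise are effectively known to the legitimate estimator, I will treat $\nu_k$ and $\chi_k$ as deterministic conditioned on $\mathcal{I}_{k-1}$ when taking these conditional expectations.

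The dropout case $\gamma_k=0$ is the easiest: \eqref{eq:covupdate} collapses to $P_{k|k}=\Sigma_{k,xx}=P_{k|k-1}$, and the standard one-step prediction relation $P_{k|k-1}=A P_{k-1|k-1}A^\mathsf{T}+Q$, obtained by applying $x_k=Ax_{k-1}+w_{k-1}$ and using that $w_{k-1}$ is zero-mean and uncorrelated with $\mathcal{I}_{k-1}$, yields the first branch. The unencoded received case $(\gamma_k,\nu_k)=(1,0)$ is also immediate: since $z_k=x_k$ gives $\hat z_k=\hat x_{k|k-1}$, one has $\Sigma_{k,xz}=\Sigma_{k,zz}=\Sigma_{k,xx}=P_{k|k-1}$, and the subtraction in \eqref{eq:covupdate} annihilates the prediction covariance to produce $P_{k|k}=0$.

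The innovation-received case $(\gamma_k,\nu_k)=(1,1)$ is the one requiring care, and is where I expect the main conceptual step. Rewriting the encoded packet using the dynamics gives $z_k = x_k - Ax_{k-1} + \chi_k = w_{k-1} + \chi_k$. Since $\chi_k$ is known and $w_{k-1}$ is uncorrelated with $\mathcal{I}_{k-1}$, the expected packet from \eqref{eq:expectedpacketvalue} satisfies $\hat z_k=\chi_k$, so the residual simplifies to $z_k-\hat z_k = w_{k-1}$. Therefore $\Sigma_{k,zz}=Q$, and using the decomposition $x_k-\hat x_{k|k-1} = A(x_{k-1}-\hat x_{k-1|k-1}) + w_{k-1}$ together with independence of $w_{k-1}$ from the filtered error at time $k-1$, I get $\Sigma_{k,xz}=Q$. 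Substituting into \eqref{eq:covupdate} gives $P_{k|k}=P_{k|k-1}-QQ^{-1}Q = P_{k|k-1}-Q = AP_{k-1|k-1}A^\mathsf{T}$, which is the third branch.

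The subtlety worth flagging is the information-set bookkeeping: one must verify that $w_{k-1}$ is uncorrelated with everything in $\mathcal{I}_{k-1}$ regardless of past $\nu$-history. This follows because every past packet $z_\ell$ (whether state or innovation form) depends only on $x_0,w_0,\dots,w_{\ell-1}$ and the pre-arranged $\chi_\ell$, none of which are correlated with $w_{k-1}$ by the noise assumptions. Once this is in hand, the three computations above stitch together directly into the claimed three-branch expression.
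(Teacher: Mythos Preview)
Your proposal is correct and follows essentially the same case-by-case application of \eqref{eq:covupdate} as the paper's own proof, computing $\Sigma_{k,xz}$ and $\Sigma_{k,zz}$ in each outcome and obtaining the identical simplifications (in particular the key reduction $z_k-\hat z_k=w_{k-1}$ in the innovation-received case). Your explicit remark on why $w_{k-1}$ is uncorrelated with $\mathcal{I}_{k-1}$ is a helpful clarification that the paper leaves implicit.
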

The proof is direct through application of the dynamics \eqref{eq:dynamics} and definition of the expectation operator \cite{Anderson1979OptimalFiltering}.
% The proof can be found in \hl{ARXIV REF}.

\begin{proof}
    See Appendix~\ref{app:legitmmse}.
\end{proof}

Inspecting the estimation error covariance of the legitimate estimator in Theorem~\ref{thm:mmseestimate}, we observe the following.

In the case that the transmission is dropped, $\gamma_k=0$, the estimation error covariance is the prediction error covariance.
This is the worst case as the estimation error builds by a factor of $\rho(A^2)$ and linearly by the driving noise $Q$.

In the case that the innovation is received $(\gamma_k,\nu_k) = (1,1)$, the estimation error grows by a factor of $\rho(A^2)$, which provides more information about the current state than a dropout.
Where the previous state is known exactly and $P_{k-1|k-1} = 0$, then the estimation error covariance remains zero.
The encoded innovation provides useful information to the legitimate receiver while the random additive hinders a potential eavesdropper.

Finally, in the case that the current state is received $(\gamma_k,\nu_k) = (1,0)$, the estimation error of the current state, $\hat{x}_{k|k}$, is zero as the state is received exactly.

\begin{remark}
\ac{MMSE} state estimate and associated covariances can alternatively be derived using the Kalman filter.
In the case of noisy measurements, the Kalman filter can be employed to provide similar \ac{MMSE} estimates.
While the above result apply in principle, the estimation error covariance in the case of a state receipt would not reduce to exactly zero due to the presence of measurement noise.
\end{remark}

%%%%%%%%%%%%%%%%%%%%%%%%%%%%%%%%%%%%%%%%%%%%%%%%%%%%%%%%%%%%%%%%%%%%%%%%%%%%%%%%
\subsection{Expected Performance}
\label{subsec:legitexpected}
The estimation error of the legitimate estimator given in Theorem~\ref{thm:mmseestimate}, is dependent on the dynamics and the information available at the current time step $\mathcal{I}_k$: scheduling sequence $\nu_k$, additive noise $\chi_k$, and the dropout channel $\gamma_k$.
We utilize the stochastic properties of the channel environment and scheduling sequence to give the expectation of the legitimate estimator estimation error performance.

To ensure a bounded estimation error covariance at the legitimate estimator, the probability of receiving the state estimate $\mathbb{P}(\gamma_k=1,\nu_k=0) = \mu \mu_d$ must be bounded
\begin{equation}
    1 - \mu \mu_d < \frac{1}{\rho (A)^2} .
    \label{eq:decisionvariablestabilitybound}
\end{equation}
Given a channel quality $\mu$ and dynamics $A$, the minimum choice of design variable is
\begin{equation}
    \frac{1}{\mu} \left(1 - \frac{1}{\rho(A)^2}\right) < \mu_d .
    \label{eq:decisionvariableminbound}
\end{equation}
For $\mu_d < 1$, a better quality channel than \eqref{eq:networkmodelupperbound}, where only the state estimate is transmitted, is required.

In the case that the choice of $\mu_d$ does not satisfy \eqref{eq:decisionvariablestabilitybound} then $E[ P_{k|k}]$ is unbounded,
otherwise we have the following result.
\begin{theorem}
The expected estimation error covariance of the legitimate estimator's state $\hat{x}_{k|k}$ as $k\rightarrow\infty$ is
\begin{align*}
    E[P_{k|k}] = (1-\mu) S
\end{align*}
where the choice of $\mu_d$ satisfies \eqref{eq:decisionvariableminbound}, and $S$ is the unique stabilizing solutions to the Lyapunov Equation
\begin{align}
    S &= \left(\sqrt{1-\mu\mu_d} {A} \right) S \left({A}^\mathsf{T} \sqrt{1-\mu\mu_d} \right) + Q . \label{eq:solnlyap}
\end{align}
\label{thm:expectedlegitest}
\end{theorem}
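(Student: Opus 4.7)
The plan is to take expectations through the three cases of Theorem~\ref{thm:mmseestimate} using the independence of the channel realization $\gamma_k$, the scheduling bit $\nu_k$, and the process history, in order to obtain a deterministic linear recursion for $M_k := E[P_{k|k}]$ and then pass to its fixed point.

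First I would compute the three joint probabilities, namely $\mathbb{P}(\gamma_k=0)=1-\mu$, $\mathbb{P}(\gamma_k=1,\nu_k=0)=\mu\mu_d$, and $\mathbb{P}(\gamma_k=1,\nu_k=1)=\mu(1-\mu_d)$, using the i.i.d.\ Bernoulli assumptions on $\gamma_k$ and $\nu_k$ and their mutual independence. Substituting the three outcomes of Theorem~\ref{thm:mmseestimate} into $M_k = E[P_{k|k}]$ and pulling the constant matrices $A$ and $Q$ out of the expectation gives the affine recursion
\begin{equation*}
M_k \;=\; (1-\mu\mu_d)\, A\, M_{k-1}\, A^\mathsf{T} \;+\; (1-\mu)\, Q,
\end{equation*}
since the zero-covariance case contributes nothing and the dropout and innovation cases combine into the factor $(1-\mu)+\mu(1-\mu_d)=1-\mu\mu_d$ multiplying $A M_{k-1} A^\mathsf{T}$, while only the dropout case contributes to the $Q$ term.

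Next I would invoke condition \eqref{eq:decisionvariableminbound} (equivalently \eqref{eq:decisionvariablestabilitybound}), which gives $(1-\mu\mu_d)\rho(A)^2<1$, so that the linear map $X\mapsto (1-\mu\mu_d)AXA^\mathsf{T}$ is a contraction on the cone of symmetric positive semidefinite matrices. Standard Lyapunov theory then guarantees a unique positive semidefinite fixed point $M_\infty$ and convergence $M_k\to M_\infty$ irrespective of the initial condition $M_0=\Sigma_0$; controllability of $(A,\sqrt{Q})$ combined with $Q\succ 0$ ensures $M_\infty$ is well defined and unique, and the same reasoning applied to the scaled equation supplies the unique stabilizing $S$ in \eqref{eq:solnlyap}.

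Finally I would verify by direct substitution that $(1-\mu)S$ is the fixed point: plugging $M_\infty=(1-\mu)S$ into the recursion and factoring $(1-\mu)$ yields $(1-\mu)S=(1-\mu)\bigl[(1-\mu\mu_d)ASA^\mathsf{T}+Q\bigr]$, which holds precisely because $S$ satisfies \eqref{eq:solnlyap}. Uniqueness of the fixed point then identifies $M_\infty = (1-\mu)S$, completing the proof. The only delicate step is the convergence argument; once the stability condition on $\mu_d$ is in place the Lyapunov contraction makes this routine, so the bulk of the proof is really the bookkeeping of the conditional expectation.
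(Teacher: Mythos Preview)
Your proof is correct and follows essentially the same approach as the paper's Appendix~\ref{app:legitexpectedaug}: condition on the three outcomes of Theorem~\ref{thm:mmseestimate}, exploit the independence of $(\gamma_k,\nu_k)$ from $P_{k-1|k-1}$ to obtain the linear recursion $M_k=(1-\mu\mu_d)AM_{k-1}A^\mathsf{T}+(1-\mu)Q$, and pass to the Lyapunov fixed point under~\eqref{eq:decisionvariablestabilitybound}. The paper additionally writes out the explicit closed-form solution~\eqref{eq:legitexpectedinduction} by induction before taking the limit (and offers a Markov-chain alternative in Appendix~\ref{app:alternativelegitexpectedaug}), but your direct fixed-point argument is cleaner and fully sufficient.
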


\begin{proof}
    See Appendix~\ref{app:legitexpectedaug}.
\end{proof}

The proof of Theorem~\ref{thm:expectedlegitest} is shown by considering the expectation of $P_{k|k}$ as the sum of the conditional expectation of $P_{k|k}$ given the outcomes from Theorem~\ref{thm:mmseestimate} by the probability of that outcome.
Each conditional expectation of $P_{k|k}$ can be written as a function of the expectation of the previous estimation error covariance $P_{k-1|k-1}$ by application of Theorem~\ref{thm:mmseestimate}.
Expanding from time $k-1$ to the initial time $k=0$, the expectation of $P_{k|k}$ can be written as a function of the initial condition $\Sigma_0$ and a sum to time $k$ of the dynamics $A$ and $Q$ and outcome probabilities, comprised of the channel quality $\mu$ and design variable $\mu_d$.
As $k\rightarrow\infty$, we observe that the expression can be written as a converging Lyapunov equation providing the form given in Theorem~\ref{thm:expectedlegitest}.

An alternative proof approach is to consider that as $k\rightarrow\infty$, the outcomes of the legitimate estimator form a countably infinite Markov Chain (MC), see 
Appendix~\ref{app:alternativelegitexpectedaug}.
% \hl{ARXIV REF}.
From every state in the MC, the estimation error covariance will return to a zero state when the state estimate is successfully received, such that all states in the MC are reachable.
The expectation of $P_{k|k}$ is then the sum of all of the possible MC states multiplied by the limiting distribution of the MC, or the probability of being in a state.

Inspecting the result of Theorem~\ref{thm:expectedlegitest}, we observe that the expectation of the estimation error covariance of the legitimate estimator is a function of the dynamics $A$ and $Q$, the channel $\mu$, and the encoding scheme with design variable $\mu_d$.
The performance is reduced compared to the nominal remote state estimator that transmits the state estimate every time instance, or the case that $\mu_d = 1$.
Our encoding scheme trades this nominal performance of only sending the state, with secrecy of the state information.
Using knowledge of the channel quality and dynamics, the design variable $\mu_d$ can be tuned to achieve a certain level of expected estimation error while also ensuring a bounded state estimate.
We provide guidance on our encoding design $\mu_d$ for secrecy against an eavesdropper in Section~\ref{subsec:controlpolicy} to balance performance of the legitimate estimator against secrecy to an eavesdropper.

%%%%%%%%%%%%%%%%%%%%%%%%%%%%%%%%%%%%%%%%%%%%%%%%%%%%%%%%%%%%%%%%%%%%%%%%%%%%%%%%
\section{EAVESDROPPER ESTIMATION PERFORMANCE}
\label{sec:eavesdropper}
We pose our secrecy encoding scheme in the context of a class of adversarial eavesdropper that does not have knowledge of the encoding scheme.
The class of eavesdropper directly uses any packets that it \emph{believes} are the state.
This amounts to \emph{correctly} using the state in the case $\nu_k=0$, but \emph{incorrectly} using an encoded innovation as the state in the case $\nu_k=1$.
We limit our analysis to this class of eavesdropper, as even in the situation that an adversary was aware that the innovation was encoded in some of the packets, without knowledge of the additive noise $\chi_k$ the eavesdropper would be unable to extract and utilize the innovation.

As the packets are statistically equivalent to the state process, in the sense of the first and second moments, we pose three types of eavesdropper.
We consider: a naive eavesdropper that assumes every received packet is the state; a suspicious eavesdropper that suspects not every packet is the state, and has a random chance at guessing the packet type; and a smart eavesdropper that has perfect packet identification, and correctly uses the state and discards the innovation.

In this section, we show the expectation of the estimation error covariance of the class of eavesdropper, and for each type of eavesdropper compare to the legitimate user's performance.
We then provide an approach to choose an appropriate design variable $\mu_d$, and a numerical illustration.

\subsection{Expected Eavesdropper Estimation Performance}
\label{subsec:eavesexpectedresults}
At the receipt of each packet $z_k$, we consider that the eavesdropper may perform a test on the packet to make a choice whether to utilize or discard the packet.
Let us define $b_k=1$ as the case where the eavesdropper identifies a received packet $z_k$ as the state and uses the packet, and $b_k=0$ as the case where the eavesdropper identifies a received packet $z_k$ as not the state and so discards the packet.
Let us define the eavesdropper's belief to use a packet as the posterior probability test conditioned on the received packet as $\mathbb{P} (b_k=1 | z_k , \gamma_k^e, \nu_k)$, and the belief to discard a packet as $\mathbb{P} (b_k=0 | z_k , \gamma_k^e, \nu_k) = 1 - \mathbb{P} (b_k=1 | z_k , \gamma_k^e, \nu_k)$.
We outline in the following sections how each type of eavesdropper forms these conditional probabilities.

An eavesdropper has five possible events: successfully receives a state which it \emph{correctly} uses $(\gamma_k^e,\nu_k,b_k)=(1,0,1)$ or incorrectly discards $(\gamma_k^e,\nu_k,b_k)=(1,0,0)$, successfully receives an innovation which it \emph{incorrectly} uses $(\gamma_k^e,\nu_k,b_k)=(1,1,1)$ or correctly discards $(\gamma_k^e,\nu_k,b_k)=(1,1,0)$, or the packet is dropped $(\gamma_k^e=0)$.
As discarding a successfully received packet (cases $b_k=0$) is equivalent to dropping the packet $(\gamma_k^e=0)$, the five events reduce to three outcomes.

First: successfully receiving a state which the eavesdropper correctly uses, with probability
\begin{align*}
    p_r^e &= \mathbb{P}(\gamma_k^e=1,\nu_k=0,b_k=1) . 
\end{align*}
Second: successfully receiving an innovation which the eavesdropper incorrectly uses as the state, with probability
\begin{align*}
    p_i^e &= \mathbb{P}(\gamma_k^e=1,\nu_k=1,b_k=1) . 
\end{align*}
Third: the eavesdroppers drops the packet or discards a successfully received packet which it believes is not the state, with probability
\begin{align*}
    p_d^e &= \mathbb{P}(\gamma_k^e=0) + \mathbb{P}(\gamma_k^e=1,\nu_k=0,b_k=0) \\&\quad+ \mathbb{P}(\gamma_k^e=1,\nu_k=1,b_k=0) .
\end{align*}

The state estimate of an eavesdropper is
\begin{equation*}
    \hat{x}_k^e = \begin{cases}
    A \hat{x}_{k-1}^e , \quad \textrm{when} \quad (\gamma_k^e = 0) \textrm{~or~} (\gamma_k^e,\nu_k,b_k) = (1,0,0) \\ \qquad \qquad \textrm{~or~} (\gamma_k^e,\nu_k,b_k) = (1,1,0) \\
    x_k, \quad \textrm{when} \quad (\gamma_k^e,\nu_k, b_k) = (1,0,1) \\
    x_k - A x_{k-1} + \chi_k, ~~ \textrm{when} ~~ (\gamma_k^e,\nu_k,b_k) = (1,1,1)
    \end{cases}
\end{equation*}
where the predicted estimate uses dynamics, and a successfully received packet is used directly.
We derive the covariance of the state estimate similar to Theorem~\ref{thm:mmseestimate}, then follow a similar argument as Theorem~\ref{thm:expectedlegitest} for the expectation of the estimation error covariance.

\begin{lemma}
\label{lemma:classeavesdropperstateest}
The eavesdropper's estimation error covariance is
\begin{equation*}
    P_{k|k}^e = \begin{cases}
    A P_{k-1|k-1}^e A^\mathsf{T} + Q, \quad \textrm{when} \quad (\gamma_k^e = 0) \\ \qquad \qquad \textrm{~or~} (\gamma_k^e,\nu_k,b_k) = (1,0,0)\\ \qquad \qquad  \textrm{~or~} (\gamma_k^e,\nu_k,b_k) = (1,1,0) \\
    0, \quad \textrm{when} \quad (\gamma_k^e,\nu_k,b_k) = (1,0,1) \\
    2 \left(A^k \Sigma_0 (A^k)^\mathsf{T} + \sum_{\ell = 0}^{k-2} A^{k-1-\ell} Q (A^{k-1-\ell})^\mathsf{T} \right),\\ \qquad \quad \textrm{when} \quad (\gamma_k^e,\nu_k,b_k) = (1,1,1)
    \end{cases}
\end{equation*}
\end{lemma}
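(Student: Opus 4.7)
The plan is to handle each of the three cases in the lemma separately by direct computation of the estimation error $x_k - \hat{x}_k^e$ and its conditional covariance, using the stated form of the eavesdropper's state estimate together with the dynamics \eqref{eq:dynamics} and the design of the encoding noise in \eqref{eq:additivenoisecovariancedesign}. The first two cases should be essentially immediate; the real work is in the third case, where the eavesdropper substitutes an encoded innovation in place of the state.

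First I would dispose of the prediction case, which covers both a dropout ($\gamma_k^e=0$) and a discard ($b_k=0$). Substituting $\hat{x}_k^e = A\hat{x}_{k-1}^e$ into $x_k = Ax_{k-1}+w_{k-1}$ gives an error of the form $A(x_{k-1}-\hat{x}_{k-1}^e) + w_{k-1}$. Since $w_{k-1}$ is independent of the past (hence of $\hat{x}_{k-1}^e$ and of $\mathcal{I}_{k-1}^e$) and has covariance $Q$, the conditional covariance collapses to $AP_{k-1|k-1}^e A^\mathsf{T}+Q$, matching the first branch. Next, in the case $(\gamma_k^e,\nu_k,b_k)=(1,0,1)$ the estimate equals $x_k$ exactly, so the error and its covariance are zero, giving the second branch trivially.

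The main obstacle is the third case, $(\gamma_k^e,\nu_k,b_k)=(1,1,1)$, because the eavesdropper's error $x_k-\hat{x}_k^e = Ax_{k-1} - \chi_k$ no longer contracts toward zero but instead accumulates the full second moment of $x_{k-1}$ plus that of $\chi_k$. My plan here is to expand
\begin{equation*}
P_{k|k}^e = A\,E[x_{k-1}x_{k-1}^\mathsf{T}]\,A^\mathsf{T} - A\,E[x_{k-1}\chi_k^\mathsf{T}] - E[\chi_k x_{k-1}^\mathsf{T}]\,A^\mathsf{T} + E[\chi_k\chi_k^\mathsf{T}],
\end{equation*}
and then argue that the cross terms vanish because $\chi_k$ was chosen uncorrelated with $x_0$ and with every $w_\ell$, hence with $x_{k-1}$. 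Iterating the dynamics from $x_0$ yields $E[x_{k-1}x_{k-1}^\mathsf{T}] = A^{k-1}\Sigma_0(A^{k-1})^\mathsf{T} + \sum_{\ell=0}^{k-3} A^{k-2-\ell}Q(A^{k-2-\ell})^\mathsf{T}$, so that after premultiplying by $A$ and postmultiplying by $A^\mathsf{T}$ the resulting expression has exactly the same form as the chosen $E[\chi_k\chi_k^\mathsf{T}]$ in \eqref{eq:additivenoisecovariancedesign}. Summing the two identical contributions produces the factor of $2$ in the third branch.

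The care points I would emphasize when writing this up are: (i) verifying that $\chi_k$ is independent of $\mathcal{I}_{k-1}^e$, which follows because the eavesdropper has no knowledge of $\chi_k$ and $\chi_k$ is drawn independently of the process and channel; (ii) confirming that a reindexing of the sum for $E[x_{k-1}x_{k-1}^\mathsf{T}]$ after conjugation by $A$ matches the index range in \eqref{eq:additivenoisecovariancedesign}; and (iii) noting that boundary cases ($k=1$, so that the sum in \eqref{eq:additivenoisecovariancedesign} is empty) are handled uniformly. Otherwise the argument is structurally parallel to Theorem~\ref{thm:mmseestimate}, with the encoded-innovation case replacing the clean one because the eavesdropper cannot subtract off $\chi_k$ or $Ax_{k-1}$.
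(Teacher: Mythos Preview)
Your proposal is correct and follows essentially the same route as the paper: case-by-case computation of the error, with the third case reducing to $x_k-\hat{x}_k^e = Ax_{k-1}-\chi_k$, expansion into four terms, vanishing cross terms by uncorrelatedness, and matching $AE[x_{k-1}x_{k-1}^\mathsf{T}]A^\mathsf{T}$ with the designed $E[\chi_k\chi_k^\mathsf{T}]$ to get the factor of $2$. One small slip: your sum for $E[x_{k-1}x_{k-1}^\mathsf{T}]$ should run to $k-2$, not $k-3$ (so that after conjugation by $A$ it matches \eqref{eq:additivenoisecovariancedesign} exactly), but you already flag the reindexing as a point to check.
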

The proof is direct through application of the dynamics \eqref{eq:dynamics}, definition of the expectation operator \cite{Anderson1979OptimalFiltering}, and the encoding scheme \eqref{eq:encondingscheme}.
% The proof can be found in \hl{ARXIV REF}.
\begin{proof}
    See Appendix~\ref{app:eavesestimate}.
\end{proof}

Critically, while we can quantify in Lemma~\ref{lemma:classeavesdropperstateest} the estimation error covariance of an eavesdropper using knowledge of the mismatch between the encoding scheme and the eavesdropper's assumption, this would be unknown to the eavesdropper.
The eavesdropper assumes that upon receiving a packet $(\gamma_k^e=1)$ and utilizing the packet $(b_k=1)$ their estimation error covariance is zero, which would not be the case upon receiving an innovation.
From Lemma~\ref{lemma:classeavesdropperstateest}, we note that upon receipt and use of an innovation, the estimation error covariance is instead a function of the dynamics and time $k$.

\begin{theorem}
    \label{thm:classeavesdropperexpectation}
    The expectation of the estimation error covariance of the eavesdropper is 
    \begin{align*}
        &E[P_{k|k}^e] = (p_d^e)^k A^{k-1} \Sigma_0 (A^{k-1})^\mathsf{T} 
        + \sum_{\ell=0}^{k-1} (p_d^e)^{\ell+1} A^\ell Q (A^\ell)^\mathsf{T} \\ 
        + &p_i^e 2 \sum_{\ell=0}^{k-1} (p_d^e)^\ell \Bigl(A^k \Sigma_0 (A^k)^\mathsf{T} +  \sum_{j=0}^{k-\ell-2} A^{k-1-j} Q (A^{k-1-j})^\mathsf{T} \Bigr)
    \end{align*}
    where $p_i^e$ is the probability of receiving and utilizing an innovation, and $p_d^e$ is the probability dropping or discarding a packet.
\end{theorem}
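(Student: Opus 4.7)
The plan is to derive a linear recursion for $E[P_{k|k}^e]$ directly from Lemma~\ref{lemma:classeavesdropperstateest} and then unroll it in a Lyapunov-style manner. The crucial structural observation is that, among the three possible outcomes at time $k$, the drop/discard case is the only one in which $P_{k|k}^e$ depends on the previous $P_{k-1|k-1}^e$; the correct-state-use outcome resets the covariance to zero, and the mistaken-innovation-use outcome sets $P_{k|k}^e$ to a deterministic matrix $N_k := 2\bigl(A^k \Sigma_0 (A^k)^\mathsf{T} + \sum_{j=0}^{k-2} A^{k-1-j} Q (A^{k-1-j})^\mathsf{T}\bigr)$ that is a function of $k$ alone.

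First, I would exploit the i.i.d.\ nature of the channel outcomes and of the pre-arranged encoding and scheduling sequences to argue that the outcome triple $(\gamma_k^e,\nu_k,b_k)$ is independent of $P_{k-1|k-1}^e$. Taking iterated expectations conditioned on the outcome at time $k$ then collapses the three cases of Lemma~\ref{lemma:classeavesdropperstateest} into the scalar linear recursion
\begin{equation*}
E[P_{k|k}^e] = p_d^e \bigl( A\, E[P_{k-1|k-1}^e]\, A^\mathsf{T} + Q \bigr) + p_i^e N_k,
\end{equation*}
since the correct-state-reception case (probability $p_r^e$) contributes zero to the expectation.

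Second, I would unroll this recursion from time $k$ down to the initial condition. Each back-substitution introduces an additional $p_d^e A(\cdot)A^\mathsf{T}$ on the accumulated past, while the inhomogeneous pieces $p_d^e Q$ and $p_i^e N_{k-\ell}$ pile up, after $\ell$ applications of the drop propagation, as $(p_d^e)^\ell A^\ell (\cdot)(A^\ell)^\mathsf{T}$-type terms for $\ell = 0,1,\dots,k-1$. This yields
\begin{equation*}
E[P_{k|k}^e] = (p_d^e)^k A^k \Sigma_0 (A^k)^\mathsf{T} + \sum_{\ell=0}^{k-1} (p_d^e)^{\ell+1} A^\ell Q (A^\ell)^\mathsf{T} + p_i^e \sum_{\ell=0}^{k-1} (p_d^e)^\ell A^\ell N_{k-\ell} (A^\ell)^\mathsf{T}.
\end{equation*}
Substituting the explicit form of $N_{k-\ell}$ and absorbing the outer $A^\ell(\cdot)(A^\ell)^\mathsf{T}$ into the exponents of $A$ (which shifts every inner exponent up by $\ell$ while leaving the range of the inner $j$-summation intact) produces the three blocks displayed in the theorem.

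The main obstacle is the index bookkeeping in this last step: the inner $Q$-sum inside $N_{k-\ell}$ runs over $j = 0,\dots,k-\ell-2$ with summand exponent $k-\ell-1-j$, and one must carefully track that conjugation by $A^\ell$ converts this exponent to $k-1-j$ while the summation range stays $j = 0,\dots,k-\ell-2$. A secondary point that requires care is the initialization: because $z_0 = x_0$ is transmitted without encoding, the three-way split only applies for $k \geq 1$, so the $\Sigma_0$ contribution must be reconciled with where the recursion is formally started in order to match the exponent on $A$ appearing in the first term of the stated expression.
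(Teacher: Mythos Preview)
Your proposal is correct and matches the paper's own proof essentially line for line: the paper introduces the same deterministic matrix (there called $f_i$, your $N_i$), derives the identical one-step recursion $E[P_{k+1|k+1}^e] = p_d^e\bigl(A\,E[P_{k|k}^e]\,A^\mathsf{T} + Q\bigr) + p_i^e f_{k+1}$ from Lemma~\ref{lemma:classeavesdropperstateest}, and then verifies the closed form by induction rather than by unrolling --- but induction and unrolling are the two standard presentations of the same argument. Your closing caveat about the exponent on $A$ in the $\Sigma_0$ term versus the initialization at $k=0$ is well placed; the paper's own base cases handle this bookkeeping explicitly.
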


\begin{proof}
    See Appendix~\ref{app:classeavesestimateperformance}.
\end{proof}

Inspecting the result of Theorem~\ref{thm:classeavesdropperexpectation}, we note that the expectation of the eavesdropper's estimation error covariance is a function of the dynamics $A$ and $Q$, the initial state covariance $\Sigma_0$, the time $k$, and the probability of incorrectly using an innovation $p_i^e$ and probability of dropping or discarding a packet $p_d^e$.
The probability of use or discard of encoded innovation packets depend on the belief that a received packet is the state.
We now consider three types of eavesdropper that have different packet analysis techniques and utilize the result of Theorem~\ref{thm:classeavesdropperexpectation} to compare to the legitimate estimator's performance in the sense of our definitions of secrecy.

%%%%%%%%%%%%%%%%%%%%%%%%%%%%%%%%%%%%%%%%%%%%%%%%%%%%%%%%%%%%%%%%%%%%%%%%%%%%%%%%
\subsection{Naive Eavesdropper}
\label{subsec:naiveeavs}
Consider a naive eavesdropper that assumes that every packet transmitted to the legitimate estimator is the state, $\hat{z}_k = x_k$ for all $k$.
Performing basic statistical tests, such as computing the first or second moment on each received packet $z_k$, the naive eavesdropper would be unable to identify a difference between state packets $(\nu_k=0)$ and innovation packets $(\nu_k=1)$, as by design $E[z_k] = E[x_k]$, see Section~\ref{subsec:legitencoding}.
The eavesdropper's belief whether to use a packet that is successfully received is
\begin{equation*}
    \mathbb{P}(b_k=1 | z_k, \gamma_k^e=1, \nu_k) = 1 ,
\end{equation*}
irrespective of the packet containing the state $(\nu_k=0)$ or innovation $(\nu_k=1)$.
The probability of the naive eavesdropper using the state or innovation are then
\begin{align*}
    p_r^e &= \mathbb{P}(\gamma_k^e=1,\nu_k=0,b_k=1) = \mu_e \mu_d \\
    p_i^e &= \mathbb{P}(\gamma_k^e=1,\nu_k=1,b_k=1) = \mu_e (1-\mu_d)
\end{align*}
and probability of packet drop or discard is $p_d^e = 1-\mu_e$.

We state the estimation error performance of the naive eavesdropper from the result in Theorem~\ref{thm:classeavesdropperexpectation}.

\begin{corollary}
    \label{corr:naiveeavesexpectation}
    The expectation of the estimation error covariance of the naive eavesdropper diverges, $E[P_{k|k}^e] \rightarrow \infty$ as $k\rightarrow\infty$, satisfying condition (ii) of Definition~\ref{definition:perfectsecrecy}.
\end{corollary}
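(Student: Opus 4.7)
The plan is to specialize Theorem~\ref{thm:classeavesdropperexpectation} to the naive eavesdropper parameters $p_i^e = \mu_e(1-\mu_d)$ and $p_d^e = 1-\mu_e$ (with $p_d^e < 1$ since we assume $\mu_e$ satisfies the channel bound and in particular $\mu_e > 0$, and $p_i^e > 0$ since $0 < \mu_d < 1$). Since $P_{k|k}^e \succeq 0$ and all three summands in Theorem~\ref{thm:classeavesdropperexpectation} are positive semidefinite, I would lower-bound $\mathrm{trace}\, E[P_{k|k}^e]$ by the trace of a single carefully chosen summand in the $p_i^e$-term. The natural choice is the $\ell=0$ contribution, for which $(p_d^e)^0 = 1$, giving
\begin{equation*}
\mathrm{trace}\, E[P_{k|k}^e] \;\geq\; 2 p_i^e \, \mathrm{trace}\!\left( A^k \Sigma_0 (A^k)^{\mathsf{T}} + \sum_{j=0}^{k-2} A^{k-1-j} Q (A^{k-1-j})^{\mathsf{T}} \right).
\end{equation*}
The bracketed quantity is, by construction in \eqref{eq:additivenoisecovariancedesign}, the covariance of the encoding noise $\chi_k$, which equals the unconditional covariance of $x_k$.

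The next step is to argue this unconditional state covariance diverges in trace as $k\to\infty$. I would split into the two regimes covered by the assumptions on $A$. If $A$ is unstable, i.e.\ $\rho(A) > 1$, then $\mathrm{trace}(A^k \Sigma_0 (A^k)^{\mathsf{T}}) \geq \lambda_{\min}(\Sigma_0)\,\rho(A)^{2k}$ grows geometrically using $\Sigma_0 \succ 0$. If $A$ is marginally stable, i.e.\ $\rho(A) = 1$, then $A^k \Sigma_0 (A^k)^{\mathsf{T}}$ may remain bounded, and divergence must come from the Lyapunov-like sum $\sum_{m=1}^{k-1} A^m Q (A^m)^{\mathsf{T}}$. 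Here I would invoke the standing assumption that $(A,\sqrt{Q})$ is controllable together with the presence of an eigenvalue on the unit circle: a standard Lyapunov-stability argument then yields that this sum has unbounded trace (equivalently, the discrete Lyapunov equation $P = APA^{\mathsf{T}} + Q$ has no bounded solution under these conditions).

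Combining the two cases, the lower bound on $\mathrm{trace}\, E[P_{k|k}^e]$ tends to infinity, so $E[P_{k|k}^e]$ is unbounded, proving condition (ii) of Definition~\ref{definition:perfectsecrecy}. The main obstacle I anticipate is cleanly justifying the marginally stable case, since there $A^k \Sigma_0 (A^k)^{\mathsf{T}}$ by itself need not diverge (e.g.\ when the eigenvalues on the unit circle are semisimple); the argument must route through the controllability assumption rather than through spectral growth of $A$ alone. Everything else reduces to positive-semidefiniteness bookkeeping and dropping positive terms in the expression of Theorem~\ref{thm:classeavesdropperexpectation}.
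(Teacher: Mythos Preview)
Your proposal is correct and follows essentially the same route as the paper: specialize Theorem~\ref{thm:classeavesdropperexpectation} to the naive eavesdropper, isolate the $p_i^e$-contribution (you take the $\ell=0$ summand, the paper inspects the same two parts $A^k\Sigma_0(A^k)^{\mathsf T}$ and $\sum A^{k-1-j}Q(A^{k-1-j})^{\mathsf T}$ directly), and then argue divergence by splitting into the unstable case via growth of $A^k\Sigma_0(A^k)^{\mathsf T}$ and the marginally stable case via controllability of $(A,\sqrt{Q})$ forcing the Lyapunov sum to diverge. Your explicit lower bound by the single $\ell=0$ term is a slightly cleaner way to dispose of the outer sum over $\ell$, but otherwise the arguments coincide.
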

\begin{proof}
    See Appendix~\ref{app:naivesusexpected}.
\end{proof}

As the naive eavesdropper treats all received packets as the state, it will inadvertently use the innovation packets which significantly degrades the naive eavesdropper's state estimate.
The result of Corollary~\ref{corr:naiveeavesexpectation} gives that the expectation of the estimation error covariance is a function of time $k$ with some terms diverging as $k$ increases.
We note that the diverging terms in the expected performance are larger for larger probabilities $p_i^e$, or smaller $\mu_d$.
Thus while any choice of $\mu_d$ that satisfies \eqref{eq:decisionvariablestabilitybound} will ensure perfect secrecy, a smaller $\mu_d$ will provide faster divergence of the naive eavesdropper's estimate.
Additionally, we observe that even in the case of a perfect channel $\mu_e=1$ and $p_d^e=0$, the naive eavesdropper's expected performance still diverges.

%%%%%%%%%%%%%%%%%%%%%%%%%%%%%%%%%%%%%%%%%%%%%%%%%%%%%%%%%%%%%%%%%%%%%%%%%%%%%%%%
\subsection{Suspicious Eavesdropper}
\label{subsec:suspiciouseavesdropper}
Consider an eavesdropper that becomes suspicious that not all of the received packets are the state.
This suspicious eavesdropper applies a statistical test to each packet that it receives to form a belief of whether to use the packet or to discard.
Such analysis could be performed by testing the sequence of received packet $\mathcal{I}_k^e$, using online statistical techniques such as Quickest Change Detection \cite{Kennedy2022BayesianQuickestChange,Naha2021QuickestDetectionDeception}.

As this eavesdropper is performing a statistical test on the content of the received packet $z_k$, the posterior probability to use the packet would be correlated with the value of that packet and thus the encoding $\nu_k$ and $\chi_k$.
For simplicity in analysis, we assume that the eavesdropper has a fixed random chance of correctly identifying a packet upon receipt, independent of the packet value, encoding, or previous test outcome.
As such, our assumption is that the probability of belief is i.i.d. and uncorrelated from the packet $z_k$.
While a major simplifying assumption, this permits the below result, which gives an indication to the potential eavesdropper performance in the situation of non-perfect statistical tests.
In the following section, we analyze a smart eavesdropper that has perfect detection through statistical analysis of received packets.

Let us define the probability for the eavesdropper to use a packet that contains the state as
\begin{equation*}
    \mathbb{P}(b_k=1|z_k, \gamma_k^e = 1 ,\nu_k=0) = \mu_b,
\end{equation*}
and to use a packet that contains the innovation as
\begin{equation*}
    \mathbb{P}(b_k=1|z_k, \gamma_k^e = 1 ,\nu_k=1) = \bar{\mu}_b 
\end{equation*}
where $0< \mu_b<1$ and $0<\bar{\mu}_b<1$.
By the independence assumption, the probability of the suspicious eavesdropper using the state or innovation are then
\begin{align*}
    p_r^e &= \mathbb{P}(\gamma_k^e=1,\nu_k=0,b_k=1) \\
    &= \mathbb{P}(\gamma_k^e=1)\mathbb{P}(\nu_k=0)\mathbb{P}(b_k=1 |z_k, \gamma_k^e = 1 ,\nu_k=0) \\
    &= \mu_e \mu_d \mu_b , \\
    p_i^e &= \mathbb{P}(\gamma_k^e=1,\nu_k=1,b_k=1) \\
    &= \mathbb{P}(\gamma_k^e=1)\mathbb{P}(\nu_k=1)\mathbb{P}(b_k=1 |z_k, \gamma_k^e = 1 ,\nu_k=1) \\
    &= \mu_e (1-\mu_d) \bar{\mu}_b ,
\end{align*}
and the probability to drop or discard a packet is
\begin{equation*}
    p_d^e = 1 - \mu_e\mu_d\mu_b -\mu_e\bar{\mu}_b + \mu_e \mu_d\bar{\mu}_b.
\end{equation*}
The above probabilities are a consequence of the assumption that the channel quality, schedule to transmit the state, and eavesdropper belief, are i.i.d. random variables and uncorrelated from each other and the process.

We state the estimation error performance of the suspicious eavesdropper from the result in Theorem~\ref{thm:classeavesdropperexpectation}.

\begin{corollary}
    \label{corr:suspiciouseavesexpectation}
    The expectation of the estimation error covariance of the suspicious eavesdropper diverges, $E[P_{k|k}^e] \rightarrow \infty$ as $k\rightarrow\infty$, satisfying condition (ii) of Definition~\ref{definition:perfectsecrecy}.
\end{corollary}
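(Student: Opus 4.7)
The plan is to specialize the general formula of Theorem~\ref{thm:classeavesdropperexpectation} to the suspicious eavesdropper's probabilities and then isolate a single subterm whose trace already diverges, exactly paralleling the proof of Corollary~\ref{corr:naiveeavesexpectation}. Only the specific values of $p_r^e$, $p_i^e$, and $p_d^e$ change; the mechanism behind the divergence is unchanged.

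First I would verify that the i.i.d.\ and mutual-independence assumptions on $b_k$, $\gamma_k^e$, and $\nu_k$ (stated in Section~\ref{subsec:suspiciouseavesdropper}) allow the three event probabilities to factorize cleanly as $p_r^e = \mu_e \mu_d \mu_b$, $p_i^e = \mu_e (1-\mu_d)\bar{\mu}_b$, and $p_d^e = 1 - p_r^e - p_i^e$. This places Theorem~\ref{thm:classeavesdropperexpectation} directly at my disposal with no change in bookkeeping beyond substituting these values.

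The key qualitative fact I would then invoke is that $p_i^e > 0$, which follows immediately from the stated ranges $0 < \mu_e$, $0 < \mu_d < 1$, and $0 < \bar{\mu}_b < 1$. This is the only property of the suspicious eavesdropper that matters for divergence; intuitively, even a randomly guessing test still accepts innovations with strictly positive probability. With this in hand, I would take the trace of $E[P_{k|k}^e]$ from Theorem~\ref{thm:classeavesdropperexpectation}, discard the first two (positive semi-definite) terms, and retain only the $\ell = 0$ contribution of the third, obtaining the lower bound
\[
\textrm{trace}(E[P_{k|k}^e]) \;\geq\; 2 p_i^e \sum_{m=1}^{k-1} \textrm{trace}\bigl(A^m Q (A^m)^\mathsf{T}\bigr).
\]
Since $Q$ is positive definite and $\rho(A) \geq 1$, each summand satisfies $\textrm{trace}(A^m Q (A^m)^\mathsf{T}) = \textrm{trace}\bigl(Q (A^m)^\mathsf{T} A^m\bigr) \geq \lambda_{\min}(Q)\,\rho(A)^{2m} \geq \lambda_{\min}(Q) > 0$, so the partial sum grows at least linearly in $k$, and therefore $\textrm{trace}(E[P_{k|k}^e]) \to \infty$, which is condition (ii) of Definition~\ref{definition:perfectsecrecy}.

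The main obstacle, such as it is, is purely notational: tracking the slightly more intricate probability $p_d^e = 1 - \mu_e\mu_d\mu_b - \mu_e\bar{\mu}_b + \mu_e\mu_d\bar{\mu}_b$ when specializing Theorem~\ref{thm:classeavesdropperexpectation}. The divergence argument itself is the same three-outcome argument as in Corollary~\ref{corr:naiveeavesexpectation}; once $p_i^e > 0$ is verified the proof is essentially identical, and this is presumably why the appendix combines both corollaries in \ref{app:naivesusexpected}.
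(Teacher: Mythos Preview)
Your proposal is correct and follows essentially the same approach as the paper: specialize Theorem~\ref{thm:classeavesdropperexpectation} with the suspicious eavesdropper's probabilities, note that $p_i^e>0$, and isolate the innovation term $\sum A^m Q (A^m)^{\mathsf T}$ to exhibit divergence when $\rho(A)\geq 1$. Your lower bound via $\lambda_{\min}(Q)\,\rho(A)^{2m}$ is a slightly more direct version of the paper's controllability-based argument in Appendix~\ref{app:naivesusexpected}, but the structure of the proof is the same.
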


\begin{proof}
    See Appendix~\ref{app:naivesusexpected}.
\end{proof}

As the suspicious eavesdropper has a random chance of incorrectly identifying encoded innovation packets as the state, it will inadvertently use these packets which significantly degrades its state estimate.
The result of Corollary~\ref{corr:suspiciouseavesexpectation} gives that the expectation of the estimation error covariance is a function of time $k$ with some terms diverging as $k$ increases.

In contrast to the naive eavesdropper's expected estimation error covariance, see Corollary~\ref{corr:naiveeavesexpectation}, the probability of using an innovation, $p_i^e$, is smaller
$\mu_e (1-\mu_d) \geq \mu_e (1-\mu_d) \bar{\mu}_b$, for $\bar{\mu}_b < 1$, but the probability of the dropout, $p_d^e$, is much larger.
For some choices of the eavesdropper's beliefs $\mu_b$ and $\bar{\mu}_b$ the expectation of the suspicious eavesdropper's performance will be worse than for the naive eavesdropper.

\begin{remark}
Consider the scenario where the suspicious eavesdropper correctly identifies all packets that contain the state, such that $\mu_b = 1$ but makes errors on the innovation packets such that $\bar{\mu}_b > 0$ and $p_i^e > 0$. 
By Corollary~\ref{corr:suspiciouseavesexpectation} the expectation of the eavesdropper's estimation error covariance will diverge.
We note that for errors in identification of the encoded innovations such that the eavesdropper uses these packets will diverge the eavesdropper's estimate.
\end{remark}

%%%%%%%%%%%%%%%%%%%%%%%%%%%%%%%%%%%%%%%%%%%%%%%%%%%%%%%%%%%%%%%%%%%%%%%%%%%%%%%%

\subsection{Smart Eavesdropper}
\label{subsec:smarteaves}
Consider a smart eavesdropper that analyses the packets,
but in contrast to the suspicious eavesdropper has perfect performance.
The smart eavesdropper perfectly identifies all received packets that are the state measurement 
\begin{equation*}
    \mathbb{P}(b_k=1| z_k, \gamma_k^e = 1,\nu_k=0) = 1 ,
\end{equation*}
and uses these packets.
The smart eavesdropper perfectly identifies all received packets that are not the state
\begin{equation*}
    \mathbb{P}(b_k=1 | z_k, \gamma_k^e = 1,\nu_k=1) = 0 ,
\end{equation*}
and discards these packets.
Effectively, the smart eavesdropper can identify the sequence $\nu_k$.
However, we consider that it does not know the realization of $\chi_k$ and  is unaware of the full encoding mechanism \eqref{eq:encondingscheme}, so cannot decode the innovations.
We consider that it would be challenging for an eavesdropper to identify the value of $\chi_k$ inside the packet $z_k$ as the random variable is independent and uncorrelated from the state process $x_k$ and scheduling sequence $\nu_k$.

The probability of the smart eavesdropper using the state or innovation is
\begin{align*}
    p_r^e &= \mathbb{P}(\gamma_k^e=1,\nu_k=0,b_k=1) = \mu_e\mu_d, \\
    p_i^e &= \mathbb{P}(\gamma_k^e=1,\nu_k=0,b_k=1) = 0,
\end{align*}
and probability of packet drop or discard is
\begin{equation*}
    p_d^e = 1 - \mu_e \mu_d .
\end{equation*}
We note that the second outcome introduced in Section~\ref{subsec:eavesexpectedresults} is eliminated.
We note that this is the best type of eavesdropper in the class that we analyze.
For an eavesdropper to obtain better performance, an adversary would need to decode the innovation, which is outside of the class that we consider.

The smart eavesdropper effectively functions as a remote state estimator where the state is transmitted every time instance with a channel quality of $p_r^e = \mu_e \mu_d$.
This result is a consequence of our encoding scheduling sequence $\nu_k$ being i.i.d. and uncorrelated to the eavesdropper's channel.
Following \cite{XuEstimationuncontrolledcontrolled}, a necessary and sufficient condition for the smart eavesdropper to have a bounded estimation error covariance,
is that the encoding design probability is upper bounded by
\begin{equation}
    \frac{1}{\mu_e} \left(1 - \frac{1}{\rho(A)^2} \right) < \mu_d .
    \label{eq:eavesdroppernetworkcorrect}
\end{equation}

The result of Lemma~\ref{lemma:classeavesdropperstateest} can be reduced by noting that the case $(\gamma_k^e,\nu_k,b_k) = (1,1,1)$  is discarded.
The state estimate of the smart eavesdropper is
\begin{equation*}
    \hat{x}_k^e = \begin{cases}
    A \hat{x}_{k-1}^e , \quad \textrm{when~} \gamma_k^e = 0 \textrm{~or~} (\gamma_k^e,\nu_k) = (1,1) \\ \qquad \textrm{~or~} (\gamma_k^e,\nu_k,b_k) = (1,0,0) \\
    x_k, \quad \textrm{when} \quad (\gamma_k^e,\nu_k,b_k) = (1,0,1) 
    \end{cases}
\end{equation*}
with covariance
\begin{equation*}
    P_{k|k}^e = \begin{cases}
    A P_{k-1|k-1}^e A^\mathsf{T} + Q, \\ \qquad \textrm{when} \quad \gamma_k^e = 0 \textrm{~or~} (\gamma_k^e,\nu_k) = (1,1) \\ \qquad \textrm{~or~} (\gamma_k^e,\nu_k,b_k) = (1,0,0) \\
    0, \quad \textrm{when} \quad (\gamma_k^e,\nu_k) = (1,0) 
    \end{cases}
\end{equation*}
This can be shown directly from \cite{Anderson1979OptimalFiltering} and is simpler than the state estimate of the legitimate estimator, see Theorem~\ref{thm:mmseestimate}.
Unlike the naive and suspicious eavesdroppers above, the smart eavesdropper can correctly quantify its own estimation error covariance, $P_{k|k}^e$, as it is aware of the nature of the packets it is using.

To compare the performance of the smart eavesdropper to the legitimate estimator, we establish the expectation of the estimation error covariance of the smart eavesdropper.
In the case that $\mu_e$ or $\mu_d$ do not satisfy \eqref{eq:eavesdroppernetworkcorrect}, then $E[P_{k|k}^e]$ is unbounded.
In the case that $\mu_e$ and $\mu_d$ satisfy \eqref{eq:eavesdroppernetworkcorrect} then we have the following result.
\begin{lemma}
The expectation of the estimation error covariance of the smart eavesdropper as $k\rightarrow\infty$ is
\begin{equation*}
    E[P_{k|k}^e] = (1 - \mu_e \mu_d) S^e
\end{equation*}
where $\mu_e$ and $\mu_d$ satisfy \eqref{eq:eavesdroppernetworkcorrect}, and $S^e$ is the unique stabilizing solution to the Lyapunov Equation
\begin{equation}
    S^e = \left(\sqrt{1 - \mu_e \mu_d}A\right) S^e \left(A^\mathsf{T} \sqrt{1 - \mu_e \mu_d} \right) + Q .
    \label{eq:solnlyapeaves}
\end{equation}
\label{lemma:smarteavesexpectedest}
\end{lemma}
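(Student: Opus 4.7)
The plan is to mirror the proof structure of Theorem~\ref{thm:expectedlegitest}, exploiting the fact that the smart eavesdropper's covariance recursion collapses to only two outcomes rather than three. First, I would read off from the piecewise expression for $P_{k|k}^e$ stated just before the lemma that, once we merge dropout and correct-discard into a single event (they induce identical covariance updates), the smart eavesdropper faces exactly two cases: with probability $p_r^e = \mu_e\mu_d$ the state is received and used, yielding $P_{k|k}^e = 0$; with probability $p_d^e = 1 - \mu_e\mu_d$ the packet is either dropped or discarded, yielding $P_{k|k}^e = A P_{k-1|k-1}^e A^\mathsf{T} + Q$.

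Next, I would apply the law of total expectation, using that $\gamma_k^e$ and $\nu_k$ are i.i.d.\ and independent of $\mathcal{I}_{k-1}^e$, to obtain the one-step recursion
\begin{equation*}
E[P_{k|k}^e] = (1-\mu_e\mu_d)\bigl(A\, E[P_{k-1|k-1}^e]\, A^\mathsf{T} + Q\bigr).
\end{equation*}
Setting $\alpha := 1 - \mu_e\mu_d$ and unrolling from the initial covariance $\Sigma_0$ gives
\begin{equation*}
E[P_{k|k}^e] = \alpha^{k} A^k \Sigma_0 (A^k)^\mathsf{T} + \alpha \sum_{j=0}^{k-1} \alpha^j A^j Q (A^j)^\mathsf{T}.
\end{equation*}

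Then I would let $k \to \infty$. The standing bound \eqref{eq:eavesdroppernetworkcorrect} rearranges into $\alpha\, \rho(A)^2 < 1$, equivalently $\rho(\sqrt{\alpha} A) < 1$. This simultaneously forces the transient term $\alpha^{k} A^k \Sigma_0 (A^k)^\mathsf{T}$ to vanish, guarantees absolute convergence of the matrix series $\sum_{j\ge 0} \alpha^j A^j Q (A^j)^\mathsf{T}$, and ensures that the discrete Lyapunov equation \eqref{eq:solnlyapeaves} has a unique positive semidefinite stabilizing solution $S^e$. Identifying that series limit with $S^e$ then yields $E[P_{k|k}^e] \to \alpha S^e = (1-\mu_e\mu_d) S^e$.

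I do not anticipate any genuine obstacle: the argument is essentially Theorem~\ref{thm:expectedlegitest} under the formal substitution $\mu \mapsto \mu_e\mu_d$, $\mu_d \mapsto 1$, reflecting the observation already made in the text that the smart eavesdropper behaves as a remote estimator that always receives the bare state over a composite Bernoulli channel with success probability $\mu_e\mu_d$. The only point needing mild care is checking that \eqref{eq:eavesdroppernetworkcorrect} translates verbatim into the spectral-radius condition required both for convergence of the unrolled sum and for well-posedness of the stabilizing Lyapunov solution; this is immediate from algebraic manipulation of the inequality.
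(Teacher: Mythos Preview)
Your proposal is correct and follows essentially the same route as the paper: reduce to the two-outcome recursion, unroll to an explicit sum, and invoke the spectral-radius condition from \eqref{eq:eavesdroppernetworkcorrect} to kill the transient and identify the series with the stabilizing Lyapunov solution. The only cosmetic difference is that the paper obtains the unrolled expression by specializing Theorem~\ref{thm:classeavesdropperexpectation} with $p_i^e=0$, whereas you derive the one-step recursion directly and unroll it yourself; the resulting finite-$k$ formulas and the limiting argument coincide.
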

The proof follows that of Theorem~\ref{thm:expectedlegitest} and Theorem~\ref{thm:classeavesdropperexpectation}, but is simpler as the eavesdropper has only two possible channel outcomes.
As $p_i^e = 0$ for the smart eavesdropper, then the diverging terms in Theorem~\ref{thm:classeavesdropperexpectation} are removed, and the expectation then converges.

\begin{proof}
    See Appendix~\ref{app:smarteavsexpected}.
\end{proof}

To show secrecy as a function of design variable $\mu_d$ and channel qualities, $\mu$ and $\mu_e$, we give the following monotonicity result of the Lyapunov equation.
\begin{lemma}
    \label{lemma:monotonicityresult}
    Consider a $\beta,\beta^\star$ where $0 < \beta , \beta^\star < 1$, $\rho(\sqrt{1-\beta} A) < 1$ and $\rho(\sqrt{1-\beta^\star} A) < 1$
    and introduce the following two Lyapunov equations
    \begin{align*}
        W &= \sqrt{1-\beta} A W A^\mathsf{T} \sqrt{1-\beta} + Q \\
        W^\star &= \sqrt{1-\beta^\star} A W^\star A^\mathsf{T} \sqrt{1-\beta^\star} + Q 
    \end{align*}
    where $W$ and $W^\star$ are the unique stabilizing solutions.
    In the case that $\beta^\star < \beta$ then 
    \begin{equation*}
        \textrm{trace~} W < \textrm{trace~} W^\star .
    \end{equation*}
\end{lemma}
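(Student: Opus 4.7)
My plan is to exploit the explicit series representation of the unique stabilizing solution to a discrete Lyapunov equation. Under the spectral radius hypothesis $\rho(\sqrt{1-\beta}A) < 1$, the solution to $W = \sqrt{1-\beta} A W A^\mathsf{T} \sqrt{1-\beta} + Q$ can be written as
\begin{equation*}
W = \sum_{k=0}^{\infty} (1-\beta)^k A^k Q (A^\mathsf{T})^k,
\end{equation*}
and analogously for $W^\star$ with $\beta$ replaced by $\beta^\star$. Taking traces and exchanging sum and trace (the series converges absolutely in any matrix norm by the spectral radius condition), I obtain two scalar series that differ only in the geometric weights $(1-\beta)^k$ versus $(1-\beta^\star)^k$.

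Next I would compare the series termwise. Since $\beta^\star < \beta$, we have $(1-\beta)^k \leq (1-\beta^\star)^k$ for every $k \geq 0$, with strict inequality for every $k \geq 1$. The coefficient $\textrm{trace}(A^k Q (A^\mathsf{T})^k) = \textrm{trace}(Q (A^\mathsf{T})^k A^k)$ is non-negative for every $k$, so
\begin{equation*}
\textrm{trace~} W = \sum_{k=0}^{\infty} (1-\beta)^k \textrm{trace}(A^k Q (A^\mathsf{T})^k) \leq \sum_{k=0}^{\infty} (1-\beta^\star)^k \textrm{trace}(A^k Q (A^\mathsf{T})^k) = \textrm{trace~} W^\star.
\end{equation*}

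The only real obstacle is upgrading this inequality to strict. I would do so at the $k=1$ term: $\textrm{trace}(AQA^\mathsf{T}) = \textrm{trace}(QA^\mathsf{T} A)$. Since $Q$ is positive definite by the paper's standing assumptions and $A$ is nonzero (it has at least one eigenvalue on or outside the unit circle, so in particular $A^\mathsf{T} A$ is a nonzero positive semidefinite matrix), the product of a positive definite matrix with a nonzero positive semidefinite matrix has strictly positive trace. Thus the $k=1$ contribution is strict, $(1-\beta)\textrm{trace}(AQA^\mathsf{T}) < (1-\beta^\star)\textrm{trace}(AQA^\mathsf{T})$, which forces $\textrm{trace~} W < \textrm{trace~} W^\star$ as required.

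As an alternative route in case a reviewer prefers not to invoke the series, I would subtract the two Lyapunov equations and observe that the difference $\Delta := W^\star - W$ satisfies $\Delta = (1-\beta^\star) A \Delta A^\mathsf{T} + (\beta-\beta^\star) A W A^\mathsf{T}$. Because $\rho(\sqrt{1-\beta^\star}A) < 1$ and the forcing term $(\beta-\beta^\star) A W A^\mathsf{T}$ is positive semidefinite (indeed has positive trace, since $W \succeq Q \succ 0$ and $A \neq 0$), the unique solution $\Delta$ is positive semidefinite with strictly positive trace, yielding the same conclusion.
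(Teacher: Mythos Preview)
Your first argument is correct and is essentially the paper's own approach: the paper also expands both solutions as $\sum_{j} (1-\beta)^{j} A^{j} Q (A^{\mathsf T})^{j}$ (via partial sums $W_k$, $W_k^\star$) and compares termwise, introducing the auxiliary quantities $\alpha=\sqrt{(1-\beta^\star)/(1-\beta)}$ and $\tilde A=\sqrt{1-\beta}\,A$ and running an induction on $V_k=W_k^\star-W_k$; your direct series comparison is the same computation with less notational overhead. Your strictness step at $k=1$ using $Q\succ0$ and $A\neq0$ is cleaner than the paper's appeal to controllability of $(A,\sqrt{Q})$, though both are valid under the standing assumptions. Your alternative route, subtracting the two Lyapunov equations to obtain $\Delta=(1-\beta^\star)A\Delta A^{\mathsf T}+(\beta-\beta^\star)AWA^{\mathsf T}$ and invoking positivity of the unique solution, is a genuinely different and more structural argument that the paper does not use; it has the advantage of avoiding the infinite series entirely and would generalize more readily (e.g., to operator settings).
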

\begin{proof}
    See Appendix~\ref{app:lyapeqmonotonicity}.
\end{proof}

Using Lemmas~\ref{lemma:smarteavesexpectedest} and~\ref{lemma:monotonicityresult} and Theorem~\ref{thm:expectedlegitest}, we compare the expected estimation error of the smart eavesdropper against the legitimate estimator.
The differences in performance are related to the difference in channel qualities, and scheduling sequence design.
We explore the cases where the eavesdropper channel quality is worse than, or equal to, the legitimate estimator's channel quality.

\begin{theorem}
    In the case that the eavesdropper has a worse or equal quality channel to the legitimate estimator, $\mu_e \leq \mu$ and the scheduling sequence is chosen in the range
    \begin{equation}
        \frac{1}{\mu_e}\left(1 - \frac{1}{\rho(A)^2}\right) < \mu_d < 1 
        \label{eq:thm:smarteavaesdropperrelativeperformance:bounded}
    \end{equation}
    then the trace of the expected estimation error of the legitimate estimator is strictly less than the eavesdropper
    \begin{equation*}
        \textrm{trace~} E[P_{k|k}] < \textrm{trace~} E[P_{k|k}^e] .
    \end{equation*}
    This satisfies condition (ii) of Definition~\ref{definition:relativesecrecy}.
    \label{thm:smarteavaesdropperrelativeperformance:bounded}
\end{theorem}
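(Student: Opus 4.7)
The plan is to directly substitute the closed-form expressions for the two expected covariances and then split the desired inequality into a comparison of scalar coefficients and a comparison of matrix traces, where the latter follows from the monotonicity result of Lemma~\ref{lemma:monotonicityresult}.

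First I would invoke Theorem~\ref{thm:expectedlegitest} to write $E[P_{k|k}] = (1-\mu) S$, where $S$ is the stabilizing solution of \eqref{eq:solnlyap} with parameter $\beta := \mu\mu_d$, and Lemma~\ref{lemma:smarteavesexpectedest} to write $E[P_{k|k}^e] = (1-\mu_e\mu_d) S^e$, where $S^e$ is the stabilizing solution of \eqref{eq:solnlyapeaves} with parameter $\beta^\star := \mu_e\mu_d$. Before proceeding I would verify that both Lyapunov equations have stabilizing solutions under the standing assumption \eqref{eq:thm:smarteavaesdropperrelativeperformance:bounded}: the lower bound on $\mu_d$ gives $\mu_e\mu_d > 1 - 1/\rho(A)^2$, hence $\rho(\sqrt{1-\mu_e\mu_d}\,A) < 1$, and since $\mu \geq \mu_e$ we also have $\mu\mu_d \geq \mu_e\mu_d$, giving $\rho(\sqrt{1-\mu\mu_d}\,A) < 1$. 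This places us in the hypothesis set of Lemma~\ref{lemma:monotonicityresult}.

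The goal $\textrm{trace~} (1-\mu) S < \textrm{trace~} (1-\mu_e\mu_d) S^e$ I would establish by chaining two inequalities. Since $\mu_d \in (0,1)$ and $\mu_e \leq \mu$, the scalar factors satisfy $1-\mu \leq 1-\mu_e < 1 - \mu_e\mu_d$, with the second inequality strict because $\mu_d < 1$. For the matrix factors, since $\mu_e\mu_d \leq \mu\mu_d$, Lemma~\ref{lemma:monotonicityresult} (applied with $\beta = \mu\mu_d$ and $\beta^\star = \mu_e\mu_d$) gives $\textrm{trace~} S \leq \textrm{trace~} S^e$, with equality only when $\mu_e = \mu$. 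Combining,
\begin{equation*}
\textrm{trace~} (1-\mu) S \;\leq\; (1-\mu)\,\textrm{trace~} S^e \;<\; (1-\mu_e\mu_d)\,\textrm{trace~} S^e,
\end{equation*}
where the final strict inequality uses the strict scalar comparison above together with $\textrm{trace~} S^e > 0$ (which holds because $Q \succ 0$ appears as the inhomogeneous term of the Lyapunov equation for $S^e$).

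The main obstacle I anticipate is bookkeeping the strictness across the two regimes. When $\mu_e < \mu$, both the scalar and matrix comparisons are strict (the matrix one via Lemma~\ref{lemma:monotonicityresult}), so the result is immediate. When $\mu_e = \mu$, Lemma~\ref{lemma:monotonicityresult} yields only $\textrm{trace~} S = \textrm{trace~} S^e$, and the strictness must come entirely from $1-\mu < 1-\mu_e\mu_d$, which is why the upper bound $\mu_d < 1$ in \eqref{eq:thm:smarteavaesdropperrelativeperformance:bounded} is essential and must be emphasized in the argument. Once both regimes are handled, the inequality in condition (ii) of Definition~\ref{definition:relativesecrecy} follows.
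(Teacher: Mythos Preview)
Your proposal is correct and follows essentially the same route as the paper: invoke Theorem~\ref{thm:expectedlegitest} and Lemma~\ref{lemma:smarteavesexpectedest} for the closed forms, compare the scalar prefactors $1-\mu$ and $1-\mu_e\mu_d$, and compare $\textrm{trace~}S$ with $\textrm{trace~}S^e$ via Lemma~\ref{lemma:monotonicityresult}, splitting into the cases $\mu_e=\mu$ (where $S=S^e$ and strictness comes from $\mu_d<1$) and $\mu_e<\mu$ (where the lemma gives strict trace inequality). Your version is in fact slightly more careful than the paper's in explicitly checking that the range \eqref{eq:thm:smarteavaesdropperrelativeperformance:bounded} places both $\beta$ and $\beta^\star$ in the hypothesis set of Lemma~\ref{lemma:monotonicityresult} and in noting that $\textrm{trace~}S^e>0$ is needed to carry the strict scalar inequality through the product.
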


\begin{proof}
    Recall \eqref{eq:decisionvariableminbound}, Theorem~\ref{thm:expectedlegitest} and Lemma~\ref{lemma:smarteavesexpectedest}.
    For any $\mu_d<1$ and $\mu_e \leq \mu$ then $1-\mu < 1-\mu_e \mu_d$.
    In the case $\mu_e = \mu$ then $S \equiv S^e$.
    In the case $\mu_e < \mu$, let $\beta=\mu\mu_d$ and $\beta^\star=\mu_e\mu_d$ and via Lemma~\ref{lemma:monotonicityresult}, $\textrm{trace~}S < \textrm{trace~}S^{e}$.
    It follows in both cases that $(1-\mu)\textrm{trace~}S < (1-\mu_e\mu_d) \textrm{trace~}S^{e}$.
\end{proof}

From Theorem~\ref{thm:smarteavaesdropperrelativeperformance:bounded}, we can conclude that our encoding scheme achieves a level of relative secrecy against the smart eavesdropper that has an equal or worse channel quality.
In the case where the eavesdropper has a strictly worse quality channel and the dynamics are unstable such that $\rho(A) > 1$, we observe an extension to Theorem~\ref{thm:smarteavaesdropperrelativeperformance:bounded}.

\begin{theorem}
    In the case that $\mu_e < \mu$, and the dynamics are unstable $\rho(A) > 1$,
    the smart eavesdropper's expected state estimate is unbounded $E[P_{k|k}^e] \rightarrow \infty$  while the legitimate estimator's estimate is bounded where the design variable $\mu_d$ is bounded by
    \begin{equation}
        \frac{1}{\mu}\left(1 - \frac{1}{\rho(A)^2} \right) < \mu_d \leq \frac{1}{\mu_e}\left(1 - \frac{1}{\rho(A)^2}\right) .
        \label{eq:thm:smarteavaesdropperrelativeperformance:unbounded}
    \end{equation}
    This satisfies condition (ii) of Definition~\ref{definition:perfectsecrecy}.
    \label{thm:smarteavaesdropperrelativeperformance:unbounded}
\end{theorem}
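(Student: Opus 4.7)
The plan is to observe that the claim decomposes into three elementary pieces that each follow from material already established in the excerpt, so the ``proof'' is essentially a careful bookkeeping of previously derived stability conditions.

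First, I would establish that the prescribed interval for $\mu_d$ is non-empty. Since $\rho(A) > 1$, the quantity $1 - 1/\rho(A)^2$ is strictly positive, and from $\mu_e < \mu$ we have $1/\mu < 1/\mu_e$, so the lower endpoint $\frac{1}{\mu}(1 - 1/\rho(A)^2)$ is strictly less than the upper endpoint $\frac{1}{\mu_e}(1 - 1/\rho(A)^2)$. Hence a valid $\mu_d$ in $(0,1]$ exists (one also checks that the upper endpoint does not exceed $1$, which follows from $\mu_e \leq 1$ and $1/\rho(A)^2 > 0$, i.e.\ $1 - 1/\rho(A)^2 < 1 \leq 1/\mu_e$).

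Second, I would argue boundedness of the legitimate estimator. The lower bound in \eqref{eq:thm:smarteavaesdropperrelativeperformance:unbounded} is exactly the stability threshold \eqref{eq:decisionvariableminbound}, so $1 - \mu\mu_d < 1/\rho(A)^2$, which ensures $\rho(\sqrt{1-\mu\mu_d}\,A) < 1$. Theorem~\ref{thm:expectedlegitest} then delivers a finite limiting $E[P_{k|k}] = (1-\mu)S$ with $S$ the unique stabilizing solution of \eqref{eq:solnlyap}, so condition (i) of Definition~\ref{definition:perfectsecrecy} holds with $\Omega$ any upper bound on $\mathrm{trace}\,E[P_{k|k}]$.

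Third, and this is the only step requiring a bit of care, I would show the smart eavesdropper's error diverges. The upper bound in \eqref{eq:thm:smarteavaesdropperrelativeperformance:unbounded} rearranges to $\mu_e\mu_d \leq 1 - 1/\rho(A)^2$, i.e.\ $1 - \mu_e\mu_d \geq 1/\rho(A)^2$, which is precisely the \emph{negation} of the necessary and sufficient stability condition \eqref{eq:eavesdroppernetworkcorrect} cited from \cite{XuEstimationuncontrolledcontrolled}. Consequently the recursion underlying Lemma~\ref{lemma:smarteavesexpectedest} fails to admit a stabilizing Lyapunov solution, and the standard argument (iterating $E[P_{k|k}^e] = (1-\mu_e\mu_d) A E[P_{k-1|k-1}^e] A^\mathsf{T} + (1-\mu_e\mu_d)Q$ and examining its spectral behavior) yields $\mathrm{trace}\,E[P_{k|k}^e] \to \infty$ as $k \to \infty$. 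This gives condition (ii) of Definition~\ref{definition:perfectsecrecy}, completing the proof.

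The main obstacle, if any, is the third step: the earlier text only asserts divergence when \eqref{eq:eavesdroppernetworkcorrect} fails by citing \cite{XuEstimationuncontrolledcontrolled}, so I would explicitly invoke that result rather than re-deriving it, and I would be careful that the boundary case $\mu_e\mu_d = 1 - 1/\rho(A)^2$ is indeed included in the unbounded regime (which it is, since \eqref{eq:eavesdroppernetworkcorrect} demands strict inequality).
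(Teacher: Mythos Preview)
Your proposal is correct and follows essentially the same approach as the paper: the paper's proof is a terse two sentences noting that the lower bound is precisely \eqref{eq:decisionvariableminbound} (ensuring the legitimate estimator is bounded) and the upper bound is the failure of \eqref{eq:eavesdroppernetworkcorrect} (forcing the eavesdropper's error to be unbounded). Your version is in fact more thorough, since you additionally verify the interval is non-empty and handle the boundary case explicitly, neither of which the paper spells out.
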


\begin{proof}
Choice of $\mu_d$ satisfying \eqref{eq:decisionvariableminbound} to ensure a bounded estimate for the legitimate estimator informs the lower bound.
Failing \eqref{eq:eavesdroppernetworkcorrect} such that the eavesdropper has an unbounded estimation error covariance informs the upper bound.
\end{proof}

Comparing the result of Theorem~\ref{thm:smarteavaesdropperrelativeperformance:unbounded} to the proposal of \cite{Tsiamis2017IFACPapersOnLineStateEstimationSecrecy} the bound on the random transmission of the state is similar to achieve perfect secrecy.
However, our encoder is different as we transmit an encoded innovation instead of no information, which the legitimate estimator can decode, providing better legitimate estimation performance while still ensuring secrecy of the state estimate against an eavesdropper.

Under a channel model with signal fading over distance, we might expect the case of eavesdropper channel quality worse than the legitimate estimator to be more common, as a stealthy eavesdropper might be physically located further away from the transmitter as considered in \cite{Tsiamis2017IFACPapersOnLineStateEstimationSecrecy}.

We observe from the results of Theorems~\ref{thm:smarteavaesdropperrelativeperformance:bounded} and~\ref{thm:smarteavaesdropperrelativeperformance:unbounded}, that through the use of the innovations in our encoder design, the legitimate estimator has lower expectation of estimation error covariance than a smart eavesdropper and thus a better state estimate in the case of better or the same channel quality.
Our proposed encoding technique is most beneficial in the case where the legitimate user has a better or equal channel quality to the eavesdropper.

%%%%%%%%%%%%%%%%%%%%%%%%%%%%%%%%%%%%%%%%%%%%%%%%%%%%%%%%%%%%%%%%%%%%%%%%%%%%%%%%
\section{SCHEDULING SEQUENCE DESIGN FOR SECRECY}
\label{sec:design}

We now discuss approaches to determine an appropriate design variable $\mu_d$ to generate the scheduling sequence, and provide a numerical illustration.
Let us briefly recall our packet encoding from \eqref{eq:encondingscheme}
\begin{equation*}
    z_k = \begin{cases}
        x_k, ~ \quad \qquad \qquad \qquad \nu_k = 0 \\
        x_k - A x_{k-1} + \chi_k, \quad \nu_k = 1
    \end{cases}
\end{equation*}
for $k \geq 1$ and $z_0 = x_0$,
and we randomize transmission of the state with $\mathbb{P}(\nu_k = 0) = \mu_d$,
and $\chi_k$ is a zero-mean Gaussian random variable designed such that the first and second moment of the packet are the same as the state.

\subsection{Scheduling Distribution Design}
\label{subsec:controlpolicy}
Consider a given dynamics $A$ and $Q$, and legitimate estimator channel quality $\mu$ and eavesdropper channel quality $\mu_e$, then the expectations of the estimation error covariance can be written as a function of $\mu_d$.
The expectation of the estimation error covariance of the legitimate estimator from Theorem~\ref{thm:expectedlegitest} can be written as
\begin{equation}
    J(\mu_d) = \textrm{trace~} E[P_{k|k}] = (1-\mu) \textrm{trace~} S,
    \label{eq:legitimateoptimisationcostfunction}
\end{equation}
and the smart eavesdropper from Lemma~\ref{lemma:smarteavesexpectedest} 
\begin{equation*}
    J_e(\mu_d) = \textrm{trace~} E[P_{k|k}^e] = (1 - \mu_e \mu_d) \textrm{trace~} S^e
\end{equation*}
where $S$ and $S^e$ are functions of $\mu_d$, see \eqref{eq:solnlyap} and \eqref{eq:solnlyapeaves}.
Before providing a method to select an encoding design $\mu_d$, we observe the following monotonicity result.
\begin{lemma}
    The trace of the expectation of the estimation error covariance for the legitimate estimator $J(\mu_d)$ and smart eavesdropper $J_e(\mu_d)$ are monotonically decreasing in $\mu_d$, such that for $\mu_d^\star \leq \mu_d$
    \begin{align*}
        J(\mu_d) \leq J(\mu_d^\star) \quad \textrm{and} \quad J_e(\mu_d) \leq J_e(\mu_d^\star) .
    \end{align*}
    \label{lemma:monotonicity}
\end{lemma}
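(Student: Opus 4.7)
The plan is to reduce both claims directly to the monotonicity property of the Lyapunov equation already established in Lemma~\ref{lemma:monotonicityresult}. Recall from Theorem~\ref{thm:expectedlegitest} that $J(\mu_d) = (1-\mu)\,\textrm{trace~} S$, where $S$ is the stabilizing solution to
\[
S = \sqrt{1-\mu\mu_d}\,A\, S\, A^\mathsf{T} \sqrt{1-\mu\mu_d} + Q,
\]
and from Lemma~\ref{lemma:smarteavesexpectedest} that $J_e(\mu_d) = (1-\mu_e\mu_d)\,\textrm{trace~} S^e$, where $S^e$ solves the analogous equation with $\mu$ replaced by $\mu_e$. The crucial observation is that each Lyapunov equation has the same structure as in Lemma~\ref{lemma:monotonicityresult}, with $\beta = \mu\mu_d$ and $\beta = \mu_e\mu_d$ respectively.

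For the legitimate estimator, I would first take $\mu_d^\star \leq \mu_d$ (both chosen in the regime where the relevant Lyapunov solutions exist, i.e.\ the estimates are bounded, so that $\rho(\sqrt{1-\mu\mu_d}A)<1$ and similarly for $\mu_d^\star$). Setting $\beta = \mu\mu_d$ and $\beta^\star = \mu\mu_d^\star$, we have $\beta^\star \leq \beta < 1$, and Lemma~\ref{lemma:monotonicityresult} immediately yields $\textrm{trace~} S \leq \textrm{trace~} S^\star$. Multiplying by the non-negative constant $(1-\mu)$ gives $J(\mu_d) \leq J(\mu_d^\star)$.

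For the smart eavesdropper, the argument is slightly more delicate because $\mu_d$ appears both inside the Lyapunov equation and in the pre-factor $(1-\mu_e\mu_d)$. The plan is to show that both factors are non-negative and monotonically decreasing in $\mu_d$ and then conclude that their product is too. The pre-factor $(1-\mu_e\mu_d)$ is strictly positive on the admissible range (since \eqref{eq:eavesdroppernetworkcorrect} forces $\mu_e\mu_d < 1$) and is affine and decreasing in $\mu_d$. For the Lyapunov factor, applying Lemma~\ref{lemma:monotonicityresult} with $\beta = \mu_e\mu_d$ and $\beta^\star = \mu_e\mu_d^\star$ again gives $\textrm{trace~} S^e \leq \textrm{trace~} S^{e\star}$, and $S^e \succeq 0$ guarantees non-negativity. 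The product of two non-negative, monotonically non-increasing functions is non-increasing, so $J_e(\mu_d) \leq J_e(\mu_d^\star)$ as required.

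The only real obstacle is bookkeeping around the domain of $\mu_d$: Lemma~\ref{lemma:monotonicityresult} requires both $\sqrt{1-\beta}A$ and $\sqrt{1-\beta^\star}A$ to be stable, so we should state the lemma on the intersection of admissible intervals given by \eqref{eq:decisionvariableminbound} and \eqref{eq:eavesdroppernetworkcorrect} (i.e.\ where both expected covariances are finite). Outside that range one side is infinite and monotonicity holds trivially. No delicate calculation is needed beyond this restriction.
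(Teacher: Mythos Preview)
Your proposal is correct and follows essentially the same route as the paper: both arguments reduce the claim to Lemma~\ref{lemma:monotonicityresult} by setting $\beta=\mu\mu_d$, $\beta^\star=\mu\mu_d^\star$ for the legitimate estimator and $\beta=\mu_e\mu_d$, $\beta^\star=\mu_e\mu_d^\star$ for the smart eavesdropper, and then note that for $J_e$ the prefactor $1-\mu_e\mu_d$ is also decreasing. Your added remarks on the positivity of the factors and on restricting to the admissible domain of $\mu_d$ are careful bookkeeping but do not change the underlying argument.
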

\begin{proof}
    Consider $\mu_d^\star < \mu_d$.
    Recall Theorem~\ref{thm:expectedlegitest}, and let $\beta=\mu\mu_d$ and $\beta^\star=\mu\mu_d^\star$ then via Lemma~\ref{lemma:monotonicityresult}, $\textrm{trace~}S < \textrm{trace~}S^{\star}$.
    Recall Lemma~\ref{lemma:smarteavesexpectedest}, and let $\beta^e=\mu_e\mu_d$, $\beta^{e,\star}=\mu_e\mu_d^\star$ then via Lemma~\ref{lemma:monotonicityresult}, $\textrm{trace~}S^e < \textrm{trace~}S^{e,\star}$, and $1-\beta < 1-\beta^\star$.
    The result follows.
\end{proof}

The result of Lemma~\ref{lemma:monotonicity} gives that as we decrease $\mu_d$ towards the minimum value in \eqref{eq:decisionvariableminbound}, and as such transmit more innovations, the expectation of the estimation error covariance of both the legitimate estimator and the smart eavesdropper increase.
Conversely as we increase $\mu_d$ towards $1$, such that we transmit fewer innovations, the expectation of the estimation error covariance of both the legitimate estimator and the smart eavesdropper reduces.
Our design variable $\mu_d$ then trades off the estimation performance of the legitimate estimator for secrecy against the eavesdropper.

We now establish a range on the encoding design $\mu_d$ to satisfy the constraints \eqref{eq:decisionvariableminbound} and $\mu_d < 1$ and condition (i) of our secrecy Definitions~\ref{definition:relativesecrecy} and~\ref{definition:perfectsecrecy}.
Applying the monotonicity result of Lemma~\ref{lemma:monotonicity}, the minimum choice of $\mu_d$ will be at the bound $J(\mu_d) = \Omega$ by a given $\Omega > 0$, while the maximum choice will be at the bound $J(\mu_d) = 0$.
The minimum choice that ensures that the expected estimation error covariance of the legitimate estimator is bounded by a given $\Omega > 0$ can be found by maximizing\footnote{Using any standard constrained nonlinear solver.} \eqref{eq:legitimateoptimisationcostfunction} over possible $\mu_d$
\begin{equation*}
    \mu_d^{\min} = \arg \max J(\mu_d) < \Omega
\end{equation*}
such that the constraints \eqref{eq:decisionvariableminbound} and $\mu_d < 1$ hold.
The maximum choice can be found by minimizing greater than 0
\begin{equation*}
    \mu_d^{\max} = \arg \min J(\mu_d) > 0  
\end{equation*}
such that the constraints \eqref{eq:decisionvariableminbound} and $\mu_d < 1$ hold.
A choice of choice in the range $\mu_d^{\min} < \mu_d < \mu_d^{\max}$ ensures condition (i) of Definitions~\ref{definition:relativesecrecy} and~\ref{definition:perfectsecrecy}.

For the case $\mu_e \leq \mu$, while any choice of encoding design $\mu_d$ in the ranges $\mu_d^{\min} < \mu_d < \mu_d^{\max}$ and \eqref{eq:thm:smarteavaesdropperrelativeperformance:bounded}, from Theorem~\ref{thm:smarteavaesdropperrelativeperformance:bounded}, will give secrecy under Definition~\ref{definition:relativesecrecy}, we may be interested in the encoding design that maximizes the secrecy gain.
To maximize the secrecy gain, we desire to find the encoding that achieves the biggest performance difference.
Let us introduce a function of the difference in expectation of estimation error covariances
\begin{align*}
    J_r(\mu_d) &= \textrm{trace~} E[P_{k|k}^e] - \textrm{trace~} E[P_{k|k}] \\
    &= (1-\mu_e \mu_d) \textrm{trace }(S^e) - (1-\mu) \textrm{trace }S
\end{align*}
where both $S^e$ and $S$ are functions of the design $\mu_d$.
We note that $J_r(\mu_d) > 0$ for any $\mu_d$ in the range \eqref{eq:thm:smarteavaesdropperrelativeperformance:bounded}, as $\textrm{trace~} E[P_{k|k}^e] >  \textrm{trace~} E[P_{k|k}]$ by Theorem~\ref{thm:smarteavaesdropperrelativeperformance:bounded}.

To obtain an encoding design $\mu_d^\star$ that maximizes the estimation error covariance difference, we find the $\mu_d$ that maximizes $J_r(\mu_d) > 0$
\begin{equation}
    \mu_d^\star = \arg \max J_r(\mu_d) > 0
    \label{eq:optimisationfunction}
\end{equation}
such that the constraints $\mu_d^{\min} < \mu_d^\star < \mu_d^{\max}$, and \eqref{eq:thm:smarteavaesdropperrelativeperformance:bounded} hold.
The choice of $\mu_d^\star$ for the encoding design will provide the biggest secrecy gain against the smart eavesdropper.

In the case of better eavesdropper channel quality $\mu < \mu_e$ there may exist a range on $\mu_d$ where our encoding design will satisfy Definition~\ref{definition:relativesecrecy}.
There may exist a value $\mu_d^\star$ that maximizes $J_r(\mu_d) > 0$ from the optimization \eqref{eq:optimisationfunction} such that only the constraint $\mu_d^{\min} < \mu_d^\star < \mu_d^{\max}$ is satisfied.
Noting that the constraint \eqref{eq:thm:smarteavaesdropperrelativeperformance:bounded} does not apply in the case $\mu < \mu_e$.
If an encoding design $\mu_d^\star$ exists, then there may also be a range $\underline{\mu}_d < \mu_d^\star < \overline{\mu}_d$ that provides $J(\mu_d) > 0$, and can be computed
\begin{align*}
    \underline{\mu}_d &= \arg \min J_r(\mu_d) > 0
\end{align*}
with constraint $\mu_d^{\min} < \underline{\mu}_d < \mu_d^\star$ and
\begin{align*}
    \overline{\mu}_d &= \arg \min J_r(\mu_d) > 0
\end{align*}
with constraint $\mu_d^\star < \overline{\mu}_d < \mu_d^{\max}$.

Finally, in the case of worse eavesdropper channel quality $\mu_e < \mu$ the best legitimate estimator performance, where the eavesdropper has an unbounded estimate under Definition~\ref{definition:perfectsecrecy}, is given by 
\begin{equation*}
    \mu_d^{\max} = \arg \min J(\mu_d) > 0  
\end{equation*}
such that the constraint \eqref{eq:thm:smarteavaesdropperrelativeperformance:unbounded} holds.

\subsection{Numerical Illustration}
\label{subsec:numerical}
We briefly illustrate the relative performance of the legitimate estimator and smart eavesdropper in a numerical example.
We do not illustrate the performance of the naive and suspicious eavesdroppers, as via Corollary~\ref{corr:naiveeavesexpectation} and~\ref{corr:suspiciouseavesexpectation} the estimation performance is divergent for any $\mu_d$.

Consider the dynamics in \eqref{eq:dynamics} with
\begin{equation*}
    A = \begin{bmatrix}1 & 0.3 \\ 0.5 & 1.001\end{bmatrix}, \textrm{~and~} Q = 10^{-3} \begin{bmatrix} 1 & 0 \\ 0 & 1 \end{bmatrix}
\end{equation*}
where we note that $\rho(A) = 1.3878 > 1$.
Consider a channel quality of $\mu = 0.9$ for the legitimate estimator.
Using \eqref{eq:decisionvariableminbound} we obtain that the design variable is lower bounded $0.5342 < \mu_d$.

Let us consider four cases of smart eavesdropper channel quality of $\mu_e^1 = 0.85$, $\mu_e^2 = \mu$, $\mu_e^3 = 0.95$, and $\mu_e^4 = 0.99$.
Figure~\ref{fig:numericalillustration} shows the absolute difference in the traces of the expected estimation error between the smart eavesdropper and the legitimate estimator  $J_r(\mu_d) / J(\mu_d)$, against the encoding design variable $\mu_d$ for the four cases.

In the case that the eavesdropper's channel quality is worse ($\mu_e^1 < \mu$ in dotted magenta) or equal ($\mu_e^2 = \mu$ in dashed blue) to the legitimate estimator, the trace of the expected estimation error covariance of the eavesdropper, while bounded, is larger for any choice of valid design.
These results illustrate Theorem~\ref{thm:smarteavaesdropperrelativeperformance:bounded} and satisfaction of Definition~\ref{definition:relativesecrecy}.

For the case where the eavesdropper has a worse quality channel $\mu_e^1 < \mu$, using \eqref{eq:thm:smarteavaesdropperrelativeperformance:unbounded} from Theorem~\ref{thm:smarteavaesdropperrelativeperformance:unbounded} encoding designs in the range $0.5342 < \mu_d < 0.5656$ force the smart eavesdropper's estimation error covariance to be unbounded while the legitimate estimator's estimation error covariance remains bounded, achieving Definition~\ref{definition:perfectsecrecy}.

In the case that the eavesdropper's channel quality is better than the legitimate estimator $\mu_e^3 = 0.95$, see the solid black line in Figure~\ref{fig:numericalillustration}, there is a visible range of $\mu_d$ where $J_r(\mu_d) > 0$.
Optimizing \eqref{eq:optimisationfunction}, the encoding design $\mu_d = 0.5745$ gives the largest positive value of $J_r(\mu_d)$, with the range $0.5342 < \mu_d < 0.8931$ giving $J_r(\mu_d)>0$.
In some scenarios where the eavesdropper has a better quality channel, our encoding design can provide relative secrecy under Definition~\ref{definition:relativesecrecy}.

For a near perfect eavesdropper channel of $\mu_e^4 = 0.99$, a choice of $\mu_d$ that provides $J_r(\mu_d) > 0$ is not apparent in Figure~\ref{fig:numericalillustration} (dot-dashed red line).
Using \eqref{eq:optimisationfunction}, the encoding design $\mu_d = 0.5571$ gives the largest positive value of $J_r(\mu_d)$, and the range $0.5342 < \mu_d < 0.9384$ gives $J_r(\mu_d)>0$.
Even in the scenario where an eavesdropper has a significantly better quality channel, our encoding design still provides relative state secrecy under Definition~\ref{definition:relativesecrecy}.

\begin{figure}
    \centering
    \includegraphics[width=0.45\textwidth]{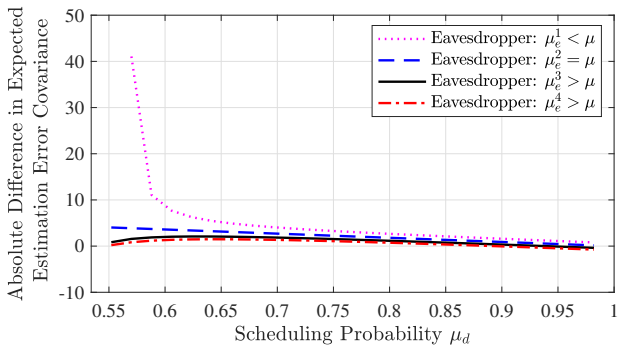}
    \caption{Comparison of the absolute difference in trace of the expected estimation error covariance of the legitimate estimator compared with the smart eavesdropper with four channel qualities (worse, equal, better, much better). Eavesdropper with worse channel quality in dotted magenta, equal channel quality in dashed blue, better channel quality in solid black, and much better channel quality in dot-dashed red. The results of Theorem~\ref{thm:smarteavaesdropperrelativeperformance:bounded} are apparent where the eavesdropper has worse performance than the legitimate estimator in the case of worse or equal channel quality.}
    \label{fig:numericalillustration}
\end{figure}

%%%%%%%%%%%%%%%%%%%%%%%%%%%%%%%%%%%%%%%%%%%%%%%%%%%%%%%%%%%%%%%%%%%%%%%%%%%%%%%%
\section{APPLICATION TO POWER SYSTEMS}
\label{sec:app}

We now consider an application of our proposed transmission encoding scheme to a microgrid.
A microgrid is a small electricity grid, typically consisting of local generation, such as solar photo-voltaics, and local storage, such as batteries, to supply a small to medium load.
The load could include a typical suburban house, several houses, or contained facility such as a hospital \cite{Bordons2020ModelPredictiveControl}.
In metropolitan areas, the microgrid can connect to the main grid with import and export capability, while in remote areas, the microgrid is isolated.
The interconnection between multiple microgrids and to the main utility grid, enables coordination to achieve global system goals.
However, this networking exposes the whole power system to cyber-attacks altering the behavior of the system \cite{Gallo2020IEEETransactionsonAutomaticControlDistributedCyberAttack}.

With advancements in solar generation and battery storage technology, there has recently been a rise in the microgrid `prosumer' \cite{Liu2017IEEETransactionsonPowerSystemsEnergySharingModel}.
The `prosumer' both produces electricity and exports to the grid, as well as consumes and imports power from the grid.
The challenge of a grid connected microgrid is to control the power flow to either maximize the use of the local storage and minimize purchase of power from the grid, or to maximize the export of power to the grid for profit \cite{Zhang2017IEEETransactionsonPowerSystemsRobustOperationMicrogrids}.

While individuals may benefit from maximizing sale of power to the grid, many users in a small geographic area exporting power can cause grid instability \cite{Olivares2014IEEETransactionsonSmartGridTrendsMicrogridControl}.
As more consumer households transition to microgrids with local power generation and storage, it becomes necessary for a network operator to monitor and control the connected microgrid to ensure stability \cite{Guerrero2013IEEETransactionsonIndustrialElectronicsAdvancedControlArchitectures}.
The transmission of consumer data, and behavior as extracted from power flow data poses a privacy risk \cite{Lisovich2010IEEESecurityPrivacyInferringPersonalInformation}.
This motivates the associated cybersecurity problem to ensure confidentiality of the storage levels and generation potential from eavesdroppers.

To autonomously control the power flows in a connected microgrid, \cite{Bordons2020ModelPredictiveControl} posed a constrained model predictive control design.
Their experimental microgrid consisted of a battery and hydrogen storage systems, green power from solar panels, household load, and a grid connection for export for sale and purchase import power.
Figure~\ref{fig:microgridillustration} illustrates the power flow connections in this example microgrid.

\begin{figure}
\centering
\resizebox{0.45\textwidth}{!}{\input{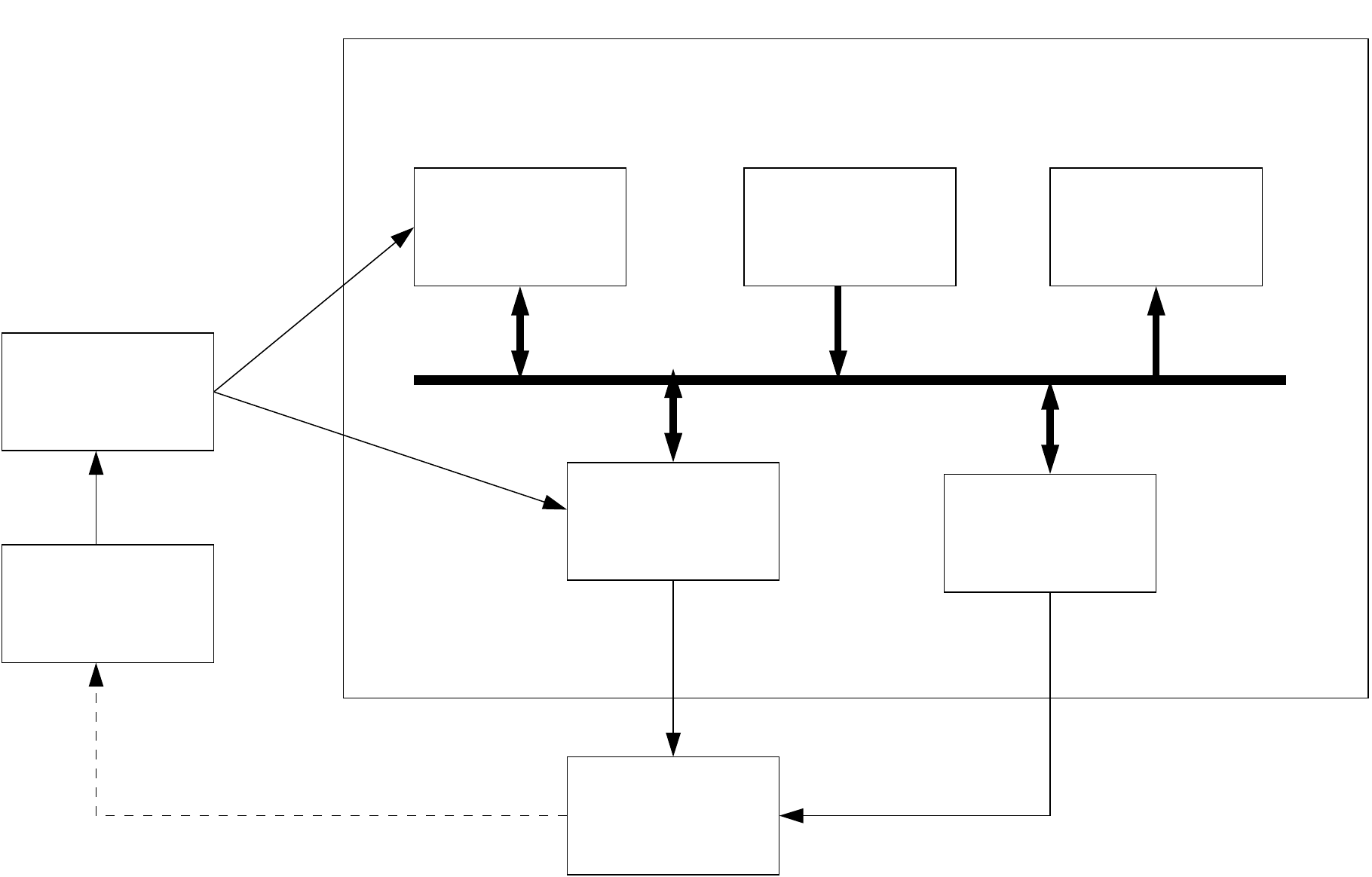_t}}
\caption{Illustration of the microgrid power flow connections, adapted from \cite{Bordons2020ModelPredictiveControl}. Local green power supplies a small to medium sized load, such as a house, with batteries and hydrogen system providing power storage. The controller manages the power flows to maximize the use of the storage systems and minimize purchase of power from the grid.}
\label{fig:microgridillustration}
\end{figure}

\subsection{Microgrid Dynamics}
The dynamics of the battery and hydrogen storage systems can be parameterized as nonlinear ordinary differential equations.
For the purposes of control, \cite{Bordons2020ModelPredictiveControl} introduced a discrete-time linearized model to describe the change in storage charge from the input power flows.
The model states are the percentage battery state of charge ($SOC$) and hydrogen level ($LOH$) such that $x = [SOC, ~LOH]^\mathsf{T}$, the control inputs are the hydrogen power flow $P_H$ and the grid power flow $P_{grid}$ such that $u = [P_H, ~P_{grid}]$, while the green power $P_{solar}$, and load $P_{load}$, are considered uncontrolled input disturbances.
The power to the battery storage is the sum of all power flows by Kirchhoff's laws
\begin{equation*}
    P_{bat} = P_{load} - P_{solar} - P_H  - P_{grid} .
\end{equation*}
All power flows are in kW.
The discrete-time linearized dynamics posed in \cite{Bordons2020ModelPredictiveControl} are
\begin{equation}
    x_{k+1} = A x_k + B u_k + B_d (P_{solar} - P_{load})
    \label{eq:microgriddynamics}
\end{equation}
where the sampling rate is $1$ second, $A$ is the identity matrix of size $2 \times 2$ and
\begin{equation*}
    B = \begin{bmatrix} 1.56 & 1.56 \\ -5.66 & 0 \end{bmatrix} \times 10^{-3}, \quad
    B_d = \begin{bmatrix} 1.56 \\ 0 \end{bmatrix} \times 10^{-3} .
\end{equation*}
We note that the system is marginally stable.
A MATLAB/Simulink implementation of the MPC controller, nonlinear storage system models, and sample data for one $24$ hour day of solar power generation and household load used in \cite{Bordons2020ModelPredictiveControl} is available online\footnote{\texttt{http://institucional.us.es/agerar/simugrid/}}.
At the chosen sampling rate there are $86400$ data points. 

\subsection{Transmission Encoding Performance}
We extend this system by considering that the two storage systems have a one-way wireless network connection to the digital controller.
At the battery and hydrogen system, a local Kalman filter computes a state estimate to filter measurement and process noise.
This local state estimate is then the transmitted state measurement of the storage system levels.
This state estimate using the microgrid dynamics \eqref{eq:microgriddynamics} can then be written in the form \eqref{eq:dynamics},
where $A$ is the identity matrix of size $2 \times 2$ and the process noise $w_k \sim \mathcal{N}(0,Q)$ encodes the Kalman innovation and the control actions.
Through testing on the simulation the covariance of the process noise was found to be approximately $Q \approx I_2 \times 10^{-5}$ where $I_2$ is the identity matrix of size $2 \times 2$.

We perform a Monte Carlo simulation of $1000$ trials of the one day of sample generation data from \cite{Bordons2020ModelPredictiveControl} to illustrate the estimation error performance difference between the legitimate estimator and the smart eavesdropper.
We consider that the two remote estimators have the same channel qualities of $\mu = \mu_e = 0.6$, and we investigate design variable probabilities in the range $\mu_d = \{0.1, 0.9\}$.

\begin{figure}
    \centering
    \includegraphics[width=0.45\textwidth]{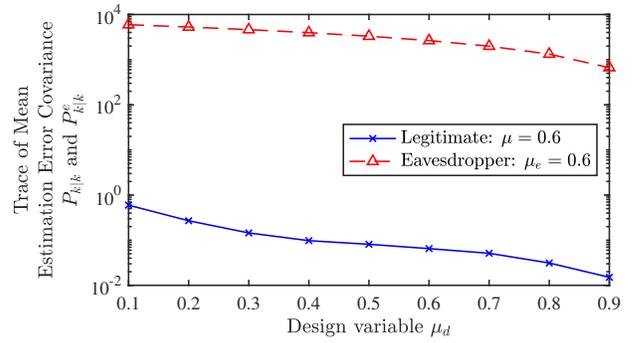}
    \caption{Monte Carlo Simulation of Microgrid with transmission encoding of remote state estimate. Eavesdropper performance is significantly reduced compared to the legitimate estimator by randomly sending true state and one step innovation.}
    \label{fig:microgridEstimationError}
\end{figure}

Figure~\ref{fig:microgridEstimationError} shows the mean of the estimation error covariances $P_{k|k}$ (solid blue) and $P_{k|k}^e$ (dashed red) across the Monte Carlo trials and across the simulation time $k$, against the design variable probability $\mu_d$.
As the proportion of the state measurement is sent increases, $\mu_d \rightarrow 1$, the estimation error decreases for both the legitimate estimator and the eavesdropper.
However, the mean estimation error for the eavesdropper is considerably larger than for the legitimate estimator, greater than $10^3$ compared to less than $10^0$.

The error in the state measurement does degrade the controller performance.
Considering the power flow to the grid connection as a measure of controller performance, as grid flow equates to power sold or purchased, we compare the total power flow over the day using our encoding scheme against no transmission encoding.
The difference in grid power flow is below $1.58\%$ for decision probability $\mu_d = 0.1$, highlighting that there is marginal impact on control performance even at the most restrictive encoding scheme.

%%%%%%%%%%%%%%%%%%%%%%%%%%%%%%%%%%%%%%%%%%%%%%%%%%%%%%%%%%%%%%%%%%%%%%%%%%%%%%%%
\section{CONCLUSIONS}
\label{sec:conc}
This article investigated the problem of remote state estimation in the presence of an eavesdropper, under a challenging network environment.
We consider the situation where the transmitter and legitimate estimator receiver do not have a packet receipt acknowledgment channel.
This scenario could arise due to hardware limitations or the actions of an adversary jamming the network.

We have developed a state-secrecy code that randomly alternates between sending the state and a random value packet that appears to statistically be the state.
The random value packet both damages the eavesdropper's state estimate, while containing encoded state information for the legitimate estimator.
Our encoding scheme ensures that the legitimate estimator's expected estimation performance remains bounded.
We design our encoding to provide a measure of expected secrecy against an eavesdropper.

An open problem is to ensure state secrecy against intelligent eavesdroppers that learn the encoding scheme.

% \addtolength{\textheight}{-12cm}   % This command serves to balance the column lengths
%                                   % on the last page of the document manually. It shortens
%                                   % the textheight of the last page by a suitable amount.
%                                   % This command does not take effect until the next page
%                                   % so it should come on the page before the last. Make
%                                   % sure that you do not shorten the textheight too much.

%%%%%%%%%%%%%%%%%%%%%%%%%%%%%%%%%%%%%%%%%%%%%%%%%%%%%%%%%%%%%%%%%%%%%%%%%%%%%%%%
% \section*{APPENDIX}
\appendix

\subsection{Encoding Scheme Additive Noise Design}
\label{app:encodingnoise}
For any finite $k>0$, the expectation of the state $x_k$ is
\begin{align*}
E[x_k] &= A^k E[ x_0] + \sum_{\ell=0}^{k-1} A^{k-1-\ell} E[w_{\ell}] % \\
    % &= A^k 0 + \sum_{\ell=0}^{k-1} A^{k-1-\ell} 0
    = 0 
\end{align*}
recalling that the initial state $x_0$ and every $w_k$ are zero mean.
As such, in case $\nu_k=0$ of sending the state $z_k = x_k$ then
\begin{equation*}
    E[z_k] = E[x_k] = 0 .
\end{equation*}
Now consider case $\nu_k=1$ with the innovation encoded by additive noise $z_k = x_k - A x_{k-1} + \chi_k$ then
\begin{align*}
    E[z_k] &= E[x_k - A x_{k-1} + \chi_k] % \\
    = E[w_{k-1}] + E[\chi_k] = 0 % \\
    % &= 0 + 0
\end{align*}
by the design that $\chi_k$ is zero-mean.

The covariance of the state can be found to be \cite{Anderson1979OptimalFiltering}
\begin{equation*}
    E[x_k x_k^\mathsf{T}] = A^k \Sigma_0 (A^k)^\mathsf{T} + \sum_{\ell=0}^{k-1} A^{k-1-\ell} Q (A^{k-1-\ell})^\mathsf{T}.
\end{equation*}
In the case $\nu_k=0$ this would be the covariance of the packet.
Consider the covariance of the packet in the case $\nu_k=1$ where the packet $z_k = x_k - Ax_{k-1} + \chi_k = w_{k-1} + \chi_k$ is the innovation encoded by additive noise 
\begin{align*}
&E[z_k z_k^\mathsf{T}] % = E[(x_k - Ax_{k-1} + \chi_k) (x_k - Ax_{k-1} + \chi_k)^\mathsf{T}] \\
= E[(w_{k-1} + \chi_k)(w_{k-1} + \chi_k)^\mathsf{T}] \\
&= E[w_{k-1} w_{k-1}^\mathsf{T}] + E[w_{k-1} \chi_k^\mathsf{T}] + E[\chi_k w_{k-1}^\mathsf{T}] + E[\chi_k \chi_k^\mathsf{T}] \\
&= Q + 0 + 0 + A^k \Sigma_0 (A^k)^\mathsf{T} + \sum_{\ell=0}^{k-2} A^{k-1-\ell} Q (A^{k-1-\ell})^\mathsf{T} \\
&= A^k \Sigma_0 (A^k)^\mathsf{T} + \sum_{\ell = 0}^{k-1} A^{k-1-\ell} Q (A^{k-1-\ell})^\mathsf{T} % \\
= E[x_k x_k^\mathsf{T}] 
\end{align*}
where $\chi_k$ is uncorrelated from $w_\ell$ for all $k,\ell \geq 1$ and by design the covariance of $\chi_k$ is chosen as
\begin{equation*}
    E[\chi_k \chi_k^\mathsf{T}] = A^k \Sigma_0 (A^k)^\mathsf{T} + \sum_{\ell=0}^{k-2} A^{k-1-\ell} Q (A^{k-1-\ell})^\mathsf{T}.
\end{equation*}

\subsection{Legitimate Estimator MMSE}
\label{app:legitmmse}
Proof of Theorem~\ref{thm:mmseestimate}.
The proof shows the \ac{MMSE} of the state estimate for the legitimate estimator.

\begin{proof}
We consider the three outcomes separately, let us first consider a packet drop $\gamma_k = 0$.
From \eqref{eq:estupdate} and \eqref{eq:covupdate} with $\gamma_k = 0$ then $\hat{x}_{k|k} = \hat{x}_{k|k-1} = E[x_k | \mathcal{I}_{k-1}]$ where
\begin{align*}
    \hat{x}_{k|k-1} % &= E[x_k | \mathcal{I}_{k-1}] \\
    % &= E[A x_{k-1} + w_{k-1} | \mathcal{I}_{k-1}] \\
    % &= E[A x_{k-1} | \mathcal{I}_{k-1}] + E[ w_{k-1} | \mathcal{I}_{k-1}] \\
    % &= A E[x_{k-1} | \mathcal{I}_{k-1}] \\
    % &= A \hat{x}_{k-1|k-1}
    &= E[A x_{k-1} | \mathcal{I}_{k-1}] + E[ w_{k-1} | \mathcal{I}_{k-1}] = A \hat{x}_{k-1|k-1} 
\end{align*}
recalling that $w_k$ is defined as a zero-mean Gaussian random variable such that $E[w_{k-1}|\mathcal{I}_{k-1}] = 0$, and $E[x_{k-1} | \mathcal{I}_{k-1}] = \hat{x}_{k-1|k-1}$, 
and the estimation error covariance is $P_{k|k} = \Sigma_{k,xx} = P_{k|k-1}$ where
\begin{align*}
    &P_{k|k-1} = E[(x_k - \hat{x}_{k|k-1})(x_k - \hat{x}_{k|k-1})^\mathsf{T}|\mathcal{I}_{k-1}] \\
    = &E[(A x_{k-1} + w_{k-1} - A \hat{x}_{k-1|k-1})\\&(A x_{k-1} + w_{k-1} - A \hat{x}_{k-1|k-1})^\mathsf{T}|\mathcal{I}_{k-1}] \\
    % = &E[(A (x_{k-1} - \hat{x}_{k-1|k-1}) + w_{k-1})\\&(A (x_{k-1} - \hat{x}_{k-1|k-1}) + w_{k-1})^\mathsf{T}|\mathcal{I}_{k-1}] \\
    % = &E[A (x_{k-1} - \hat{x}_{k-1|k-1}) (x_{k-1} - \hat{x}_{k-1|k-1})^\mathsf{T} A^\mathsf{T} \\&+ w_{k-1} (x_{k-1} - \hat{x}_{k-1|k-1})^\mathsf{T} A^\mathsf{T} \\&+ A (x_{k-1} - \hat{x}_{k-1|k-1}) w_{k-1}^\mathsf{T} + w_{k-1} w_{k-1}^\mathsf{T} |\mathcal{I}_{k-1}] \\
    = &A E[(x_{k-1} - \hat{x}_{k-1|k-1}) (x_{k-1} - \hat{x}_{k-1|k-1})^\mathsf{T}|\mathcal{I}_{k-1}] A^\mathsf{T} \\&+ E[w_{k-1} w_{k-1}^\mathsf{T} |\mathcal{I}_{k-1}] \\
    = &A P_{k-1|k-1} A^\mathsf{T} + Q
\end{align*}
recalling that $w_k$ and $x_k$ are uncorrelated.

Now consider a successfully received state transmission $(\gamma_k,\nu_k) = (1,0)$ where $z_k = x_k$, and from \eqref{eq:expectedpacketvalue} the expected packet is $z_k = E[x_k | \mathcal{I}_{k-1}] = \hat{x}_{k|k-1}$.
Now with $z_k = x_k$ and $\hat{z}_k = \hat{x}_{k|k-1}$ we present the parts of \eqref{eq:estupdate} and \eqref{eq:covupdate} required 
\begin{align*}
    \Sigma_{k,zz} % &= E[(z_k - \hat{z}_k)(z_k - \hat{z}_k)^\mathsf{T}|\mathcal{I}_{k-1}] \\
    = &E[(x_k - \hat{x}_{k|k-1})(x_k - \hat{x}_{k|k-1})^\mathsf{T}|\mathcal{I}_{k-1}] \\
    = &A P_{k-1|k-1} A^\mathsf{T} + Q, % = P_{k|k-1}^{(k)},
\end{align*}
and
\begin{align*}
    \Sigma_{k,xz} % = &E[(x_k - \hat{x}_{k|k-1})(z_k - \hat{z}_k)^\mathsf{T}|\mathcal{I}_{k-1}] \\
    = &E[(x_k - \hat{x}_{k|k-1})(x_k - \hat{x}_{k|k-1})^\mathsf{T}|\mathcal{I}_{k-1}] \\
    = & A P_{k-1|k-1} A^\mathsf{T} + Q . 
\end{align*}
Then applying the estimation update \eqref{eq:estupdate}
\begin{align*}
    &\hat{x}_{k|k} = \hat{x}_{k|k-1} + \gamma_k \Sigma_{k,xz} \left(\Sigma_{k,zz}\right)^{-1} \left(z_k - \hat{z}_k \right) \\
    &= \hat{x}_{k|k-1} + (A P_{k-1|k-1} A^\mathsf{T} + Q ) \\ &\quad\times \left( A P_{k-1|k-1} A^\mathsf{T} + Q \right)^{-1} \left(x_k - \hat{x}_{k|k-1} \right) \\
    &= x_k 
\end{align*}
with covariance \eqref{eq:covupdate}
\begin{align*}
    P_{k|k} &= \Sigma_{k,xx} - \gamma_k \Sigma_{k,xz} \left( \Sigma_{k,zz} \right)^{-1} \Sigma_{k,zx} \\
    &= (A P_{k-1|k-1}A^\mathsf{T} + Q) - ( A P_{k-1|k-1}A^\mathsf{T} + Q ) = 0.
\end{align*}

Finally, consider a successfully received innovation transmission $(\gamma_k,\nu_k) = (1,1)$ where $z_k = x_k - A x_{k-1} = w_{k-1}$, and from \eqref{eq:expectedpacketvalue} the expected packet is 
\begin{align*}
    \hat{z}_k &= E[x_k - A x_{k-1} | \mathcal{I}_{k-1}] + \chi_k \\
    % &= \hat{x}_{k|k-1} - A\hat{x}_{k-1|k-1} + \chi_k \\
    &= A \hat{x}_{k-1|k-1} - A\hat{x}_{k-1|k-1} + \chi_k 
    = \chi_k .
\end{align*}
which gives the preliminary result
\begin{align*}
    &z_k - \hat{z}_k 
    = w_{k-1} + \chi_k - \chi_k 
    = w_{k-1} .
\end{align*}
Now we present the parts of \eqref{eq:estupdate} and \eqref{eq:covupdate} required using the above result
\begin{align*}
    \Sigma_{k,zz} &= E[(z_k - \hat{z}_k) (z_k - \hat{z}_k)^\mathsf{T} | \mathcal{I}_{k-1}] % \\
    % &= E[(x_k - A x_{k-1} - \hat{x}_{k|k-1} + A \hat{x}_{k-1|k-1}) \\&\quad\times(x_k - A x_{k-1} - \hat{x}_{k|k-1} + A \hat{x}_{k-1|k-1})^\mathsf{T} | \mathcal{I}_{k-1}] \\
    % &= E[(A x_{k-1} + w_{k-1} - A x_{k-1} - A \hat{x}_{k-1|k-1} + A \hat{x}_{k-1|k-1}) \\&\times(A x_{k-1} + w_{k-1} - A x_{k-1} - A \hat{x}_{k-1|k-1} + A \hat{x}_{k-1|k-1})^\mathsf{T} | \mathcal{I}_{k-1}] \\
    = E[ w_{k-1}  w_{k-1}^\mathsf{T} ] %\\ &
    = Q
\end{align*}
and
\begin{align*}
    \Sigma_{k,xz} &= E[(x_k - \hat{x}_{k|k-1}) (z_k - \hat{z}_{k})^\mathsf{T} | \mathcal{I}_{k-1} ] \\
    &= E[(A x_{k-1} + w_{k-1} - A \hat{x}_{k-1|k-1}) w_{k-1}^\mathsf{T} | \mathcal{I}_{k-1} ] \\
    &= E[A(x_{k-1} - \hat{x}_{k-1|k-1}) w_{k-1}^\mathsf{T} + w_{k-1} w_{k-1}^\mathsf{T} | \mathcal{I}_{k-1}] \\
    &= 0 + Q = Q .
\end{align*}
Then the estimate \eqref{eq:estupdate} is
\begin{align*}
    &\hat{x}_{k|k} = \hat{x}_{k|k-1} + \gamma_k \Sigma_{k,xz} \left( \Sigma_{k,zz} \right)^{-1} (z_k - \hat{z}_k) \\
    &= A \hat{x}_{k-1|k-1} + Q Q^{-1} w_{k-1} \\
    &= A \hat{x}_{k-1|k-1} + w_{k-1} = x_k - A(x_{k-1} - \hat{x}_{k-1|k-1}) .
\end{align*}
where $w_{k-1} = x_{k} - A x_{k-1}$,
with covariance \eqref{eq:covupdate}
\begin{align*}
    &P_{k|k} = \Sigma_{k,xx} - \gamma_k \Sigma_{k,xz} \left(\Sigma_{k,zz}\right)^{-1} \Sigma_{k,zx} \\
    &= ( A P_{k-1|k-1} A^\mathsf{T} + Q ) -   Q  \left(Q\right)^{-1}  Q  \\
    &= A \bar{P}_{k-1|k-1} A^\mathsf{T} .
\end{align*}
This completes the proof.
\end{proof}

\subsection{Legitimate Estimator Expected Estimation Error Covariance}
\label{app:legitexpectedaug}
Proof of Theorem~\ref{thm:expectedlegitest}.
The following proof shows the expected estimation error covariance of the state at the legitimate estimator.

\begin{proof}
    We consider that the legitimate estimator is able to decode the packages that it successfully receives.
    There are then three outcomes for the legitimate estimator, 
    successful receipt of a state estimate $(\varphi_k=1)$ with probability $p_r = \mathbb{P}(\gamma_k = 1, \nu_k = 0) = \mu \mu_d$,
    successful receipt of an innovation $(\varphi_k=2)$ with probability $p_i = \mathbb{P}(\gamma_k = 1, \nu_k = 1) = \mu (1-\mu_d)$,
    and a standard dropout $(\varphi_k=3)$ with probability $p_d = \mathbb{P}(\gamma_k = 0) = (1-\mu)$.
    The expectation of the estimation error covariance for time $k>0$ can be written as a sum of the sequence of dropouts from the first transmission
    \begin{align}
        &E[P_{k|k}] = (p_i + p_d)^k A^k \Sigma_0 (A^\mathsf{T})^k \label{eq:legitexpectedinduction} \\ &+ p_d \sum_{j=0}^{k-1} (p_i + p_d)^j A^j Q (A^\mathsf{T})^j \nonumber .
    \end{align}
    
    We show via a proof by induction.
    
    Consider $k=0$ from definition $ E[\bar{P}_{0|0}] = \Sigma_0 B^\mathsf{T} p_d $, as only the state is transmitted at the first time and $\Sigma_0$ is the covariance of the initial state $x_0$.
    Now consider the first time $k=1$ from definition
    \begin{align*}
        &E[P_{1|1}] = \sum_{y=1}^3 E[P_{1|1} | \varphi_1 = y] P(\varphi_1 = y) \\
        % &= E[P_{1|1} | \varphi_1 = 1] \mathbb{P}(\varphi_1 = 1) + E[P_{1|1} | \varphi_1 = 2] \mathbb{P}(\varphi_1 = 2) \\ &\quad+ E[P_{1|1} | \varphi_1 = 3] \mathbb{P}(\varphi_1 = 3) \\
        % &= (p_d + p_i) A \Sigma_0 A^\mathsf{T} + p_d Q \\.
        &= (p_d + p_i) A \Sigma_0 A^\mathsf{T} + p_d Q .
    \end{align*}
    We now show that if \eqref{eq:legitexpectedinduction} holds for time $k$, that the form \eqref{eq:legitexpectedinduction} also holds for time $k+1$.
    \begin{align*}
        &E[P_{k+1|k+1}] = \sum_{y=1}^3 E[P_{k+1|k+1} | \varphi_{k+1} = y] P(\varphi_{k+1} = y) \\
        % &= E[P_{k+1|k+1} | \varphi_k = 1] \mathbb{P}(\varphi_k = 1) \\&\quad+ E[P_{k+1|k+1} | \varphi_k = 2] \mathbb{P}(\varphi_k = 2) \\ &\quad+ E[P_{k+1|k+1} | \varphi_k = 3] \mathbb{P}(\varphi_k = 3) \\
        % &= AE[P_{k|k}]A^\mathsf{T} p_i + E[P_{k+1|k}] p_d + 0 p_r \\
        % &= AE[P_{k|k}]A^\mathsf{T} p_i + \left(AE[P_{k|k}]A^\mathsf{T} + Q \right) p_d \\
        % &= Q p_d + AE[P_{k|k}]A^\mathsf{T} (p_i + p_d) .
        &= AE[P_{k|k}]A^\mathsf{T} p_i + E[P_{k+1|k}] p_d 0 p_r \\
        &= AE[P_{k|k}]A^\mathsf{T} p_i + \left(AE[\bar{P}_{k|k}]A^\mathsf{T} + Q \right) p_d \\
        &= Q p_d + AE[\bar{P}_{k|k}]A^\mathsf{T} (p_i + p_d) .
    \end{align*}
    Consider the expression $A E[P_{k|k}] A^\mathsf{T} (p_i + p_d)$ utilizing \eqref{eq:legitexpectedinduction} for $E[P_{k|k}]$,
    the first term
    \begin{align*}
         &A \left( (p_i + p_d)^k A^k \Sigma_0 (A^\mathsf{T})^k \right)A^\mathsf{T} (p_i + p_d) \\
         &= (p_i + p_d)^{k+1} A^{k+1} \Sigma_0 (A^\mathsf{T})^{k+1} ,
    \end{align*}
    the second term
    \begin{align*}
         &A \left( p_d \sum_{j=0}^{k-1} (p_i + p_d)^j A^j Q (A^\mathsf{T})^j \right)A^\mathsf{T} (p_i + p_d) \\
         % &= p_d \sum_{j=0}^{k-1} (p_i + p_d)^{j+1} A^{j+1} Q  (A^\mathsf{T})^{j+1} \\
         &= p_d \sum_{j=1}^{k} (p_i + p_d)^{j} A^{j} Q  (A^\mathsf{T})^{j} .
    \end{align*}
    Now
    \begin{align*}
        % &E[P_{k+1|k+1}] = (p_i + p_d)^{k+1} A^{k+1} \Sigma_0 (A^\mathsf{T})^{k+1} \\&\quad+ Q p_d + p_d \sum_{j=1}^{k} (p_i + p_d)^{j} A^{j} Q (A^\mathsf{T})^{j} \\
        % &= (p_i + p_d)^{k+1} A^{k+1} \Sigma_0 (A^\mathsf{T})^{k+1}  \\&\quad+ p_d \sum_{j=0}^{k} (p_i + p_d)^{j} A^{j} Q (A^\mathsf{T})^{j},
        &E[P_{k+1|k+1}] = (p_i + p_d)^{k+1} A^{k+1} \Sigma_0 (A^\mathsf{T})^{k+1} \\&\quad+ Q p_d + p_d \sum_{j=1}^{k} (p_i + p_d)^{j} A^{j} Q (A^\mathsf{T})^{j} \\
        &= (p_i + p_d)^{k+1} A^{k+1} \Sigma_0 (A^\mathsf{T})^{k+1}  \\&\quad+ p_d \sum_{j=0}^{k} (p_i + p_d)^{j} A^{j} Q (A^\mathsf{T})^{j} ,
    \end{align*}
    which is the form of \eqref{eq:legitexpectedinduction} at time $k+1$.
    This completes the induction argument.

    Let us explore the stabilizing solutions of the two terms of \eqref{eq:legitexpectedinduction} as $k\rightarrow\infty$.
    The first term results from a sequence of a dropouts from the initial transmission. 
    By assumption of $\mu_d$ in \eqref{eq:decisionvariablestabilitybound}, we note that $\rho(\sqrt{p_i + p_d} A) = \rho(\sqrt{1-\mu\mu_d} A) < 1$, so as time $k\rightarrow\infty$ then $(\sqrt{p_i + p_d} A)^k \rightarrow 0$ and the initial estimation error covariance $\Sigma_0$ is exponentially forgotten.

    The second term encodes the sequences of potential dropouts and innovations occurring from the first dropout after the estimator received a state packet. The sum is comprised of the potential value of the estimation error covariance multiplied by the corresponding probability. Taking as $k\rightarrow\infty$, this result can be shown with a countably infinite, irreducible, and aperiodic Markov Chain, see
    Appendix~\ref{app:alternativelegitexpectedaug}.
    % \hl{ARXIV REF}.
    %
    Consider the sum in \eqref{eq:legitexpectedinduction} from $j=0$ to $k-1$ and denote as $S_k$,
    \begin{align*}
        S_{k-1} &= \sum_{j=0}^{k-1} (p_i + p_d)^j A^j Q (A^\mathsf{T})^j \\
        &= \sum_{j=0}^{k-1} \left(\sqrt{p_i + p_d} A\right)^j  Q \left(A^\mathsf{T} \sqrt{p_i + p_d}\right)^j .
    \end{align*}
    By assumption of $\mu_d$ in \eqref{eq:decisionvariablestabilitybound}, we note that $\rho(\sqrt{p_i + p_d} A) = \rho(\sqrt{1-\mu\mu_d} A) < 1$, so the sum is a vector geometric series, and can be written in the form of a discrete-time Lyapunov equation sequence \cite{Anderson1979OptimalFiltering}
    \begin{equation*}
        S_{k} = \sqrt{p_i + p_d} A \bar{S}_{k-1}A^\mathsf{T} \sqrt{p_i + p_d} + Q
    \end{equation*}
    from $S_0 = Q$.
    The stabilizing solution $S$ can be found by taking $k\rightarrow\infty$, or setting $S_{k} = S_{k-1} = S$ and solving for the unique stabilizing solution to
    \begin{equation*}
        S = \sqrt{p_i + p_d} A S A^\mathsf{T} \sqrt{p_i + p_d}  + Q.
    \end{equation*}

    We conclude the proof by stating the expectation of the state using the above results 
    \begin{equation*}
        E[P_{k|k}] = (1-\mu) S .
    \end{equation*}
\end{proof}

\subsection{Alternative Legitimate Estimator Expected Estimation Error Covariance}
\label{app:alternativelegitexpectedaug}
Proof of Theorem~\ref{thm:expectedlegitest}.
The following proof shows the expected estimation error covariance of the state at the legitimate estimator using an alternative Markov Chain approach.

It is possible to derive the estimation error at some time $k$ with exact knowledge of the past sequence of dropouts and encoding.
To find the expectation of the estimation error covariance at some time $k$, we can take the total expectation over all possible sequences of dropouts and encoding, the summation of the final estimation error covariances weighted by the possibility of each sequence.
The proof is in two parts, we first define a Markov Chain representation of the possible sequences of packet receipts, before second computing the total expectation.

The possible sequences can be written as a resetting random walk from the `in-sync' state and the first dropout.
The `in-sync' state is where the estimation error covariance is zero from either receiving the state, or receiving an innovation after receiving the state.
On the third outcome, a dropout, the estimation error covariance diverges from zero.
Let us define $\Delta$ as the number of steps from the `in-sync' state, where $\Delta=1$ is the first dropout after being `in-sync'.
From this first dropout, the estimation error covariance can increase from a dropout or innovation for two possible states, or reset to the `in-sync' state.
From the second dropout or innovation, the estimation error covariance can increase further with a dropout or innovation for now four possible states, or reset to the `in-sync' state.
As $k\rightarrow\infty$, this resetting random walk can be written as a countably infinite Markov Chain \cite{Bremaud2020MarkovChains}, where the first state is the `in-sync' state, the second state is the first dropout, and every following state of dropouts and innovations follows.

The states of the Markov Chain represent the time since the last packet receipt that maintained the `in-sync' state.
For clarity we recall the following three probabilities
$p_r = \mathbb{P}(\gamma_k = 1, \nu_k = 0) = \mu \mu_d$ as the probability of receiving a state and resetting,
$p_i = \mathbb{P}(\gamma_k = 1, \nu_k = 1) = \mu (1-\mu_d)$ as the probability of receiving an innovation,
and $p_d = \mathbb{P}(\gamma_k = 0) = (1-\mu)$ as the probability of a packet dropout.
Additionally, we define $\pi_j$ as the limiting distribution of state $j$ of the Markov Chain with state $\pi_0$ the `in-sync' state, $\pi_1$ as the first dropout from `in-sync', and we recall that the sum of all probabilities is equal to 1: $\sum_{j=0}^{\infty} \pi_j = 1 $.
The Markov Chain is visualized in Figure~\ref{fig:MCbubble}.
This Markov Chain represents all possible estimation error covariance resulting from sequences of dropouts and packet receipts.

\begin{figure}
\centering
% \resizebox{0.45\textwidth}{!}{\input{_t}}
\resizebox{0.45\textwidth}{!}{\input{MCbubblefigure_largetext.pdf_t}}
\caption{Markov Chain representation of the states of the legitimate estimator. The first state is the `in-sync' state where the estimate is in sync with the transmitter. The legitimate estimator remains in sync from state or innovation receipts, or drops the packet moving to the second state. The estimate after this first dropout is dependent on the sequence of further dropouts of successful packet receipts of transmitted innovations. At any point, the estimator can receive a state packet and return or reset to the `in-sync' state.}
\label{fig:MCbubble}
\end{figure}

In computing the total expectation of the estimation error covariance, we take the summation of the estimation error covariance value weighted by the probability of the state in the Markov Chain.
The limiting distribution of a Markov Chain represents the proportion of time that is spent in a given state \cite{Bremaud2020MarkovChains}.

In the following proposition we state the limiting distribution of the countably infinite Markov Chain which characterizes the possible sequences of dropouts and innovations, then give a proof.
\begin{proposition}
\label{proposition:limitingdist}
The limiting distribution of the in-sync state is
\begin{equation*}
    \pi_0 = \frac{p_r}{1-p_i} , %= \frac{\mu \mu_d}{1 - \mu(1-\mu_d)} ,
\end{equation*}
the limiting distribution of the first state from one dropout where $\Delta = 1$ is
\begin{equation*}
    \pi_D = p_d \pi_0 . % = (1 - \mu) \pi_0.
\end{equation*}
From each state there are two possible options to continue the sequence for a total of $\mathsf{N} = 2^{\Delta-1}$ at each step $\Delta$.
The limiting distribution of the first step $\Delta=2$, when an innovation is received is: $\pi_{DI} = p_i \pi_D = p_i p_d \pi_0 = \mu(1-\mu_d) (1-\mu)\pi_0$, 
or when a dropout occurs is: $\pi_{DD} = p_d \pi_D = p_d p_d \pi_0 = (1-\mu)^2 \pi_0 $.
Thus the limiting distribution of one of the $\mathsf{N}$ states at step $\Delta > 1$ from `in-sync' is based on the sequence to that state 
\begin{equation*}
    \pi_{j \ell} = p_i^j p_d^\ell p_d \pi_0  = (\mu(1-\mu_d))^j (1-\mu)^\ell (1-\mu) \pi_0
\end{equation*}
where $j \geq 0$ is the number of innovations received and $\ell \geq 0$ is the number of dropouts, and the number of innovations and dropouts at step $\Delta$ are bounded by $j + \ell + 1 = \Delta$.
\end{proposition}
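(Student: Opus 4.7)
The plan is to establish existence and uniqueness of the limiting distribution, then solve the stationary balance equations by exploiting the tree structure of the chain.

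First, I would argue that the countably infinite Markov Chain is irreducible, aperiodic, and positive recurrent, so a unique limiting distribution exists \cite{Bremaud2020MarkovChains}. Irreducibility follows because from any state a single state receipt (probability $p_r>0$) returns to the in-sync state, and from in-sync any finite sequence of innovations and dropouts is reachable with positive probability. Aperiodicity is immediate from the self-loop at the in-sync state with probability $p_r+p_i>0$. Positive recurrence follows because from every state the probability of transitioning directly to in-sync at each step is $p_r$, so the expected return time is bounded by $1/p_r<\infty$.

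Next, I would set up the stationary balance equation at the in-sync state. Since every non-in-sync state transitions to in-sync with probability $p_r$, and the in-sync state self-loops with probability $p_r+p_i$, we have
\begin{equation*}
\pi_0 = \pi_0(p_r+p_i) + p_r\sum_{s\neq 0}\pi_s = \pi_0(p_r+p_i) + p_r(1-\pi_0),
\end{equation*}
which simplifies to $\pi_0(1-p_i)=p_r$, yielding $\pi_0=p_r/(1-p_i)$ as claimed.

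The key observation for the remaining states is that the chain has a tree structure rooted at the in-sync state: each non-in-sync state, being identified with a unique event sequence since the last in-sync visit, has exactly one predecessor (the sequence with its final event removed). Consequently its balance equation reduces to $\pi_s = \pi_{\mathrm{parent}(s)}\cdot p_{\mathrm{last\ event}}$. A straightforward induction on the path length $\Delta$ then gives $\pi_D=p_d\pi_0$ at $\Delta=1$, and for a state at level $\Delta>1$ whose path consists of the forced initial dropout followed by $j$ innovations and $\ell$ dropouts in some order (with $j+\ell+1=\Delta$),
\begin{equation*}
\pi_{j\ell} = p_i^{j}\, p_d^{\ell}\, p_d\, \pi_0.
\end{equation*}
Finally, I would verify normalization by summing over all $2^{\Delta-1}$ sequences at each level: $\sum_s \pi_s = \pi_0 + \pi_0 p_d \sum_{\Delta\geq 1}(p_i+p_d)^{\Delta-1} = \pi_0(1 + p_d/p_r) = 1$, which is consistent with $\pi_0=p_r/(1-p_i)$ since $p_r+p_i+p_d=1$.

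The main subtlety is not computational but structural: one must be careful that each $\pi_{j\ell}$ denotes the probability of a single specific state corresponding to a single ordering of its events, while $2^{\Delta-1}$ distinct states share the same $(j,\ell)$ counts. Once the tree interpretation is made explicit, the balance equations decouple and the induction is immediate; the only non-trivial step is the self-consistent solution for $\pi_0$ via normalization.
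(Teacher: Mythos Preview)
Your proposal is correct and follows essentially the same approach as the paper: both derive $\pi_0$ from the balance equation at the in-sync state combined with normalization $\sum_j\pi_j=1$, and then obtain the remaining $\pi_{j\ell}$ by unrolling the one-step transitions along the tree of event sequences. Your version is somewhat more polished---you add an explicit existence/uniqueness argument (irreducibility, aperiodicity, positive recurrence) and a closing normalization check, neither of which appears in the paper---but the core derivation is the same; one small slip in your final paragraph is the claim that ``$2^{\Delta-1}$ distinct states share the same $(j,\ell)$ counts,'' when in fact $2^{\Delta-1}$ is the total number of states at level $\Delta$ and only $\binom{\Delta-1}{j}$ of them share a given $(j,\ell)$.
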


\begin{proof}
Proof of Proposition~\ref{proposition:limitingdist}.

There are three options that can occur at any state: receiving the state with probability $p_r$, receiving an innovation with probability $p_i$, and a dropout with probability $p_d$.
Receiving the state measurement will return to the `in-sync' state.

In the Markov Chain the `in-sync' state, denoted $\pi_0$, can be reached from any other state when either the state is received $\nu_k=0$ or from the `in-sync' state when an innovation is received $\nu_k=1$.
We can write the transitions into $\pi_0$ as the sum of every state multiplied by the probability of receiving a state $p_r$ and the probability of an innovation $p_i$ from the state $\pi_0$
\begin{align*}
    \pi_0 &= (p_r + p_i) \pi_0 + p_r \pi_N \\&\qquad+ p_r \pi_{DI} + p_r \pi_{DD} + \dots \\
    (1 - p_r - p_i) \pi_0 &= p_r \sum_{j=1}^\infty \pi_j \\
    \sum_{j=1}^\infty \pi_j &= \frac{(1 - p_r - p_i)}{p_r} \pi_0
\end{align*}
Recall that the sum of probabilities is equal to 1 then combine with the above result
\begin{align*}
    \sum_{j=0}^\infty \pi_j &= 1 \\
    \pi_0 + \sum_{j=1}^\infty \pi_j &= 1 \\
    \pi_0 + \frac{(1 - p_r - p_i)}{p_r} \pi_0 &= 1 \\
    \pi_0 \frac{1 - p_i}{p_r} &= 1 \\
    \pi_0 &= \frac{p_r}{1 - p_i}
\end{align*}
which shows the first part of the proposition.

When in the `in-sync' state, receiving a state or innovation will ensure the state remains in the `in-sync' state, and can only leave with the first occurrence of dropout.
Thus the limiting distribution of the first dropout state is
\begin{equation}
    \pi_D = p_d \pi_0 .
    \label{eq:limiting:firstdrop}
\end{equation}
From the first dropout state $\Delta=1$, receiving the state will return to the `in-sync' state but receiving an innovation or dropout will propagate the error to the next step $\Delta=2$ with the following $\mathsf{N}=2$ limiting distributions
\begin{align*}
    \pi_{DI} &= p_i \pi_D = p_i p_d \pi_0 \\
    \pi_{DN} &= p_d \pi_D = p_d p_d \pi_0
\end{align*}
by application of \eqref{eq:limiting:firstdrop} for the step $\Delta=2$.
To the next step $\Delta=3$, receiving the state will return to the `in-sync' state again, but receiving an innovation or dropout will propagate the error to the following $\mathsf{N}=4$ limiting distributions
\begin{align*}
    \pi_{DII} &= p_i \pi_{DI} = p_i p_i p_d \pi_0 = p_i^2 p_d \pi_0 \\
    \pi_{DID} &= p_d \pi_{DI} = p_d p_i p_d \pi_0 = p_i p_d^2 \pi_0 \\
    \pi_{DDI} &= p_i \pi_{DD} = p_i p_d p_d \pi_0 = p_i p_d^2 \pi_0 \\
    \pi_{DDD} &= p_d \pi_{DD} = p_d p_d p_d \pi_0 = p_d^3 \pi_0
\end{align*}
by application of \eqref{eq:limiting:firstdrop}, noting that the limiting distribution is comprised of the number of innovations and dropouts from the limiting distribution of the `in-sync' state.
We observe that from every state there are two options to the next step, such that at $\Delta = 3$ there were $\mathsf{N}=4= 2^{3-1}$ states.
We can generalize to $\mathsf{N}=2^{\Delta-1}$ states for the step $\Delta$.

At step $\Delta$ there are $\mathsf{N}$ states, covering the combinations of $j \geq 0$ is the number of innovations received and $\ell \geq 0$ is the number of dropouts where $\Delta=j+\ell+1$ to give
\begin{equation*}
    \pi_{j \ell} = p_i^j p_d^\ell p_d \pi_0 .
\end{equation*}
This shows the second part of the proposition and completes the proof.
\end{proof}

We now show an alternate proof to Theorem~\ref{thm:expectedlegitest} utilizing the structure of the Markov Chain to enumerate the sequences that give the possible estimation error covariances.

\begin{proof}
Proof of Theorem~\ref{thm:expectedlegitest}.

As $k\rightarrow\infty$, we can compute the expected estimation error of the legitimate estimator as the a possible expected estimation error conditioned on a given sequence multiplied by the probability of that sequence.
The total expectation of the estimation error of the legitimate estimator can be written as $k\rightarrow\infty$ is
\begin{equation*}
    E[\bar{P}_{k|k}] = \sum_{i=0}^\infty E[\bar{P}_{k|k} | \Delta = i] P(\Delta = i)
\end{equation*}
where $\Delta$ is the number of dropouts from the `in-sync' state.

Following Proposition~\ref{proposition:limitingdist}, we observe that there $\mathsf{N} = 2^{i-1}$ states at the step $\Delta = i$.
Thus at step $i$ we can write the conditioned expectation as a sum of the Markov Chain states at $\Delta=i$ multiplied by the probability of reaching that state
\begin{align*}
    &E[P_{k|k} | \Delta = i] P(\Delta = i) \\
    = &\sum_{m=0}^{\mathsf{N}} E[ P_{k|k} | j, \ell \wedge \Delta = i] P( j, \ell | \Delta = i) 
\end{align*}
where $m$ iterates through the combinations of $j$ and $\ell$.

Consider at the `in-sync' state $\Delta=0$, by Theorem~\ref{thm:mmseestimate} the estimation error covariance will be $P_{k|k} = 0$ and the probability is the limiting distribution of the first dropout $P(\Delta = 0) = \pi_0$
\begin{align*}
    E[P_{k|k} | \Delta = 0] P(\Delta = 0) = 0 \pi_0  = 0.
\end{align*}

Consider at step $\Delta=1$, the first dropout from `in-sync', the previous estimation error covariance is zero $P_{k-1|k-1} = 0$ so by Theorem~\ref{thm:mmseestimate} the estimation error covariance will be $P_{k|k} = Q$ and the probability is the limiting distribution of the first dropout $P(\Delta = 1) = \pi_D$ 
\begin{align*}
    E[P_{k|k} | \Delta = 1] P(\Delta = 0) = Q p_d \pi_0 .
\end{align*}
At step $\Delta=2$, there is a dropout and an innovation so the estimation error covariance will build slightly differently with the two limiting distributions
\begin{align*}
    &E[ P_{k|k} | 1, 0 \wedge \Delta = 1] P( 1,0 | \Delta = 2) \\&\quad= A Q A^\mathsf{T} p_i p_d \pi_0 \\
    &E[ P_{k|k} | 0,1 \wedge \Delta = 1] P( 0,1 | \Delta = 2) \\&\quad= (A Q A^\mathsf{T} + Q) p_d p_d \pi_0
\end{align*}
and together
\begin{align*}
    &E[ P_{k|k} | \Delta = 2] P( \Delta = 2) \\
    = &A Q A^\mathsf{T} p_i p_d \pi_0 + (A Q A^\mathsf{T} + Q) p_d p_d \pi_0 \\
    = &(A Q A^\mathsf{T}(p_i + p_d) + Q p_d) p_d \pi_0
\end{align*}

Consider at step $\Delta=3$ we find
\begin{align*}
    &E[ P_{k|k} | 2, 0 \wedge \Delta = 3] P( 2,0 | \Delta = 3) \\&\quad= A^2 Q A^{2\mathsf{T}} p_i^2 p_d \pi_0 \\
    &E[ P_{k|k} | 1, 1 \wedge \Delta = 3] P( 1,1 | \Delta = 3) \\&\quad= (A^2 Q A^{2\mathsf{T}} + Q) p_d p_i p_d \pi_0 \\
    &E[ P_{k|k} | 1, 1 \wedge \Delta = 3] P( 1,1 | \Delta = 3) \\&\quad= A(A Q A^\mathsf{T} + Q)A^\mathsf{T} p_i p_d p_d \pi_0 \\
    &E[ P_{k|k} | 0, 2 \wedge \Delta = 3] P( 0,2 | \Delta = 3) \\&\quad= (A(A Q A^\mathsf{T} + Q)A^\mathsf{T} + Q) p_d^2 p_d \pi_0
\end{align*}
which together give
\begin{align*}
    &E[ P_{k|k} | \wedge \Delta = 3] P( \Delta = 3) \\
    = &(A^2QA^{2\mathsf{T}} (p_i+p_d)(p_i+p_d) \\
    &+AQA^\mathsf{T}(p_i+p_d)p_d \\
    &+Qp_d(p_i+p_d)) p_d \pi_0 .
\end{align*}
The conditioned expectation multiplied by the limiting distributions reduce to the a combination of the dynamics by the probabilities of dropout and innovation, by the stationary distribution of the `in-sync' state.

We then observe that the sum for all steps after the `in-sync' state can be written as
\begin{align*}
    E[\bar{P}_{k|k}] &= \pi_0 p_d \left[ Q + \sum_{m=1}^\infty (p_i + p_d)^{m-1} \right. \\&\left.\times\left(\sum_{s=0}^{m-1} A^sQ (A^s)^\mathsf{T} p_d + A^{m}Q A^{m \mathsf{T}} (p_i+p_d)\right)\right] \\
    &= \left[\sum_{m=1}^\infty p_d (p_i+p_d)^{m-1} \sum_{s=0}^{m-1} A^sQ (A^s)^\mathsf{T} \right. \\ &\quad\left.+\sum_{m=1}^\infty (p_i+p_d)^m A^{m}Q A^{m \mathsf{T}} +Q \right] p_d \pi_0 \\
    &= \left[\sum_{m=1}^\infty p_d (p_i+p_d)^{m-1} \sum_{s=0}^{m-1} A^sQ (A^s)^\mathsf{T} \right. \\ &\quad\left.+\sum_{m=0}^\infty (p_i+p_d)^m A^{m}Q A^{m \mathsf{T}} \right] p_d \pi_0
\end{align*}
where we have two infinite sums.
Consider the second sum
\begin{align*}
    S &= \sum_{m=0}^\infty (p_i+p_d)^m A^{m}Q (A^m)^{\mathsf{T}} \\
    &= \sum_{m=0}^\infty (\sqrt{p_i+p_d}  A)^{m} Q (\sqrt{p_i+p_d} A^\mathsf{T})^{m} 
\end{align*}
which is a vector geometric series.
We can alternatively write $S$ as a sequence
\begin{align*}
    S_{m+1} = \sqrt{p_i+p_d}  A)^{m} S_{m} (\sqrt{p_i+p_d} A^\mathsf{T})^{m} + Q
\end{align*}
from $S_0 = Q$.
This equation is in the form of a Lyapunov Equation \cite{Anderson1979OptimalFiltering}, in the case that the matrix $\rho(\sqrt{p_i+p_d}  A) < 1$ by assumption of \eqref{eq:decisionvariableminbound} and $m \rightarrow \infty$ then $S_{m+1} = S_m = S$ and
\begin{equation*}
    S = (\sqrt{p_i+p_d}  A) S (\sqrt{p_i+p_d} A)^{\mathsf{T}} + Q
\end{equation*}
where $S$ is a unique stabilizing solution to the Lyapunov Equation.

Now consider the first sum
\begin{align*}
    S_1 &= \sum_{m=1}^\infty p_d (p_i+p_d)^{m-1} \sum_{s=0}^{m-1} A^s Q (A^s)^\mathsf{T} \\
    &= p_d \sum_{m=0}^\infty (p_i+p_d)^m \sum_{s=0}^{m} A^sQ (A^s)^\mathsf{T} \\
    &= p_d \sum_{m=0}^\infty (\sqrt{p_i+p_d} A)^sQ (\sqrt{p_i+p_d} A^\mathsf{T})^{m } \\&\qquad\times \sum_{s=0}^{\infty} (p_i+p_d)^s \\
    &= p_d S \sum_{s=0}^{\infty} (p_i+p_d)^s
\end{align*}
where we have two separable sums in geometric series.
We observe that the first part is the same as $S$ above and the second part is the standard scalar geometric series where
\begin{equation*}
     \sum_{s=0}^{\infty} (p_i+p_d)^s = \frac{1}{1 - (p_i + p_d)}
\end{equation*}
then
\begin{equation*}
    E[\bar{P}_{k|k}] = S \left( \frac{p_d}{1- (p_i + p_d)} + 1 \right) p_d \pi_0 .
\end{equation*}

Recalling that
$p_r = \mathbb{P}(\gamma_k = 1, \nu_k = 0) = \mu \mu_d$,
$p_i = \mathbb{P}(\gamma_k = 1, \nu_k = 1) = \mu (1-\mu_d)$,
and $p_d = \mathbb{P}(\gamma_k = 0) = (1-\mu)$,
and $\pi_0 = \frac{p_r}{1-p_i}$.
Then we can write
\begin{align*}
    E[\bar{P}_{k|k}] &= S \left( \frac{p_d}{1- (p_i + p_d)} + 1 \right) p_d \pi_0 \\
    &= S \left( \frac{(1-\mu)}{1- (\mu (1-\mu_d) + (1-\mu))} + 1 \right) (1-\mu) \pi_0 \\
    &= S \left( \frac{1-\mu}{1- \mu + \mu \mu_d - 1 + \mu} + 1 \right) (1-\mu) \frac{p_r}{1-p_i} \\
    &= S \left( \frac{1-\mu}{\mu \mu_d} + 1 \right) (1-\mu) \frac{\mu\mu_d}{1- \mu(1-\mu_d)} \\
    &= S \left( \frac{1-\mu}{\mu \mu_d} + \frac{\mu\mu_d}{\mu\mu_d} \right) \frac{\mu\mu_d}{1- \mu + \mu \mu_d} (1-\mu) \\
    &= S \frac{1-\mu+\mu\mu_d}{\mu \mu_d} \frac{\mu\mu_d}{1- \mu + \mu \mu_d} (1-\mu) \\
    &= S (1-\mu) 
\end{align*}
and the Lyapunov equation
\begin{align*}
    S &= (\sqrt{p_i+p_d}  A) S (\sqrt{p_i+p_d} A)^{\mathsf{T}} + Q \\
    % S &= (\sqrt{\mu (1-\mu_d)+(1-\mu)}  A) S (\sqrt{\mu (1-\mu_d)+(1-\mu)} A)^{\mathsf{T}} + Q \\
    % S &= (\sqrt{\mu - \mu \mu_d + 1 - \mu}  A) S (\sqrt{\mu - \mu \mu_d + 1 - \mu} A)^{\mathsf{T}} + Q \\
    S &= (\sqrt{1 - \mu \mu_d}  A) S (\sqrt{1 - \mu \mu_d} A)^{\mathsf{T}}  + Q .
\end{align*}

We then conclude the proof by stating the expected estimation error covariance as $k\rightarrow\infty$ as
\begin{equation*}
    E[P_{k|k}] = (1-\mu) S 
\end{equation*}
where
\begin{equation*}
    S = (\sqrt{1 - \mu \mu_d}  A) S (\sqrt{1 - \mu \mu_d} A)^{\mathsf{T}}  + Q .
\end{equation*}
This concludes the proof.
\end{proof}

\subsection{Eavesdropper MMSE}
\label{app:eavesestimate}

Proof of Lemma~\ref{lemma:classeavesdropperstateest}.
The following proof shows the estimation error covariance of an eavesdropper.

\begin{proof}
Consider a packet drop $\gamma_k^e = 0$ or the eavesdropper discards a successfully received packet $(\gamma_k^e,\nu_k,b_k) = (1,0,0)$ or $(\gamma_k^e,\nu_k,b_k) = (1,1,0)$, no new information is received, so the state estimate is the prediction from the previous state.
Following the dynamics and applying \eqref{eq:eavesdropperMMSE} the state estimate is
$\hat{x}_k^e = E[x_k | \mathcal{I}_{k-1}^e] = A \hat{x}_{k-1}^e$,
% \begin{align*}
%     \hat{x}_k^e &= E[x_k | \mathcal{I}_{k-1}^e] % \\
%     % &= E[Ax_{k-1} + w_{k-1} | \mathcal{I}_{k-1}^e] \\
%     % &= E[Ax_{k-1}| \mathcal{I}_{k-1}^e] + E[w_{k-1} | \mathcal{I}_{k-1}^e] \\
%     % &= E[Ax_{k-1}| \mathcal{I}_{k-1}^e] + 0 \\
%     % &= A E[x_{k-1}| \mathcal{I}_{k-1}^e] \\ &
%     = A \hat{x}_{k-1}^e
% \end{align*}
with covariance
\begin{align*}
    P_{k|k-1}^e &= E[(x_k - \hat{x}_{k|k-1}^e)(x_k - \hat{x}_{k|k-1}^e)^\mathsf{T} | \mathcal{I}_{k-1}^e ] \\
    % &= E[(A x_{k-1} + w_{k-1} - A\hat{x}_{k-1|k-1}^e) \\ &\quad (A x_{k-1} + w_{k-1} - A \hat{x}_{k-1|k-1}^e)^\mathsf{T} | \mathcal{I}_{k-1}^e ] \\
    % &= E[(A x_{k-1}- A\hat{x}_{k-1|k-1}^e)(A x_{k-1}- A \hat{x}_{k-1|k-1}^e)^\mathsf{T} \\ &\quad + (A x_{k-1}- A\hat{x}_{k-1|k-1}^e) w_{k-1}^\mathsf{T} \\ &\quad + w_{k-1} (A x_{k-1} - A \hat{x}_{k-1|k-1}^e)^\mathsf{T} \\ &\quad + w_{k-1} w_{k-1}^\mathsf{T} | \mathcal{I}_{k-1}^e ] \\
    % &= E[(A x_{k-1}- A\hat{x}_{k-1|k-1}^e)(A x_{k-1}- A \hat{x}_{k-1|k-1}^e)^\mathsf{T} | \mathcal{I}_{k-1}^e ] \\ &\quad + E[(A x_{k-1}- A\hat{x}_{k-1|k-1}^e) w_{k-1}^\mathsf{T}| \mathcal{I}_{k-1}^e ] \\ &\quad + E[w_{k-1} (A x_{k-1} - A \hat{x}_{k-1|k-1}^e)^\mathsf{T} | \mathcal{I}_{k-1}^e ] \\ &\quad + E[ w_{k-1} w_{k-1}^\mathsf{T} | \mathcal{I}_{k-1}^e ] \\
    % &= A E[( x_{k-1}- \hat{x}_{k-1|k-1}^e)(x_{k-1}-  \hat{x}_{k-1|k-1}^e)^\mathsf{T}| \mathcal{I}_{k-1}^e ]A^\mathsf{T} \\ &\quad + 0 + 0 + E[ w_{k-1} w_{k-1}^\mathsf{T} | \mathcal{I}_{k-1}^e ] \\
    &= A P_{k-1|k-1}^e A^\mathsf{T} + Q
\end{align*}
recalling that $w_k$ and $x_k$ are uncorrelated.

Consider a successful packet receipt of the state transmission, such that $z_k = x_k$, which the eavesdropper uses $(\gamma_k^e,\nu_k,b_k) = (1,0,1)$.
The transmission is directly used as the state estimate $\hat{x}_{k|k}^e = z_k = x_k$, with covariance
\begin{align*}
    P_{k|k}^e &= E[(x_k - \hat{x}_{k|k}^e)(x_k - \hat{x}_{k|k}^e)^\mathsf{T} | \mathcal{I}_k^e]
    % \\ &= E[(x_k - x_k)(x_k - x_k)^\mathsf{T} | \mathcal{I}_k^e] \\ &= 0.
    = 0 .
\end{align*}

Finally, consider a successful packet receipt of the innovation, such that $z_k = x_k - A x_{k-1} + \chi_k$, which the eavesdropper mistakenly believes is the state and uses $(\gamma_k^e,\nu_k,b_k) = (1,1,1)$.
The transmission is directly used as the state estimate
\begin{align*}
    \hat{x}_{k|k}^e = z_k = x_k - A x_{k-1} + \chi_k = w_{k-1} + \chi_k.
\end{align*}
The covariance is then 
\begin{align*}
    &P_{k|k}^e = E[(x_k - \hat{x}_{k|k}^e) (x_k - \hat{x}_{k|k}^e)^\mathsf{T} | \mathcal{I}_k^e] \\
    &= E[(x_k - w_{k-1} - \chi_k) (x_k - w_{k-1} - \chi_k)^\mathsf{T} ] \\
    &= E[(A x_{k-1} + w_{k-1} - w_{k-1} - \chi_k) \\&\qquad\times (A x_{k-1} + w_{k-1} - w_{k-1} - \chi_k)^\mathsf{T}] \\
    &= E[(A x_{k-1} - \chi_k) (A x_{k-1} - \chi_k)^\mathsf{T} ] \\
    &= A E[x_{k-1} x_{k-1}^\mathsf{T} ] A^\mathsf{T} - E[\chi_k x_{k-1}^\mathsf{T}] A^\mathsf{T} \\&\qquad- A E[x_{k-1} \chi_k^\mathsf{T}] + E[\chi_k \chi_k^\mathsf{T}]
\end{align*}
From \cite{Anderson1979OptimalFiltering} the covariance of $x_{k-1}$
\begin{equation*}
    E[x_{k-1} x_{k-1}^\mathsf{T}] = A^{k-1} \Sigma_0 (A^{k-1})^\mathsf{T} + \sum_{\ell=0}^{k-2} A^{k-2-\ell} Q (A^{k-2-\ell})^\mathsf{T} ,
\end{equation*}
that the additive noise is designed \eqref{eq:additivenoisecovariancedesign} such that
\begin{equation*}
    E[\chi_k\chi_k^\mathsf{T}] = A^k \Sigma_0 (A^k)^\mathsf{T} + \sum_{\ell = 0}^{k-2} A^{k-1-\ell} Q (A^{k-1-\ell})^\mathsf{T} ,
\end{equation*}
and $\chi_k$ and $x_{k-1}$ are uncorrelated, $E[x_{k-1} \chi_k^\mathsf{T}] = 0$,
then
\begin{align*}
    P_{k|k}^e %&= A^k \Sigma_0 (A^k)^\mathsf{T}  + \sum_{\ell=0}^{k-2} A^{k-1-\ell} Q (A^{k-1-\ell})^\mathsf{T} \\ &\qquad\quad+ A^k \Sigma_0 (A^k)^\mathsf{T} + \sum_{\ell = 0}^{k-2} A^{k-1-\ell} Q (A^{k-1-\ell})^\mathsf{T} \\
    &= 2\left(A^k \Sigma_0 (A^k)^\mathsf{T} + \sum_{\ell=0}^{k-2} A^{k-1-\ell} Q (A^{k-1-\ell})^\mathsf{T}\right) .
\end{align*}
This shows the eavesdropper's state estimate and associated covariance and completes the proof.
\end{proof}

\subsection{Eavesdropper Expected Estimation Error Covariance}
\label{app:classeavesestimateperformance}
Proof of Theorem~\ref{thm:classeavesdropperexpectation}.
The following proof shows the expected estimation error covariance of the eavesdropper 

\begin{proof}
    We are going to show via proof by induction that the expected estimation error of the eavesdropper for time $k>0$ can be written as a sum of the sequence of dropouts and encoded innovations from the first transmission, we repeat the form of $E[P_{k|k}^e]$ from Theorem~\ref{thm:classeavesdropperexpectation}
    \begin{align*}
        &E[P_{k|k}^e] = (p_d^e)^k A^{k-1} \Sigma_0 (A^{k-1})^\mathsf{T} 
        + \sum_{\ell=0}^{k-1} (p_d^e)^{\ell+1} A^\ell Q (A^\ell)^\mathsf{T} \\ 
        + &p_i^e \sum_{\ell=0}^{k-1} (p_d^e)^\ell 2 \Bigl(A^k \Sigma_0 (A^k)^\mathsf{T} +  \sum_{j=0}^{k-\ell-2} A^{k-1-j} Q (A^{k-1-j})^\mathsf{T} \Bigr) .
    \end{align*}
    It is helpful for the proof to rewrite $E[P_{k|k}^e]$ as
    \begin{align}
        E[P_{k|k}^e] &= (p_d^e)^k A^{k-1} \Sigma_0 (A^{k-1})^\mathsf{T} + \sum_{\ell=0}^{k-1} (p_d^e)^{\ell+1} A^\ell Q (A^\ell)^\mathsf{T} \nonumber \\ 
        &+p_i^e \sum_{\ell=0}^{k-1} (p_d^e)^\ell A^\ell f_{k-\ell} (A^{\ell})^\mathsf{T} . \label{eq:proofinductioneavesdropclass}
    \end{align}
    where the expected estimation error covariance on use of an innovation $(\gamma_k^e,\nu_k,b_k) = (1,1,1)$ at time $i > 1$ from the result in Lemma~\ref{lemma:classeavesdropperstateest} as
    \begin{align*}
        f_i = 2\left(A^i \Sigma_0 (A^i)^\mathsf{T} + \sum_{j=0}^{i-2} A^{i-1-j} Q (A^{i-1-j})^\mathsf{T} \right) .
    \end{align*}

    We show \eqref{eq:proofinductioneavesdropclass} via proof by induction.

    An eavesdropper has three possible outcomes:
    drop or discards a packet $(\varphi_k=1)$ with probability $p_d^e = \mathbb{P}(\gamma_k^e=0) + \mathbb{P}(\gamma_k^e=1,\nu_k=0,b_k=0) + \mathbb{P}(\gamma_k^e=1,\nu_k=1,b_k=0)$,
    receive and use a state packet $(\varphi_k=2)$ with probability $p_r^e = \mathbb{P}(\gamma_k^e=1,\nu_k=0,b_k=1)$, and
    receive and use an encoded innovation packet $(\varphi_k=3)$ with probability $p_i^e = \mathbb{P}(\gamma_k^e=1,\nu_k=1,b_k=1)$.

    Consider $k=0$ from the definition $E[P_{0|0}^e] = \Sigma_0 p_d^e + 0 (p_r^e + p_i^e)$ as only the state is transmitted at the first instance, $z_0 = x_0$.
    Consider the first time $k=1$ from the definition
    \begin{align*}
        E[P_{1|1}^e] &= \sum_{y=1}^3 E[P_{1|1}^e | \varphi_1 = y] \mathbb{P}(\varphi_1 = y) \\
        &= (p_d^e)^2 A \Sigma_0 A^\mathsf{T} + p_i^e f_1 + 0 p_r^e .
    \end{align*}

    We now show that if \eqref{eq:proofinductioneavesdropclass} holds for time $k$, then the form \eqref{eq:proofinductioneavesdropclass} also holds for time $k+1$.
    From the result in Lemma~\ref{lemma:classeavesdropperstateest}
    \begin{align*}
        E[P_{k+1|k+1}^e] &= \sum_{y=1}^3 E[P_{k+1|k+1}^e | \varphi_{k+1} = y] \mathbb{P}(\varphi_{k+1} = y) \\
        &= p_d^e (A E[P_{k|k}^e] A^\mathsf{T} + Q) + p_i^e f_{k+1} + p_r^e 0 
    \end{align*}
    By the proposed form for $E[P_{k|k}^e]$ in \eqref{eq:proofinductioneavesdropclass}
    \begin{align*}
        &E[P_{k+1|k+1}^e] = p_d^e (A E[P_{k|k}^e] A^\mathsf{T} + Q) + p_i^e f_{k+1} \\
        &= p_d^e A \left( (p_d^e)^k A^{k-1} \Sigma_0 (A^{k-1})^\mathsf{T} + \sum_{\ell=0}^{k-1} (p_d^e)^{\ell+1} A^\ell Q (A^\ell)^\mathsf{T} \right. \\ 
        &\quad\left.+p_i^e \sum_{\ell=0}^{k-1} (p_d^e)^\ell A^\ell f_{k-\ell} (A^{\ell})^\mathsf{T} \right)A^\mathsf{T} + p_d^e Q + p_i^e f_{k+1} \\
        &= (p_d^e)^{k+1} A^{k} \Sigma_0 (A^{k})^\mathsf{T} + \sum_{\ell=0}^{k} (p_d^e)^{\ell+1} A^\ell Q (A^\ell)^\mathsf{T} \\ 
        &\quad +p_i^e \sum_{\ell=0}^{k} (p_d^e)^\ell A^\ell f_{k-\ell} (A^{\ell})^\mathsf{T} A^\mathsf{T}
    \end{align*}
    which is the form \eqref{eq:proofinductioneavesdropclass} at time $k+1$.

    This completes the proof.
\end{proof}

\subsection{Naive and Suspicious Eavesdropper Expected Estimation Error Covariance}
\label{app:naivesusexpected}
Proof of Corollary~\ref{corr:naiveeavesexpectation} and Corollary~\ref{corr:suspiciouseavesexpectation}.
The following proof shows the expected estimation error covariance of the naive and suspicious eavesdroppers.

\begin{proof}
    The naive eavesdropper utilizes any transmission that it successfully receives.
    The probabilities of the three outcomes for the naive eavesdropper are:
    standard dropout $p_d^e = 1-\mu_e$, successful receipt of the state $p_r^e = \mu_e \mu_d$, and successful receipt of an innovation $p_i^e = \mu_e (1-\mu_d)$.

    The suspicious eavesdropper utilizes a successfully received transmission with random chance based on the type of transmission it receives.
    The probabilities of the three outcomes for the suspicious eavesdropper are:
    standard dropout or discard $p_d^e = 1 - \mu_e \mu_d \mu_b - \mu_e \bar{\mu}_b + \mu_e \mu_d \bar{\mu}_b$,
    successful receipt of the state $p_r^e = \mu_e \mu_d \mu_b$,
    and successful receipt of an innovation $p_i^e = \mu_e (1-\mu_d) \bar{\mu}_b$.

    Applying $p_d^e$ and $p_i^e$ for both the naive and suspicious eavesdroppers to Theorem~\ref{thm:classeavesdropperexpectation}, we inspect the resulting terms.
    Under assumption that $\rho(\sqrt{p_d^e}A)<1$, then for the first two terms as $k\rightarrow\infty$
    \begin{align*}
        &(p_d^e)^k A^{k-1} \Sigma_0 (A^{k-1})^\mathsf{T} \rightarrow 0\\
        &\sum_{\ell=0}^{k-1} (p_d^e)^{\ell+1} A^\ell Q (A^\ell)^\mathsf{T} \rightarrow S^{e,n}
    \end{align*}
    where $0$ is a zero matrix of appropriate size and $S^{e,n}$ is the converged stabilizing solution to the Lyapunov equation.
    Otherwise $S^{e,n}$ is undefined, and both terms diverge.

    Let us inspect the two parts of the last term of Theorem~\ref{thm:classeavesdropperexpectation} as $k\rightarrow\infty$
    \begin{align*}
        &\textrm{trace}~A^k \Sigma_0 (A^k)^\mathsf{T} \rightarrow \infty , \quad \textrm{if~} \rho(A) > 1 \\
        \textrm{or} ~ &\textrm{trace}~A^k \Sigma_0 (A^k)^\mathsf{T} > \min_i \lambda_i(A \Sigma_0 A^\mathsf{T}) , \quad \textrm{if~} \rho(A) = 1,
    \end{align*}
    where $\min_i \lambda_i(A \Sigma_0 A^\mathsf{T})$ is the minimum eigenvalue of $A \Sigma_0 A^\mathsf{T}$,
    and the second part of the last term
    \begin{align*}
        &\textrm{trace}~\sum_{\ell=0}^{k-2} A^{k-1-\ell} Q (A^{k-1-\ell})^\mathsf{T}  \rightarrow \infty .
    \end{align*}
    By assumption that the pair $(A,\sqrt{Q})$ is controllable, there are no eigenvectors of $A$ in the nullspace of $\sqrt{Q}$.
    In the case that $\rho(A) = 1$, the eigenvector of $A$ associated with the eigenvalue on the unit circle extracts a combination of the eigenvalues of $\sqrt{Q}$, and remains non-zero as $k \rightarrow \infty$.
    Thus we conclude that as $k\rightarrow\infty$ then we have an infinite sum of non-zero eigenvalues of $Q$.
    
    The expectation of the eavesdropper's estimation error diverge to infinity, or $\textrm{trace}~E[P_{k|k}^e] \rightarrow \infty$, such that both the naive and suspicious eavesdroppers have an unbounded estimation error satisfying condition (ii) of Definition~\ref{definition:perfectsecrecy}.
    This completes the proof.
\end{proof}

\subsection{Smart Eavesdropper Expected Estimation Error Covariance}
\label{app:smarteavsexpected}
Proof of Lemma~\ref{lemma:smarteavesexpectedest}.
The following proof shows the expected estimation error of the smart eavesdropper.

\begin{proof}
The smart eavesdropper has two outcomes: 
standard dropout or discard of innovation with probability $p_d^e = 1-\mu_e\mu_d$, or successful receipt of a state estimate with probability $p_r^e = \mu_e \mu_d$.
In this proof we assume that \eqref{eq:eavesdroppernetworkcorrect} holds such that the eavesdropper has a bounded estimate.%, and $\rho(\sqrt{1-\mu_e \mu_d} A) < 1$.

The expected estimation error covariance of the smart eavesdropper at time $k$ can be found from Theorem~\ref{thm:classeavesdropperexpectation}
\begin{align*}
        E[P_{k|k}^e]
        % = (p_d^e)^k A^{k-1} \Sigma_0 (A^{k-1})^\mathsf{T} + \sum_{\ell=0}^{k-1} (p_d^e)^{\ell+1} A^\ell Q (A^\ell)^\mathsf{T} \\ + &p_i^e 2 \sum_{\ell=0}^{k-1} (p_d^e)^\ell \Bigl(A^k \Sigma_0 (A^k)^\mathsf{T} +  \sum_{j=0}^{k-\ell-2} A^{k-1-j} Q (A^{k-1-j})^\mathsf{T} \Bigr) \\
        = &(1-\mu_e\mu_d)^k A^{k-1} \Sigma_0 (A^{k-1})^\mathsf{T} \\&+ \sum_{\ell=0}^{k-1} (1-\mu_e\mu_d)^{\ell+1} A^\ell Q (A^\ell)^\mathsf{T} . 
        % \\ + &0 \times 2 \sum_{\ell=0}^{k-1} (1-\mu_e\mu_d)^\ell \Bigl(A^k \Sigma_0 (A^k)^\mathsf{T} +  \sum_{j=0}^{k-\ell-2} A^{k-1-j} Q (A^{k-1-j})^\mathsf{T} \Bigr) .
    \end{align*}
Consider the first term of the initial estimation error term with $\Sigma_0$, by assumption of \eqref{eq:eavesdroppernetworkcorrect} then $\rho(\sqrt{1-\mu_e\mu_d} A) < 1$, and we observe that as $k\rightarrow\infty$ then $\left(\sqrt{1-\mu_e\mu_d} A\right)^k \rightarrow 0$, and the initial estimation error will be exponentially forgotten.
Consider the second term of the sum to $k-1$, 
\begin{align*}
    S_k^e &= \sum_{\ell=0}^{k-1} (1-\mu_e \mu_d)^\ell A^j Q (A^\mathsf{T})^\ell
\end{align*}
which can be written as a Lyapunov equation from $S_0 = Q$
\begin{align*}
    S_k^e &= \sqrt{1-\mu_e \mu_d} A S_{k-1}^e A^\mathsf{T} \sqrt{1-\mu_e \mu_d} + Q  .
\end{align*}
The stabilized solution $S^e$ can be found by taking $k \rightarrow \infty$ or setting $S_{k-1} = S_k = S^e$ and solving for the unique stabilizing solution $S^e$
\begin{align*}
    S^e &= \sqrt{1-\mu_e \mu_d} A S^e A^\mathsf{T} \sqrt{1-\mu_e \mu_d} + Q .
\end{align*}
The expected estimation error of the smart eavesdropper is
\begin{align*}
    E[P_{k|k}^e] &= (1-\mu_e\mu_d) S^e .
\end{align*}
We note the performance is as expected of a remote state estimator transmitting the state every time instance with channel quality $p_r^e = \mu_e\mu_d$.
This completes the proof.
\end{proof}

\subsection{Monotonicity of Lyapunov Equation}
\label{app:lyapeqmonotonicity}

Proof of Lemma~\ref{lemma:monotonicityresult}.
The following proof shows a monotonicity result on the scaling coefficient on the Lyapunov equation.

\begin{proof}
    Consider a $\beta^\star$ and $\beta$ where $0 < \beta , \beta^\star < 1$ where $\rho(\sqrt{1-\beta} A) < 1$ and $\rho(\sqrt{1-\beta^\star} A) < 1$
    and introduce two Lyapunov equations as stabilizing recursions \cite{Anderson1979OptimalFiltering}
    \begin{align*}
        W_{k+1} = \sqrt{1 - \beta} A &W_{k} A^\mathsf{T} \sqrt{1-\beta} + Q, \\ %= \sum_{j=0}^k (1-\beta)A^j Q (A^\mathsf{T})^j \\
        W_{k+1}^\star = \sqrt{1-\beta^\star} A &W_{k}^\star A^\mathsf{T} \sqrt{1-\beta} + Q % = \sum_{\ell=0}^k (1-\beta^\star)A^\ell Q (A^\mathsf{T})^\ell
    \end{align*}
    with $W_{0}^\star = W_{0} = Q$, which converge to the unique-stabilizing solutions $W$ and $W^\star$, respectively.
    Let us introduce $\alpha = \sqrt{1- \beta^\star}/\sqrt{1-\beta}$ and $\tilde{A} = \sqrt{1-\beta} A$, and note that $\rho(\tilde{A}) < 1$ and $\rho(\alpha \tilde{A}) < 1$.
    
    Consider the case that $\beta^\star < \beta$ then $\alpha > 1$.
    The two Lyapunov equations can be written as
    \begin{align*}
        W_{k+1} &= \tilde{A} W_k \tilde{A}^\mathsf{T} + Q , \quad \textrm{and} \quad
        W_{k+1}^\star = \alpha \tilde{A} W_k^\star \tilde{A}^\mathsf{T} \alpha + Q .
    \end{align*}
    Let us introduce the difference $V_k = W_k^\star - W_k$, which can written as a function of the previous difference
    \begin{equation}
        V_k = (\alpha^{2k} - 1) \tilde{A}^k Q (\tilde{A}^\mathsf{T})^k + V_{k-1}
        \label{eq:proofdifferenceVk}
    \end{equation}
    from $V_0 = 0$.
    We show \eqref{eq:proofdifferenceVk} via proof by induction.
    Let us first evaluate at $k=0$ and $k=1$ %2? 
    \begin{align*}
        V_{0} &= W_0^\star - W_0 = Q - Q = 0, \quad \textrm{and} \\
        V_{1} &= W_1^\star - W_1 = \alpha \tilde{A} W_0^\star \tilde{A}^\mathsf{T} \alpha + Q - \tilde{A} W_0 \tilde{A}^\mathsf{T} - Q \\
        % &= \alpha \tilde{A} Q \tilde{A}^\mathsf{T} \alpha -\tilde{A} Q \tilde{A}^\mathsf{T} \\
        &= (\alpha^2 - 1) \tilde{A} Q \tilde{A}^\mathsf{T} + V_0 .
    \end{align*}
    % and $k=2$,
    % \begin{align*}
    %     V_{2} &= W_2^\star - W_2 \\
    %     &= \alpha \tilde{A} W_1^\star \tilde{A}^\mathsf{T} \alpha + Q - \tilde{A} W_1 \tilde{A}^\mathsf{T} - Q \\
    %     &= \alpha \tilde{A} (\alpha \tilde{A} Q \tilde{A}^\mathsf{T} \alpha + Q) \tilde{A}^\mathsf{T} \alpha -\tilde{A} ( \tilde{A} Q \tilde{A}^\mathsf{T} + Q ) \tilde{A}^\mathsf{T} \\
    %     &= (\alpha^4 - 1) \tilde{A}^2 Q (\tilde{A}^\mathsf{T})^2 +  (\alpha^2 - 1) \tilde{A} Q \tilde{A}^\mathsf{T} \\
    %     &= (\alpha^4 - 1) \tilde{A}^2 Q (\tilde{A}^\mathsf{T})^2 +  V_{1} .
    % \end{align*}
    Let us assume the form \eqref{eq:proofdifferenceVk} and show the form at $k+1$ from the definition of $W_k^\star$ and $W_k$,
    \begin{align*}
        V_{k+1} &= W_{k+1}^\star - W_{k+1} \\
        &= \sum_{j=0}^{k+1} (\alpha \tilde{A})^j Q  (\tilde{A}^\mathsf{T} \alpha)^j - \sum_{\ell=0}^{k+1} \tilde{A}^\ell Q (\tilde{A}^\mathsf{T})^\ell \\
        &= (\alpha^{2(k+1)} - 1) \tilde{A}^{k+1} Q (\tilde{A}^\mathsf{T})^{k+1} \\&\quad+ \sum_{j=0}^{k} (\alpha \tilde{A})^j Q  (\tilde{A}^\mathsf{T} \alpha)^j - \sum_{\ell=0}^{k} \tilde{A}^\ell Q (\tilde{A}^\mathsf{T})^\ell \\
        &= (\alpha^{2(k+1)} - 1) \tilde{A}^{k+1} Q (\tilde{A}^\mathsf{T})^{k+1} + V_k
    \end{align*}
    which produces the form \eqref{eq:proofdifferenceVk} at iteration $k+1$.

    We now explore the trace of $V_k$.
    \begin{align*}
        &\textrm{trace~} V_{k} = \textrm{trace~} \left((\alpha^{2 k} - 1) \tilde{A}^{k} Q (\tilde{A}^\mathsf{T})^{k} + V_{k-1} \right) \\
        &= (\alpha^{2 k} - 1) \textrm{trace} \left( \tilde{A}^{k} Q (\tilde{A}^\mathsf{T})^{k} \right) + \textrm{trace~} V_{k-1} .
    \end{align*}
    We observe that $\textrm{trace~} (\tilde{A} Q \tilde{A}^\mathsf{T}) > 0$ as the pair $(A,\sqrt{Q})$ is controllable.
    By definition $\alpha > 1$ so it follows that $\alpha^{2 j} - 1 > 0$ for all $j>0$.
    Thus the first term is strictly positive
    \begin{align*}
        (\alpha^{2 k} - 1) \textrm{trace} \left( \tilde{A}^{k} Q (\tilde{A}^\mathsf{T})^{k} \right) > 0 .
    \end{align*}
    Consider the trace of $V_{1}$ using the same properties as above %
    \begin{align*}
        \textrm{trace~} V_{1} % &= \textrm{trace~} ((\alpha^2 - 1) \tilde{A} Q \tilde{A}^\mathsf{T}) \\
        &= (\alpha^2 - 1) \textrm{trace~} (\tilde{A} Q \tilde{A}^\mathsf{T}) > 0 .
    \end{align*}
    At $k=2$, then trace $V_{k-1} = $ trace $V_1 > 0$, and trace $V_2 > 0$.
    Following a proof by induction argument, we conclude that trace $V_k > 0$ for $k>0$.
    This implies that at the difference in stabilized Lyapunov equation solutions $\textrm{trace~}(W^\star - W) > 0$, and that $\textrm{trace~}W^\star > \textrm{trace~}W$.
    This concludes the proof.
\end{proof}

%%%%%%%%%%%%%%%%%%%%%%%%%%%%%%%%%%%%%%%%%%%%%%%%%%%%%%%%%%%%%%%%%%%%%%%%%%%%%%%%
\balance

\bibliographystyle{IEEEtran}
\bibliography{IEEEabrv,ref}

\end{document}